\documentclass[lettersize,journal]{IEEEtran}
\usepackage{cite}
\usepackage{amsfonts}
\usepackage{amsmath,amsthm,amssymb}
\usepackage{algorithm}
\usepackage{algpseudocode}
\usepackage{array}
\usepackage[caption=false,font=normalsize,labelfont=sf,textfont=sf]{subfig}
\usepackage{textcomp}
\usepackage{stfloats}
\usepackage{stmaryrd,txfonts}
\usepackage{url}
\usepackage{booktabs}
\usepackage{verbatim}
\usepackage{graphicx}
\usepackage{tikz}
\usepackage{threeparttable}
\usepackage[colorlinks=true,linkcolor=blue,citecolor=blue,urlcolor=blue]{hyperref}

\makeatletter
\def\@cite#1#2{\textcolor{blue}{[{#1\if@tempswa , #2\fi}]}}
\makeatother
\let\oldeqref\eqref
\renewcommand{\eqref}[1]{\textcolor{blue}{\oldeqref{#1}}}

\definecolor{hyperBlue}{rgb}{0.81175232,0.84312439,1}
\definecolor{mintFill}{rgb}{0.69802856,0.90979004,0.87449646}
\definecolor{orangeFill}{rgb}{1,0.50979614,0.07843018}
\definecolor{outlineInk}{rgb}{0.06274414,0.09411621,0.26274109}
\definecolor{textInk}{rgb}{0.09803772,0.09803772,0.09803772}

\newtheorem{theorem}{Theorem}

\newtheorem{lemma}[theorem]{Lemma}
\newtheorem{definition}[theorem]{Definition}%[section]
\newtheorem{corollary}[theorem]{Corollary}%[section]
\newtheorem{conj}[theorem]{Conjecture}%[section]

\newtheorem{proposition}[theorem]{Proposition}

\newtheorem{remark}[theorem]{Remark}

\newcommand{\MD}{\ensuremath{\textsc{MinDistance}}}
\newcommand{\MDt}{\ensuremath{\textsc{MinDistance}_3}}
\newcommand{\MDJ}{\ensuremath{\textsc{MinDistance}_J}}
\newcommand{\MDtt}{\ensuremath{\textsc{MinDistance}_{(3,3)}}}
\newcommand{\MDtK}{\ensuremath{\textsc{MinDistance}_{(3,K)}}}
\newcommand{\MDJK}{\ensuremath{\textsc{MinDistance}_{(J,K)}}}

\newcommand{\MT}{\ensuremath{\textsc{MinTrap}}}
\newcommand{\MTt}{\ensuremath{\textsc{MinTrap}_3}}
\newcommand{\MTJ}{\ensuremath{\textsc{MinTrap}_J}}

\newcommand{\MTJK}{\ensuremath{\textsc{MinTrap}_{(J,K)}}}

\newcommand{\MEC}{\ensuremath{\textsc{MinEvenCover}}}

\newcommand{\MECJ}{\ensuremath{\textsc{MinEvenCover}_J}}

\newcommand{\MECJK}{\ensuremath{\textsc{MinEvenCover}_{(J,K)}}}

\newcommand{\MLECJK}{\ensuremath{\textsc{MinLinearEvenCover}_{(J,K)}}}

\makeatletter
\newcommand{\LongState}[1]{%
	\State \parbox[t]{\dimexpr\linewidth-\ALG@tlm\relax}{#1\strut}%
}
\newcommand{\LongStatex}[1]{%
	\Statex \hskip\ALG@tlm
	\parbox[t]{\dimexpr\linewidth-\ALG@tlm\relax}{#1\strut}%
}
\makeatother

\begin{document}

\title{On the Intractability of the Minimum Distance Problem for Regular LDPC Codes}

\author{Chenyuan Jia,
	Qingqing Peng,
	Ke Liu,
	Guanghui Wang,
	Guiying Yan
	\thanks{This work is partially supported by the National Key R\&D Program of China, (2023YFA1009602) (Corresponding author: Guanghui Wang).} 
	\thanks{C. Jia, Q. Peng and G. Wang are with the School of Mathematics, Shandong University, Shandong, China (e-mail: chenyuanjia@mail.sdu.edu.cn; pqing@mail.sdu.edu.cn; ghwang@sdu.edu.cn).}
	\thanks{G. Yan is with the Academy of Mathematics and Systems Science, CAS, University of Chinese Academy of Sciences, Beijing, 100190 China (e-mail: yangy@amss.ac.cn).}
	\thanks{K. Liu is with the Hangzhou Research Center, Huawei Technologies Co. Ltd., Hangzhou, China (e-mail: liuke79@huawei.com).}
}

% The paper headers
%\markboth{Journal of \LaTeX\ Class Files,~Vol.~1, No.~2, December~2023}%
%{Shell \MakeLowercase{\textit{et al.}}: A Sample Article Using IEEEtran.cls for IEEE Journals}

%\IEEEpubid{0000--0000~\copyright~2023 IEEE}
% Remember, if you use this you must call \IEEEpubidadjcol in the second
% column for its text to clear the IEEEpubid mark.

\maketitle

\begin{abstract}
	The minimum distance problem (MDP) for low-density parity-check (LDPC) codes is a central problem in coding theory and is closely related to the analysis of low-weight codewords and error-floor behavior. Although the unrestricted MDP is computationally intractable, its complexity under degree constraints that commonly occur in LDPC code design has remained less clear. In this paper, we study the MDP for left regular and biregular Tanner graphs. For every fixed $J\geq3$, we prove that the standard at-most-weight problem is $\mathrm{NP}$-complete for $J$-left regular Tanner graphs and that its exact-weight variant is $\mathrm{W}[1]$-complete when parameterized by the prescribed weight. For biregular Tanner graphs, we prove $\mathrm{NP}$-completeness for $(3,K)$-regular instances for every fixed $K\geq 3$ by replacing degree-two auxiliary completion blocks with a single-port high-girth gadget. A nonzero relative support inside this gadget induces an essentially cubic graph, so the Moore bound gives an exponential lower bound in the girth and allows a polynomial-size Karp reduction. Combining this right-degree amplification with a replica-and-global-check left-degree amplification yields $\mathrm{NP}$-completeness for $(J,K)$-regular Tanner graphs for every fixed $J,K\geq 3$. The reductions are based on a degree-preserving transformation framework consisting of hyperedge decomposition, check node splitting, and controlled variable replication. These transformations relate different degree distributions while preserving explicit maps among nonzero codewords, even covers, and nonempty $(a,0)$-trapping sets. The results delineate the computational limits of computing minimum distance exactly under natural regularity constraints.
\end{abstract}

\begin{IEEEkeywords}
Regular LDPC codes, minimum distance, exact-weight codeword problem, NP-completeness, W[1]-completeness.
\end{IEEEkeywords}

\section{Introduction} \label{sec:intro}

\IEEEPARstart{L}{OW-DENSITY} parity-check (LDPC) codes \cite{Gallager62} are fundamental in modern communication systems because they achieve capacity-approaching performance under \emph{belief-propagation} (BP) decoding. Their high-SNR performance, however, may be limited by the \emph{error-floor} phenomenon, where the \emph{bit error rate} (BER) decreases much more slowly than predicted by the waterfall-region behavior \cite{Richardson03}. Minimum distance is a key global parameter in this regime: a small minimum distance implies the existence of low-weight codewords that can dominate high-SNR performance, especially under maximum-likelihood or near-maximum-likelihood decoding. Under iterative BP decoding, small trapping or absorbing structures also play a central role. For this reason, computing and bounding the minimum distance of LDPC codes remains a basic problem in coding theory and code design \cite{Downey99, Arora97, Vardy97}.

The \emph{minimum distance problem} (MDP) has several equivalent combinatorial formulations. In the Tanner-graph representation, the support of a nonzero codeword is a nonempty set of variable nodes such that every adjacent check node is incident with an even number of selected variables. This is precisely a nonempty $(a,0)$-trapping set. In the corresponding incidence-hypergraph representation, variables become hyperedges and parity checks become vertices; codewords then correspond to \emph{even covers}, namely subfamilies of hyperedges in which every vertex has an even degree \cite{Feige08, Hsieh23}. These equivalent viewpoints connect minimum distance, even covers, and trapping sets through the same parity constraint.

The unrestricted MDP is known to be computationally hard. Classical results establish the $\mathrm{NP}$-completeness of maximum-likelihood decoding and related weight problems \cite{Berlekamp78}, the $\mathrm{NP}$-completeness of minimum distance computation \cite{Vardy97}, and strong inapproximability results for minimum distance \cite{Dumer03}. For trapping set-related structures, Dehghan and Banihashemi \cite{Dehghan19, Dehghan20} established In parameterized complexity, Arvind et al.~\cite{Arvind16} proved that the homogeneous exact-weight linear-equation problem is $\mathrm{W}[1]$-hard even when every variable occurs exactly three times. For the standard at-most-weight formulation, Bhattacharyya et al.~\cite{Bhattacharyya21} subsequently established $\mathrm{W}[1]$-hardness under randomized parameterized reductions, together with parameterized inapproximability, for the minimum distance problem (equivalently, the even-set problem). What remains less well understood is how this hardness behaves under the uniform-degree structures that arise naturally in LDPC Tanner graphs, such as left regular and $(J,K)$-regular configurations.

In this paper, we systematically study the hardness of computing the minimum distance in left regular and biregular LDPC Tanner graphs. For every fixed $J\geq3$, we prove that the standard at-most-weight problem is $\mathrm{NP}$-complete on $J$-left regular Tanner graphs and that its exact-weight variant is $\mathrm{W}[1]$-complete when parameterized by the prescribed weight. For the biregular setting, we prove $\mathrm{NP}$-completeness for $(3,K)$-regular bipartite graphs for every fixed $K\geq 3$. The new ingredient is a single-port completion gadget for the right-degree amplification: its variable nodes have degree three, its internal check nodes have degree $K$, and every nonempty relative support has size larger than the decision threshold. This relative-distance bound follows because such a support induces an essentially cubic graph, to which the Moore bound applies. The resulting Karp reduction, combined with the replica-and-global-check left-degree amplification, yields $\mathrm{NP}$-completeness for $(J,K)$-regular graphs for every fixed $J,K\geq 3$. The reductions rely on a degree-preserving transformation framework based on hyperedge decomposition, check node splitting, and controlled variable replication. Each transformation is designed to preserve explicit maps between feasible parity structures so that minimum-cardinality solutions can be tracked across different degree distributions.

Overall, the paper shows that natural regularity constraints do not, by themselves, remove the intrinsic difficulty of computing the minimum distance exactly. By presenting the minimum distance, even-cover, and trapping-set formulations in a unified framework, the results clarify the complexity boundary for regular LDPC code analysis and provide guidance on where exact algorithms are unlikely to be practical.

\section{Overview}\label{sec:overview}

This section provides a roadmap for the paper and highlights how the main hardness results are connected. At a high level, the argument proceeds in three layers. First, in Section~\ref{sec:prelim}, we formalize the equivalence among nonzero codewords, nonempty $(a,0)$-trapping sets, and nonempty even covers in the associated incidence hypergraph. This unified viewpoint allows us to move freely between coding-theoretic and combinatorial formulations of the MDP.

Second, Section~\ref{sec:inter} develops three local transformation primitives that preserve the relevant parity structures through explicit solution maps: hyperedge decomposition, check node splitting, and controlled variable replication. These transformations form the technical core of the paper. They allow us to adjust the left degree, right degree, and uniformity of the underlying incidence structures while keeping precise control over feasible solutions and their cardinalities.

Third, these primitives are assembled into the Karp-reduction chain shown in Fig.~\ref{fig:reduction-overview}. For the standard at-most-weight problem, Section~\ref{sec:left} reduces the unrestricted minimum even cover problem to the $3$-uniform case and then lifts the result to every fixed left degree $J\geq3$. The parameterized exact-weight result has a separate starting point: the sparse homogeneous instances of Arvind et al.~\cite{Arvind16} directly yield $3$-left regular instances, and the cardinality-preserving global-check construction lifts them to every fixed $J>3$. Next, Section~\ref{sec:regular} uses check-node splitting and variable replication to obtain classical hardness for the $(3,3)$-regular case. The subsequent right-degree amplification uses the single-port high-girth gadget and gives a Karp reduction to the $(3,K)$-regular case; the final left-degree amplification gives a Karp reduction to the $(J,K)$-regular case. Thus the classical results arise from a coherent reduction framework, while the exact-weight parameterized result is stated and proved separately wherever its source problem differs.

\begin{figure}[h]
	\centering
	\begin{tikzpicture}[x=0.95cm,y=0.95cm,scale=0.77]
		\node (mec) [draw=red!80!black, thick, align=center] at (0,4.8) {\footnotesize Unrestricted\\[-1mm] \footnotesize \textsc{MDP}};
		\node (md3) [draw=red!80!black, thick, align=center] at (0,2.0) {\footnotesize $3$-left regular\\[-1mm] \footnotesize \textsc{MDP}};
		\node (mdj) [draw=red!80!black, thick, align=center] at (-4.6,-0.9) {\footnotesize $J$-left regular\\[-1mm] \footnotesize \textsc{MDP}, $J\ge 3$};
		\node (reg33) [draw=red!80!black, thick, align=center] at (4.6,-0.9) {\footnotesize $(3,3)$-regular\\[-1mm] \footnotesize \textsc{MDP}};
		\node (reg3k) [draw=red!80!black, thick, align=center] at (4.6,-3.7) {\footnotesize $(3,K)$-regular\\[-1mm] \footnotesize \textsc{MDP}, $K\ge 3$};
		\node (regjk) [draw=red!80!black, thick, align=center] at (4.6,-6.5) {\footnotesize $(J,K)$-regular\\[-1mm] \footnotesize \textsc{MDP}, $J,K\ge 3$};
		
		\draw[->, thick] (mec) -- (md3);
		\draw[->, thick] (md3) -- (mdj);
		\draw[->, thick] (md3) -- (reg33);
		\draw[->, thick] (reg33) -- (reg3k);
		\draw[->, thick] (reg3k) -- (regjk);
		
		\node [right, align=center] at (1.9,3.7) {\footnotesize hyperedge decomposition\\[-1mm] \footnotesize + controlled replication};
		
		\node [above, align=center, rotate=34] at (-2.65,1.15) {\footnotesize adding global checks};
		
		\node [above, align=center, rotate=-35] at (2.85,1.15) {\footnotesize splitting\\[-1mm] \footnotesize + replication};
		
		\node [right, align=center] at (0,-2.10) {\footnotesize single-port\\[-1mm] \footnotesize right amplification};
		
		\node [right, align=center] at (0,-5.10) {\footnotesize left-degree amplification};
	\end{tikzpicture}
	\caption{Karp-reduction flow for the standard at-most-weight MDP under left regular and biregular degree constraints. The exact-weight $\mathrm{W}[1]$-completeness result for left regular graphs starts instead from the sparse homogeneous instances of Arvind et al.~\cite{Arvind16}.}
	\label{fig:reduction-overview}
\end{figure}

\definecolor{c_blue}{RGB}{204, 204, 255}
\definecolor{c_cyan}{RGB}{178, 235, 226}
\definecolor{c_bgblue}{RGB}{232, 234, 246}
\section{Preliminaries} \label{sec:prelim}

\subsection{Graphs and Hypergraphs}

A \emph{graph} is a pair $G=(V,E)$, where $V$ is a set of vertices and $E$ is a set of edges. The graph is \emph{undirected} if each edge connects two vertices without orientation. A graph is \emph{simple} if it has no loops and no multiple edges. Thus, for a simple undirected graph, each edge can be written as an unordered pair $\{u,v\}$ with $u,v\in V$ and $u\neq v$. Unless otherwise specified, all graphs referred to in the following text are simple undirected graphs. Two vertices $u,v\in V(G)$ are said to be \emph{adjacent} in $G$ if $\{u,v\}\in E(G)$. In this case, we also say that $u$ is adjacent to $v$, or that $u$ is a \emph{neighbor} of $v$. For a vertex $v\in V(G)$, its \emph{degree} in $G$, denoted by $\deg_G(v)$, is the number of vertices adjacent to $v$. A graph $G$ is called \emph{$k$-regular} if $\deg_G(v)=k$ for every $v\in V(G)$.

For a graph $G=(V,E)$ and a subset $U\subseteq V$, the \emph{induced subgraph} of $G$ on $U$, denoted by $G[U]$, is the graph with vertex set $U$ and edge set
\[
E(G[U])=\{\{u,v\}\in E:u,v\in U\}.
\]
A \emph{path} of length $\ell$ in $G$ is a sequence of distinct vertices
\[
v_0,v_1,\ldots,v_{\ell}
\]
such that $\{v_{i-1},v_i\}\in E$ for every $1\leq i\leq \ell$. A \emph{cycle} of length $\ell$ is a sequence
\[
v_0,v_1,\ldots,v_{\ell-1},v_0,
\]
where $v_0,v_1,\ldots,v_{\ell-1}$ are distinct, $\ell\geq 3$, and
\[
\{v_{i-1},v_i\}\in E \quad \text{for } 1\leq i\leq \ell-1,
\qquad
\{v_{\ell-1},v_0\}\in E.
\]
Equivalently, a cycle is obtained from a path by adding an edge between its two endpoints. The \emph{girth} of $G$, denoted by $g(G)$, is the length of the shortest cycle in $G$. If $G$ contains no cycle, then its girth is defined to be infinite.

We shall also use hypergraphs. A hypergraph generalizes an ordinary graph: in an ordinary graph, each edge is a 2-element subset of the vertex set, while in a hypergraph, each hyperedge is an arbitrary nonempty subset of the vertex set.

\begin{definition}[Hypergraph]\label{def:hypergraph}
A \emph{hypergraph} is denoted by
\[
\mathcal{H}=(V,\mathcal{E}),
\qquad 
\mathcal{E}=(e_{\alpha})_{\alpha\in I},
\]
where $V$ is a finite set of vertices, $I$ is a finite index set, and each
$e_{\alpha}$ is a nonempty subset of $V$, called a \emph{hyperedge}.
\end{definition}
% See Fig.~\ref{fig:even-cover} for an illustration of a hypergraph and its hyperedges.

Fig.~\ref{fig:even-cover} illustrates an example of a hypergraph. For instance, \(e_1=\{v_1,v_2,v_3\}\) is a hyperedge consisting of the vertices \(v_1\), \(v_2\), and \(v_3\). We use an indexed family of hyperedges rather than a set of subsets: if
$\alpha\neq \beta$ but $e_{\alpha}=e_{\beta}$, then $e_{\alpha}$ and
$e_{\beta}$ are still regarded as two distinct hyperedges. This convention is useful for representing multiple hyperedges.

A vertex $v\in V$ is said to be \emph{incident} with a hyperedge $e_{\alpha}$
if $v\in e_{\alpha}$. The degree of $v$ in $\mathcal{H}$ is
\[
\deg_{\mathcal{H}}(v)
=
|\{\alpha\in I:v\in e_{\alpha}\}|.
\]
The hypergraph is $J$-\emph{uniform} if every hyperedge has cardinality $J$,
that is, $|e_{\alpha}|=J$ for every $\alpha\in I$. It is $K$-\emph{regular}
if every vertex is incident with exactly $K$ hyperedges, that is,
$\deg_{\mathcal{H}}(v)=K$ for every $v\in V$. For \(B\subseteq I\), we use \(\mathcal{H}_B\) to denote the \emph{sub-hypergraph} consisting of the hyperedges indexed by \(B\). It is \emph{linear} if any two
distinct indexed hyperedges intersect in at most one vertex:
\[
|e_{\alpha}\cap e_{\beta}|\leq 1,
\qquad \forall \alpha\neq \beta .
\]

\begin{definition}[Even cover]\label{def:even-cover}

For a subset \(B\subseteq I\) of hyperedge indices, we say that
\(B\) is an \emph{even cover} of \(\mathcal{H}\) if every vertex is incident
with an even number of hyperedges indexed by \(B\), that is,
\[
\left|\{\alpha\in B:v\in e_{\alpha}\}\right|\equiv 0\pmod 2,
\qquad \forall v\in V .
\]
Vertices incident with no selected hyperedges are allowed, since zero is
considered even.

\end{definition}

Fig.~\ref{fig:even-cover} illustrates the notion of an even cover. The red hyperedges correspond to the selected index set $B$, while the black hyperedge is not selected. In the induced sub-hypergraph, every vertex is incident with an even number of selected hyperedges. Therefore, the red hyperedges form an even cover.

\begin{figure}[h]
\centering
\begin{tikzpicture}[
    x=1cm,
    y=1cm,
    every node/.style={font=\small},
    vertex/.style={
        circle,
        fill=black,
        inner sep=1.55pt
    },
    vlabel/.style={
        font=\small,
        fill=white,
        inner sep=1pt
    },
    selected/.style={
        line width=1.15pt,
        draw=red!80!black,
        fill=red!18,
        fill opacity=0.18
    },
    nonselected/.style={
        draw=black!70,
        dashed,
        line width=1.05pt,
        fill=black!12,
        fill opacity=0.10
    },
    edge label/.style={
        font=\small,
        fill=white,
        inner sep=1pt
    },
    note/.style={
        draw=black!45,
        rounded corners,
        fill=yellow!10,
        inner sep=3pt,
        font=\small
    }
]

% ------------------------------------------------
% Vertex coordinates
% ------------------------------------------------
\coordinate (v1) at (-2.45,  0.62);
\coordinate (v2) at (-2.35, -0.55);
\coordinate (v3) at (-0.75,  0.00);
\coordinate (v4) at ( 1.75,  0.62);
\coordinate (v5) at ( 1.70, -0.55);
\coordinate (v6) at ( 0.00, -2.10);

% ------------------------------------------------
% Hyperedges
% Draw hyperedges first so the vertices remain visible and clearly inside.
% ------------------------------------------------

% e_3 = {v_1,v_2,v_3,v_4,v_5}, selected in Y
\draw[selected]
    (-0.30,0.05) ellipse [x radius=3.45, y radius=1.50];

% e_1 = {v_1,v_2,v_3}, selected in Y
\draw[selected]
    (-1.78,0.02) ellipse [x radius=1.55, y radius=1.10];

% e_4 = {v_3,v_6}, not selected in Y
\draw[nonselected, rotate around={-70:(-0.38,-1.05)}]
    (-0.38,-1.05) ellipse [x radius=1.42, y radius=0.50];

% e_2 = {v_4,v_5}, selected in Y
\draw[selected]
    (1.73,0.03) ellipse [x radius=0.65, y radius=1.10];

% ------------------------------------------------
% Hyperedge labels
% ------------------------------------------------
\node[edge label, text=red!80!black]    at (-1.22,1.20) {$e_1$};
\node[edge label, text=red!80!black]    at ( 2.45,0.18) {$e_2$};
\node[edge label, text=red!80!black]    at ( 2.75,1.30) {$e_3$};
\node[edge label, text=black!70]        at ( 0.62,-1.84) {$e_4$};

% ------------------------------------------------
% Vertices and vertex labels
% ------------------------------------------------
\node[vertex, label={[vlabel]above left:$v_1$}]  at (v1) {};
\node[vertex, label={[vlabel]below left:$v_2$}]  at (v2) {};
\node[vertex, label={[vlabel]above:$v_3$}]       at (v3) {};
\node[vertex, label={[vlabel]above right:$v_4$}] at (v4) {};
\node[vertex, label={[vlabel]right:$v_5$}]       at (v5) {};
\node[vertex, label={[vlabel]below:$v_6$}]       at (v6) {};

% ------------------------------------------------
% Annotation
% ------------------------------------------------
\node[note] at (-0.20,-3)
    {$\{e_1,e_2,e_3\}$ is an even cover};

\end{tikzpicture}
\caption{
A hypergraph \(\mathcal H=(V,\mathcal E)\) illustrating an even cover.
The vertex set is
\(V=\{v_1,v_2,v_3,v_4,v_5,v_6\}\), and the hyperedge set is
\(\mathcal E=\{e_1,e_2,e_3,e_4\}\), where
\(e_1=\{v_1,v_2,v_3\}\),
\(e_2=\{v_4,v_5\}\),
\(e_3=\{v_1,v_2,v_3,v_4,v_5\}\), and
\(e_4=\{v_3,v_6\}\).
The selected hyperedges
\(\{e_1,e_2,e_3\}\) form an even cover:
with respect to \(B=\{1,2,3\}\), we have
\(\deg_{\mathcal{H}_B}(v_i)=2\) for \(i=1,\dots,5\), while
\(\deg_{\mathcal{H}_B}(v_6)=0\).
Thus every vertex has even degree in the selected sub-hypergraph.
The dashed black hyperedge \(e_4\) belongs to \(\mathcal E\setminus (e_{\alpha})_{\alpha\in B}\) and is not selected.
}
\label{fig:even-cover}
\end{figure}

\subsection{LDPC Codes and Trapping Sets}
A \emph{binary linear code} \(\mathcal{C}\) can be specified by a parity-check matrix \(H\in\mathbb{F}_2^{m\times n}\), where
\[
\mathcal{C}=\{\boldsymbol{x}\in\mathbb{F}_2^n: H\boldsymbol{x}^{\top}=\boldsymbol{0}\}.
\]
Equivalently, the codewords of \(\mathcal{C}\) are precisely the binary vectors satisfying all parity-check equations specified by the rows of \(H\). A \emph{low-density parity-check} (LDPC) code is a binary linear code that admits a sparse parity-check matrix \(H\), whose number of nonzero entries in $\mathcal{O}(n)$.

The matrix $H$ induces a Tanner graph $\mathcal{G}=(L,R,E)$, where the variable nodes $L$ correspond to the codeword symbols, the check nodes $R$ correspond to the parity-check equations, and $\{l_j,r_i\}\in E$ if and only if $H_{i,j}=1$. The Tanner graph is a simple bipartite graph with bipartition $(L,R)$. It is $J$-\emph{left regular} if every variable node has degree $J$, and $K$-\emph{right regular} if every check node has degree $K$. 

\begin{definition}[Nonzero codewords]\label{def:nc}
	A vector $\boldsymbol{c}\in\mathbb{F}_2^n$ is a nonzero codeword of $\mathcal{C}$ if $\boldsymbol{c}\neq\boldsymbol{0}$ and $H\boldsymbol{c}^{\top}=\boldsymbol{0}$. The Hamming weight of \(\boldsymbol{c}\), denoted by \(\operatorname{wt}(\boldsymbol{c})\), is the number of nonzero entries in \(\boldsymbol{c}\).
\end{definition}

\begin{definition}[$(a,b)$-trapping sets]\label{def:trap}
Let $\mathcal{G}=(L,R,E)$ be a Tanner graph. For $S\subseteq L$, let
\[
N(S)=\{r\in R:\text{ there exists } v\in S \text{ such that } \{v,r\}\in E\}
\]
be the set of neighboring check nodes of $S$, and let
\[
\mathcal{G}_S = \mathcal{G}[S\cup N(S)]
\]
be the subgraph induced by $S$ together with its neighboring check nodes.
The set $S$ is an $(a,b)$-trapping set if $|S|=a$ (i.e., $a$ is the cardinality of the $(a,b)$-trapping set) and exactly $b$ check nodes
in $\mathcal{G}_S$ have odd degree in $\mathcal{G}_S$.
\end{definition}

Since $(a,0)$-trapping sets occur frequently in the reduction arguments, 
we shall use the term $(a,0)$-TS to refer to a nonempty 
$(a,0)$-trapping set, unless stated otherwise.

\begin{proposition}[Codeword—$(a,0)$-TS correspondence]\label{prop:trap-codeword}
	Let $\mathcal{G}=(L,R,E)$ be the Tanner graph of a binary linear code $\mathcal{C}$ 
	with parity-check matrix $H$. For a subset $S\subseteq L$, define its 
	characteristic vector $\boldsymbol{c}_S\in\mathbb{F}_2^n$ by
	\[
	(\boldsymbol{c}_S)_j=
	\begin{cases}
	1, & \text{if } v_j\in S,\\
	0, & \text{if } v_j\notin S.
	\end{cases}
	\]
	Then a nonempty set $S\subseteq L$ is a $(a,0)$-TS if and only if 
	$\boldsymbol{c}_S$ is a nonzero codeword of $\mathcal{C}$. Consequently, 
	the minimum distance of $\mathcal{C}$ is equal to the minimum cardinality 
	of a nonempty $(a,0)$-TS in $\mathcal{G}$.
\end{proposition}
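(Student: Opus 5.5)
The plan is to unfold both definitions at a single check node and compare the two parity conditions there. Fix a nonempty $S\subseteq L$ and write $\boldsymbol{c}=\boldsymbol{c}_S$. For each check node $r_i\in R$, the $i$-th entry of the syndrome $H\boldsymbol{c}^{\top}$ over $\mathbb{F}_2$ is
\[
(H\boldsymbol{c}^{\top})_i=\sum_{j} H_{i,j}(\boldsymbol{c}_S)_j=\bigl|\{v_j\in S:\{v_j,r_i\}\in E\}\bigr| \pmod 2,
\]
because $H_{i,j}=1$ exactly when $\{v_j,r_i\}\in E$, and $(\boldsymbol{c}_S)_j=1$ exactly when $v_j\in S$.

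The next step is to identify this count with a degree in $\mathcal{G}_S$. The induced subgraph $\mathcal{G}_S=\mathcal{G}[S\cup N(S)]$ is bipartite, so for $r_i\in N(S)$ every edge of $\mathcal{G}_S$ at $r_i$ joins it to a variable node of $S$. Induction keeps all such edges, so
\[
\deg_{\mathcal{G}_S}(r_i)=|\{v\in S:\{v,r_i\}\in E\}|.
\]
For a check node $r_i\notin N(S)$, the count above is $0$, so the $i$-th syndrome entry vanishes automatically. Such a node also does not lie in $\mathcal{G}_S$, so it does not contribute to $b$.

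Combining the two steps, the number of odd-degree check nodes of $\mathcal{G}_S$ equals the number of nonzero syndrome entries, which is $\operatorname{wt}(H\boldsymbol{c}_S^{\top})$. Hence $S$ is an $(|S|,0)$-trapping set exactly when $H\boldsymbol{c}_S^{\top}=\boldsymbol{0}$. Nonemptiness of $S$ corresponds to $\boldsymbol{c}_S\neq\boldsymbol{0}$, since $S\mapsto\boldsymbol{c}_S$ is a bijection between subsets of $L$ and $\mathbb{F}_2^n$ under which $\emptyset$ maps to $\boldsymbol{0}$. This proves the stated equivalence.

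For the consequence about minimum distance, note that $\operatorname{wt}(\boldsymbol{c}_S)=|S|$. The same bijection therefore matches nonzero codewords with nonempty $(a,0)$-TSs while preserving size, so the two minima coincide, and both sets are empty together. There is no real obstacle in this argument. The only point needing care is the bookkeeping about which check nodes appear in $\mathcal{G}_S$, which the remark on $r_i\notin N(S)$ handles. This also explains why simplicity of the Tanner graph matters: each $H_{i,j}=1$ contributes exactly one edge.
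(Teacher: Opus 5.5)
Your proposal is correct and follows essentially the same route as the paper: both compute the $i$-th syndrome entry as the number of neighbors of $r_i$ in $S$ modulo two, identify this count with $\deg_{\mathcal{G}_S}(r_i)$ when $r_i\in N(S)$ (and with zero otherwise), and then use $\operatorname{wt}(\boldsymbol{c}_S)=|S|$ to obtain the minimum-distance consequence. Your additional observation that the number of odd-degree checks in $\mathcal{G}_S$ equals $\operatorname{wt}(H\boldsymbol{c}_S^{\top})$ is a harmless strengthening of the same computation.
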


\begin{proof}
	For a subset $S\subseteq L$, consider its characteristic vector 
	$\boldsymbol{c}_S$. For the row $\boldsymbol{h}_i$ of $H$ corresponding 
	to the check node $r_i$, we have
	\[
	\boldsymbol{h}_i\boldsymbol{c}_S^{\top}
	=
	\sum_{j=1}^n H_{i,j}(\boldsymbol{c}_S)_j
	=
	\sum_{v_j\in S} H_{i,j}
	\pmod 2.
	\]
	The integer sum $\sum_{v_j\in S}H_{i,j}$ counts exactly the number of 
	neighbors of check node $r_i$ contained in $S$, that is, the degree of $r_i$ in the 
	induced subgraph $\mathcal{G}_S$ if $r_i\in N(S)$, and zero otherwise. Therefore, $S$ is an $(a,0)$-TS if and only if
	\[
	\sum_{v_j\in S} H_{i,j} \equiv 0 \pmod 2
	\]
	for every check node $r_i$. By the displayed identity above, this is equivalent to
	\[
	H\boldsymbol{c}_S^{\top}=\boldsymbol{0}.
	\]
	Since $S$ is 
	nonempty, $\boldsymbol{c}_S\neq \boldsymbol{0}$, and its Hamming weight is 
	$|S|$. Minimizing $|S|$ over all nonempty $(a,0)$-TSs is therefore 
	equivalent to minimizing the Hamming weight of a nonzero codeword.
\end{proof}

\subsection{Hypergraph Representation}

We next describe a construction of a hypergraph from the Tanner graph. Given
$\mathcal{G}=(L,R,E)$ be a Tanner graph, where
$R=\{r_1,\ldots,r_m\}$ is the set of check nodes and
$L=\{v_1,\ldots,v_n\}$ is the set of variable nodes.
Define a hypergraph $\mathcal{H}=(V,\mathcal{E})$ by taking
$V=\{r_1,\ldots,r_m\}$ to be the set of parity checks and by associating
with each variable node $v_j$ the indexed hyperedge
\[
e_j=\{r_i:(v_j,r_i)\in E\}.
\]

In this construction, the
check nodes of the Tanner graph become the vertices of the hypergraph, while
each variable node determines one indexed hyperedge through its incident
check nodes.
This construction is illustrated in Fig.~\ref{fig:hypergraph_example}. Under this representation, $J$-left regularity of Tanner graph corresponds to $J$-uniformity of
the hypergraph, and $K$-right regularity of Tanner graph corresponds to $K$-regularity of the
hypergraph.

% 可以写清楚G中码字和H中偶覆盖
% 已改
Through this construction, codewords of the code defined by the Tanner graph are in bijective correspondence with even covers of the associated hypergraph. This correspondence is formally stated and proved in Proposition~\ref{prop:code-even}~\cite{Hsieh23,Hsieh25}.

\begin{figure}[h]
	\centering
	\begin{tikzpicture}[scale=0.63]
		
		\tikzset{
			titlebox/.style={
				draw=black!50,
				rounded corners=2pt,
				fill=gray!10,
				inner sep=3pt,
				font=\normalsize
			}
		}
		
		% Left: Tanner graph title
		\node[titlebox] at (0,4.3) {Tanner graph};
		
		\node[circle, draw, fill=c_blue, minimum size=0.75cm] (C) at (0,3) {};
		
		\foreach \x in {1,2,3,4} {
			\node[rectangle, draw, fill=c_cyan, minimum size=0.7cm, font=\large] 
			(Sq\x) at (\x*1.5 - 3.75, 0) {\x};
			\draw (C) -- (Sq\x.north);
		}
		\node at (4.5,1.6) {\huge $\Longleftrightarrow$};
		% Right: Hypergraph title
		\node[titlebox] at (9,5.6) {Hypergraph};
		
		\draw[draw=black, fill=c_bgblue] (9,1.35) ellipse (1.8cm and 3.5cm);
		
		\foreach \y/\i in {3.5/1, 2.0/2, 0.5/3, -1.0/4} {
			\node[circle, draw, fill=c_cyan, minimum size=0.75cm, font=\large] 
			at (9,\y) {\i};
		}
		
	\end{tikzpicture}
	\caption{Hypergraph representation of a Tanner graph variable node and its adjacent check nodes.}
	\label{fig:hypergraph_example}
\end{figure}

\begin{proposition}[Codeword—even cover correspondence]\label{prop:code-even}
	Let $\mathcal{C}$ be a binary linear code with parity-check matrix $H$, and let $\mathcal{H}=(V,\mathcal{E})$ be the hypergraph constructed from the Tanner graph of $H$ as above. For any $\boldsymbol{x}\in\mathbb{F}_2^n$, define
	\[
	Y_{\boldsymbol{x}}\coloneqq \{e_j: j\in\{1,\ldots,n\}, x_j=1, e_j\in \mathcal{E}\}.
	\]
	Then
	\[
	H\boldsymbol{x}^{\top}=\boldsymbol{0}
	\quad\Longleftrightarrow\quad
	Y_{\boldsymbol{x}}\text{ is an even cover of }\mathcal{H}.
	\]
	Consequently, nonzero codewords of $\mathcal{C}$ are in a bijection with nonempty even covers of $\mathcal{H}$ that preserves the Hamming weight (equivalently, the cardinality of $Y_{\boldsymbol{x}}$).
    % weight-preserving 偶覆盖是用weight表示吗？感觉换一个词更好
    % 已改
\end{proposition}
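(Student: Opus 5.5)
The plan is to compute the syndrome $H\boldsymbol{x}^{\top}$ entry by entry, exactly as in the proof of Proposition~\ref{prop:trap-codeword}, and to reinterpret each entry as a vertex degree in the sub-hypergraph selected by $\boldsymbol{x}$. Fix $\boldsymbol{x}\in\mathbb{F}_2^n$ and let $B_{\boldsymbol{x}}=\{j: x_j=1\}$ be its support, so that $Y_{\boldsymbol{x}}$ is the indexed subfamily $(e_j)_{j\in B_{\boldsymbol{x}}}$.

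\emph{Step 1: one parity check.} For the row $\boldsymbol{h}_i$ corresponding to the check $r_i$,
\[
\boldsymbol{h}_i\boldsymbol{x}^{\top}=\sum_{j\in B_{\boldsymbol{x}}}H_{i,j}\pmod 2 .
\]
By construction, $H_{i,j}=1$ if and only if $\{v_j,r_i\}\in E$, which holds if and only if $r_i\in e_j$. Hence the integer sum equals $\bigl|\{j\in B_{\boldsymbol{x}}: r_i\in e_j\}\bigr|$, which is the degree of the vertex $r_i$ in $\mathcal{H}_{B_{\boldsymbol{x}}}$.

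\emph{Step 2: all checks at once.} The vertex set of $\mathcal{H}$ is exactly $\{r_1,\ldots,r_m\}$. Therefore $H\boldsymbol{x}^{\top}=\boldsymbol{0}$ holds if and only if every vertex has even degree in $\mathcal{H}_{B_{\boldsymbol{x}}}$, which is precisely Definition~\ref{def:even-cover} applied to $B_{\boldsymbol{x}}$. This gives the stated equivalence.

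\emph{Step 3: the bijection.} Consider the map $\boldsymbol{x}\mapsto B_{\boldsymbol{x}}$ between $\mathbb{F}_2^n$ and subsets of $I=\{1,\ldots,n\}$. It is a bijection, with inverse sending $B$ to its characteristic vector. It also satisfies $\operatorname{wt}(\boldsymbol{x})=|B_{\boldsymbol{x}}|$, and $\boldsymbol{x}\neq\boldsymbol{0}$ if and only if $B_{\boldsymbol{x}}\neq\emptyset$. By Step 2, this map restricts to a weight-preserving bijection from nonzero codewords onto nonempty even covers.

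The only real subtlety is multiplicity. Two variable nodes may have identical neighborhoods, so that $e_j=e_k$ for some $j\neq k$. The argument must therefore treat $Y_{\boldsymbol{x}}$ as an indexed family, in line with the convention of Definition~\ref{def:hypergraph}, and not as a set of subsets. Otherwise both injectivity and the cardinality count could fail. Each $e_j$ is also nonempty, as Definition~\ref{def:hypergraph} requires; this holds under the implicit assumption that $H$ has no zero column. A zero column would yield a weight-one codeword, and it would still count correctly if one allowed an empty hyperedge. I would remark on this rather than treat it as an obstacle.
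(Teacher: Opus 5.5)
Your proof is correct and follows essentially the same route as the paper: you identify the $i$th syndrome entry with the parity of the degree of $r_i$ in the selected subfamily, then use $\boldsymbol{x}\mapsto B_{\boldsymbol{x}}$ as the weight-preserving bijection between nonzero codewords and nonempty even covers. Your remarks on treating $Y_{\boldsymbol{x}}$ as an indexed family (needed when $e_j=e_k$ for some $j\neq k$) and on zero columns of $H$ make explicit two points that the paper's short proof leaves implicit.
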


\begin{proof}
	The $i$th symbol of $H\boldsymbol{x}^{\top}$ is
	\[
	\sum_{j=1}^n H_{i,j}x_j \pmod 2,
	\]
	which is exactly the number, modulo two, of selected hyperedges containing the vertex $r_i$. Hence, the parity-check equations hold if and only if every hypergraph vertex has an even degree in the selected subfamily of hyperedges. The Hamming weight of $\boldsymbol{x}$ equals the number of selected hyperedges, giving the claimed weight-preserving bijection.
\end{proof}

\subsection{Complexity-Theoretic Background}

\subsubsection*{$\mathrm{NP}$-completeness}
The class $\mathrm{NP}$ consists of decision problems for which a ``yes'' instance admits a certificate verifiable in polynomial time. A problem is $\mathrm{NP}$-complete if it belongs to $\mathrm{NP}$ and every problem in $\mathrm{NP}$ is polynomial-time reducible to it~\cite{Garey79}.

\subsubsection*{Parameterized complexity}
Parameterized complexity studies decision problems together with an auxiliary parameter~\cite{Flum06}. A parameterized problem is a language $L\subseteq \Sigma^*\times\mathbb{N}$, whose instances are pairs $(x,w)$, where $w$ is the parameter.

\begin{definition}[Fixed-parameter tractability]\label{def:fpt}
A parameterized problem $L$ is \emph{fixed-parameter tractable} (\textsc{FPT}) if there exists an algorithm deciding whether $(x,w)\in L$ in time $f(w)|x|^{\mathcal{O}(1)}$, where $f$ is a computable function depending only on $w$.
\end{definition}

The class \textsc{FPT} contains all fixed-parameter tractable problems. An \textsc{FPT}-reduction from a parameterized problem $L_1$ to another parameterized problem $L_2$ maps an instance $(x,w)$ to an instance $(x',w')$ in time $f(w)|x|^{\mathcal{O}(1)}$ such that $(x,w)\in L_1$ if and only if $(x',w')\in L_2$, and $w'\leq g(w)$ for some computable function $g$~\cite{Erik26}.

The $\mathrm{W}$-hierarchy provides a standard framework for identifying parameterized problems that are unlikely to be fixed-parameter tractable~\cite{Rodney13}. Its first level is anchored by the following canonical complete problem.

\begin{definition}[Nondeterministic Turing Machine Acceptance \textsc{(Nondet TM Acceptance)}]\label{def:ntm-acceptance}
The parameterized problem \textsc{Nondet TM Acceptance} is defined as follows.
\begin{itemize}
    \item \textbf{Input:} A nondeterministic Turing machine $M$ and a positive integer $k$.
    \item \textbf{Question:} Does $M$ have an accepting computation path of length at most $k$ on the empty input?
    \item \textbf{Parameter:} $k$.
\end{itemize}
Using the empty input is the standard \emph{short-computation convention}; equivalent formulations that supply a fixed input instead can be recovered by encoding that input directly into the machine description.
\end{definition}

\begin{definition}[{$\mathrm{W}[1]$}]\label{def:w1-class}
A parameterized problem $L$ belongs to the class $\mathrm{W}[1]$ if and only if $L$ admits an \textsc{FPT}-reduction to \textsc{Nondet TM Acceptance}.
\end{definition}

\begin{definition}[{$\mathrm{W}[1]$}-completeness]\label{def:w1}
A parameterized problem is $\mathrm{W}[1]$-\emph{hard} if every problem in $\mathrm{W}[1]$ admits an \textsc{FPT}-reduction to it. It is $\mathrm{W}[1]$-\emph{complete} if it is $\mathrm{W}[1]$-hard and also belongs to $\mathrm{W}[1]$.
\end{definition}

\begin{theorem}[Standard intractability implications]\label{thm:hierarchy}
Under the usual complexity-theoretic assumptions, the following implications hold.
\begin{itemize}
\item If $\mathrm{P}\neq\mathrm{NP}$, then no $\mathrm{NP}$-complete problem admits a polynomial-time algorithm.
\item If $\mathrm{FPT}\neq\mathrm{W}[1]$, then no $\mathrm{W}[1]$-complete problem is fixed-parameter tractable.
\end{itemize}
\end{theorem}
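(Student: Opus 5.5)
The plan is to argue both items by contraposition, since each is a direct consequence of the definitions of completeness together with closure of the relevant tractable class under the relevant reductions.

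For the first item, suppose some $\mathrm{NP}$-complete problem $L$ has a polynomial-time algorithm $A$. Let $L'$ be any problem in $\mathrm{NP}$. By $\mathrm{NP}$-completeness there is a polynomial-time many-one reduction $f$ with $x\in L'$ iff $f(x)\in L$, running in time $p(|x|)$. The output length is then at most $p(|x|)$, so running $A$ on $f(x)$ takes time $q(p(|x|))$ for the polynomial bound $q$ of $A$. This composition is still polynomial, so $L'\in\mathrm{P}$. Hence $\mathrm{NP}\subseteq\mathrm{P}$, and therefore $\mathrm{P}=\mathrm{NP}$, which contradicts the hypothesis.

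For the second item, suppose a $\mathrm{W}[1]$-complete problem $L$ is decided in time $f(w)|x|^{c}$. Take any $L'\in\mathrm{W}[1]$. By $\mathrm{W}[1]$-hardness there is an FPT-reduction mapping $(x,w)$ to $(x',w')$ in time $h(w)|x|^{d}$ with $w'\le g(w)$. The instance size satisfies $|x'|\le h(w)|x|^{d}$. Deciding $(x',w')$ with the algorithm for $L$ costs at most
\[
\max_{w''\le g(w)} f(w'')\cdot \bigl(h(w)|x|^{d}\bigr)^{c}.
\]
This bound has the form $F(w)|x|^{O(1)}$ with $F$ computable; taking the maximum over $w''\le g(w)$ handles a possibly non-monotone $f$. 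Adding the reduction time, we get $L'\in\mathrm{FPT}$, so $\mathrm{W}[1]\subseteq\mathrm{FPT}$.

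The remaining step, and the only mildly delicate one, is the reverse inclusion $\mathrm{FPT}\subseteq\mathrm{W}[1]$, which is needed to conclude $\mathrm{FPT}=\mathrm{W}[1]$. Under Definition~\ref{def:w1-class} this amounts to giving an FPT-reduction from an arbitrary FPT problem to \textsc{Nondet TM Acceptance}. The plan is to solve the instance within the FPT time budget and then output a fixed trivial yes-instance or no-instance: for example, a machine that accepts immediately with $k=1$, or a machine with no accepting state. The parameter $1$ is trivially bounded by a computable function. Combining both inclusions gives $\mathrm{FPT}=\mathrm{W}[1]$, contradicting the hypothesis. Since the statement is a standard background fact, I would present these compositions briefly and cite \cite{Garey79,Flum06}.
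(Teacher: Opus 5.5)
Your proof is correct. The paper gives no proof of this theorem; it records the two implications as standard background and defers to Garey--Johnson and Flum--Grohe. Your composition-of-reductions argument is therefore exactly the textbook reasoning the paper is invoking. You also check explicitly that $\mathrm{FPT}\subseteq\mathrm{W}[1]$, via the trivial reduction to a fixed yes- or no-instance of \textsc{Nondet TM Acceptance}. The paper leaves this implicit, but under its definition of $\mathrm{W}[1]$ it is genuinely needed to upgrade $\mathrm{W}[1]\subseteq\mathrm{FPT}$ to the equality that contradicts the hypothesis.
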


\subsection{Problem Formulations}
The MDP for binary linear codes has three equivalent perspectives: parity-check matrices, Tanner graphs, and incidence hypergraphs. We use these formulations interchangeably in the reductions. In all variants below, the parameter is the target size $w$.

\paragraph{General instances.}
We first consider the unrestricted setting.

\begin{itemize}
	\item \MD: Given a binary matrix $H\in\mathbb{F}_2^{m\times n}$ and an integer $w>0$, decide whether there exists a nonzero vector $x\in\mathbb{F}_2^n$ such that
	\[
	\operatorname{wt}(x)\leq w
	\quad\text{and}\quad
	Hx^{\top}=\boldsymbol{0}.
	\]
	
	\item \MT: Given a bipartite graph $\mathcal{G}=(L,R,E)$ and an integer $w>0$, decide whether there exists a nonempty set $S\subseteq L$ with $|S|\leq w$ such that
	\[
	|N(r)\cap S|\equiv 0\pmod 2
	\qquad\text{for every } r\in R.
	\]

    \item \MEC: Given a hypergraph $\mathcal{H}=(V,\mathcal{E})$ with
    $\mathcal{E}=(e_\alpha)_{\alpha\in I}$ and an integer $w>0$,
    decide whether there exists a nonempty set $B\subseteq I$ with
    $|B|\leq w$ such that every vertex is incident with an even number of
    hyperedges indexed by $B$, that is,
    \[
    \left|\{\alpha\in B:v\in e_\alpha\}\right|\equiv 0\pmod 2
    \qquad\text{for every } v\in V .
    \]
\end{itemize}

The equivalence of these formulations follows from Proposition~\ref{prop:trap-codeword}, Proposition~\ref{prop:code-even}, and the standard incidence transformations between matrices, Tanner graphs, and hypergraphs. Hence, the three formulations transfer hardness results among one another.

\paragraph{Column-regular instances.}
For a fixed integer \(J\geq 3\), the column-regular restrictions of the
three formulations are equivalent under the above incidence transformations:
constant column weight \(J\) in the parity-check matrix corresponds to
\(J\)-left regularity of the Tanner graph and to \(J\)-uniformity of the
associated hypergraph. We denote the corresponding restricted variants by
\(\MDJ\), \(\MTJ\), and \(\MECJ\), respectively. The decision question and
the parameter \(w\) remain the same as in the unrestricted case.

\paragraph{Biregular instances.}
Similarly, for fixed integers \(J\) and \(K\), imposing constant column
weight \(J\) and constant row weight \(K\) on the parity-check matrix is
equivalent to requiring the Tanner graph to be \((J,K)\)-regular, and to
requiring the associated hypergraph to be \(J\)-uniform and \(K\)-regular.
We denote the corresponding restricted variants by \(\MDJK\), \(\MTJK\),
and \(\MECJK\), respectively. Again, the decision question and the parameter
\(w\) are unchanged.

\begin{remark}[Parameterized equivalence]\label{rem:equiv}
	Within each structural class—unrestricted, column-regular, and biregular—the matrix, Tanner graph, and hypergraph formulations are equivalent under parameter-preserving transformations. 
\end{remark}

\paragraph{Exact-weight variants.}
For the parameterized complexity results, we additionally consider the
exact-weight counterparts of the preceding problems. They are denoted by
$\MD^{=}$, $\MT^{=}$, and $\MEC^{=}$, respectively.

\begin{itemize}
	\item $\MD^{=}$: Given a binary matrix
	$H\in\mathbb{F}_2^{m\times n}$ and an integer $w>0$, decide whether
	there exists a vector $x\in\mathbb{F}_2^n$ such that
	\[
		Hx^{\top}=\boldsymbol{0}
		\quad\text{and}\quad
		\operatorname{wt}(x)=w.
	\]

	\item $\MT^{=}$: Given a Tanner graph
	$\mathcal{G}=(L,R,E)$ and an integer $w>0$, decide whether there
	exists a set $S\subseteq L$ such that
	\[
		|S|=w
		\quad\text{and}\quad
		|N(r)\cap S|\equiv0\pmod 2
		\quad\text{for every }r\in R.
	\]

	\item $\MEC^{=}$: Given a hypergraph
	$\mathcal{H}=(V,\mathcal{E})$ and an integer $w>0$, decide whether
	there exists an even cover $B\subseteq\mathcal{E}$ satisfying
	\[
		|B|=w.
	\]
\end{itemize}

The corresponding column-regular and biregular restrictions are denoted
by $\MDJ^{=}$, $\MTJ^{=}$, and $\MECJ^{=}$, and by
$\MDJK^{=}$, $\MTJK^{=}$, and $\MECJK^{=}$, respectively.
Since the standard incidence transformations preserve cardinality
exactly, the matrix, Tanner-graph, and hypergraph formulations remain
equivalent under parameter-preserving transformations.

\definecolor{myblue}{RGB}{115, 140, 245}
\definecolor{mycyan}{RGB}{160, 250, 250}
\definecolor{myorange}{RGB}{250, 225, 195}
\definecolor{mygreen}{RGB}{150, 215, 105}

\section{Intermediate Transformations} \label{sec:inter}
Although the minimum distance problem is known to be NP-complete for arbitrary linear codes, this result alone does not settle the complexity of the problem for highly structured subclasses such as regular LDPC codes. Consequently, regularity constraints require a separate hardness analysis rather than a direct appeal to the unrestricted case. Our strategy is to establish hardness by reducing from the general minimum distance problem. Transforming arbitrary instances into regular or nearly regular configurations without trivializing the underlying parity problem requires specialized algebraic graph operations. In Sections~\ref{sec:left} and~\ref{sec:regular}, we prove NP-completeness and parameterized hardness for left-regular and biregular LDPC code ensembles, respectively. To prepare for these reductions, this section introduces three transformation primitives that will be used repeatedly: \emph{hyperedge decomposition}, \emph{check node splitting}, and \emph{variable replication}. We then formalize their structural properties through three corresponding lemmas, each showing how the relevant parity structure is preserved under the transformation.

\subsection{Hyperedge Decomposition}

We first introduce a transformation that enforces the uniformity condition required by the Tanner-graph representation. When the variable node degree is fixed to three, the corresponding hypergraph must be $3$-uniform, since each variable node is represented by a hyperedge incident with exactly three check nodes. General minimum-distance instances, however, do not satisfy this restriction: their hypergraph representations may contain hyperedges of arbitrary size. Algorithm~\ref{alg:hyp} gives a hyperedge decomposition procedure that converts such hyperedges into chains of size-three components while introducing auxiliary vertices to preserve the parity structure.

\begin{algorithm}[h]
	\caption{Decomposition of Large Hyperedges into 3-Uniform Components}
	\label{alg:hyp}
	\begin{algorithmic}[1]
		\Require A Hypergraph $\mathcal{H} = (V, \mathcal{E})$
		\Ensure A hypergraph $\mathcal{H} = (V\cup V_{\mathrm{aux}}, \mathcal{E}')$ with all hyperedges of size at most 3
		\State Initialize $V_{\mathrm{aux}} \gets \emptyset,\ \mathcal{E}' \gets \emptyset$
		\For{each hyperedge $e \in \mathcal{E}$}
		\State Let $k \gets |e|$
		\LongState{Enumerate the vertices of $e$ as $v_1, v_2, \dots, v_k$, such that $e = \{v_1, v_2, \dots, v_k\}$}
		\If{$k > 3$}
		\State Introduce auxiliary vertices $x_1, x_2, \dots, x_{k-3}$
		\State Construct new hyperedges:
		\State $e^\prime_1 \gets \{v_1, v_2, x_1\}$
		\State $e^\prime_{k-2} \gets \{x_{k-3}, v_{k-1}, v_k\}$
		\For{$j \gets 2$ \textbf{to} $k-3$}
		\State $e^\prime_j \gets \{x_{j-1}, v_{j+1}, x_j\}$
		\EndFor
		\State Update vertex and edge sets:
		\State $V_{\mathrm{aux}} \gets V_{\mathrm{aux}} \cup \{x_1, \dots, x_{k-3}\}$
		\State $\mathcal{E}' \gets \mathcal{E}' \cup \left\{e^\prime_1, e^\prime_2, \dots, e^\prime_{k-2}\right\}$
        \Else
        \State $\mathcal{E}' \gets \mathcal{E}' \cup \{e\}$
		\EndIf
		\EndFor
		\State \textbf{return} $(V\cup V_{\mathrm{aux}}, \mathcal{E}')$
	\end{algorithmic}
\end{algorithm}

Fig.~\ref{fig:hyperedge_decomp} illustrates the execution of this decomposition on a hyperedge of size six. The algorithm introduces three auxiliary vertices and uses them to split the original hyperedge into four consecutive $3$-uniform components. The first component contains the first two original vertices, the last component contains the last two original vertices, and the two middle components each contain one remaining original vertex together with two auxiliary vertices. The procedure processes each hyperedge $e \in \mathcal{E}$ independently. For a hyperedge of cardinality $|e| > 3$, it introduces $|e|-3$ auxiliary vertices and constructs $|e|-2$ new $3$-uniform hyperedges. From a computational efficiency standpoint, this requires $\mathcal{O}(|e|)$ operations per decomposed hyperedge. Summing over all hyperedges, the total running time is $\mathcal{O}\left( \sum_{e \in \mathcal{E}} |e| \right)$, which is linear in the total size of the input hypergraph representation. The subsequent mathematical properties of the resulting chain structure, including how it bijectively maps between the even covers of the original and decomposed hypergraphs, are established in Lemma~\ref{lem:3.1}.

\begin{figure}[h]
	\centering
			\begin{tikzpicture}[
			x={0.0006175772\textwidth},y={-0.0006175772\textwidth},
			vertexlabel/.style={font=\scriptsize,text=textInk,inner sep=0pt}
		]
			\path[use as bounding box] (0,0) rectangle (842,595);
			\begin{scope}[scale=0.75]
				\path[fill=hyperBlue,fill opacity=0.498039,draw=outlineInk,line width=0.25pt,line cap=butt,line join=miter,miter limit=1.207107,nonzero rule] (927.713542,486.625)
				.. controls (895.0625,477.875) and (893.703125,377.083333) .. (924.671875,261.5)
				.. controls (955.645833,145.916667) and (1007.21875,59.307292) .. (1039.869792,68.057292)
				.. controls (1072.520833,76.802083) and (1073.880208,177.59375) .. (1042.90625,293.182292)
				.. controls (1011.9375,408.765625) and (960.364583,495.375) .. (927.713542,486.625)
				-- cycle;
			\end{scope}
			\begin{scope}[scale=0.75]
				\path[fill=hyperBlue,fill opacity=0.498039,draw=outlineInk,line width=0.25pt,line cap=butt,line join=miter,miter limit=1.207107,nonzero rule] (827.630208,725.161458)
				.. controls (795.864583,713.604167) and (803.296875,613.072917) .. (844.21875,500.630208)
				.. controls (885.145833,388.182292) and (944.072917,306.401042) .. (975.838542,317.963542)
				.. controls (1007.598958,329.520833) and (1000.171875,430.052083) .. (959.244792,542.494792)
				.. controls (918.317708,654.942708) and (859.390625,736.723958) .. (827.630208,725.161458)
				-- cycle;
			\end{scope}
			\begin{scope}[scale=0.75]
				\path[fill=hyperBlue,fill opacity=0.498039,draw=outlineInk,line width=0.25pt,line cap=butt,line join=miter,miter limit=1.207107,nonzero rule] (757.807292,498.151042)
				.. controls (726.041667,509.713542) and (667.114583,427.932292) .. (626.1875,315.484375)
				.. controls (585.260417,203.041667) and (577.833333,102.510417) .. (609.59375,90.953125)
				.. controls (641.359375,79.390625) and (700.286458,161.177083) .. (741.213542,273.619792)
				.. controls (782.140625,386.067708) and (789.567708,486.59375) .. (757.807292,498.151042)
				-- cycle;
			\end{scope}
			\begin{scope}[scale=0.75]
				\path[fill=hyperBlue,fill opacity=0.498039,draw=outlineInk,line width=0.25pt,line cap=butt,line join=miter,miter limit=1.207107,nonzero rule] (909.291667,718.682292)
				.. controls (880.015625,735.583333) and (807.78125,665.276042) .. (747.953125,561.645833)
				.. controls (688.119792,458.015625) and (663.348958,360.302083) .. (692.619792,343.401042)
				.. controls (721.895833,326.5) and (794.130208,396.8125) .. (853.958333,500.442708)
				.. controls (913.791667,604.072917) and (938.5625,701.78125) .. (909.291667,718.682292)
				-- cycle;
			\end{scope}
			\begin{scope}[scale=0.75]
				\path[fill=hyperBlue,fill opacity=0.498039,draw=outlineInk,line width=0.25pt,line cap=butt,line join=miter,miter limit=1.207107,nonzero rule] (146.5625,746.3125)
				.. controls (97.234375,746.3125) and (57.244792,589.854167) .. (57.244792,396.848958)
				.. controls (57.244792,203.848958) and (97.234375,47.390625) .. (146.5625,47.390625)
				.. controls (195.895833,47.390625) and (235.885417,203.848958) .. (235.885417,396.848958)
				.. controls (235.885417,589.854167) and (195.895833,746.3125) .. (146.5625,746.3125)
				-- cycle;
			\end{scope}
			\begin{scope}[scale=0.75]
				\path[fill=mintFill,draw=outlineInk,line width=0.25pt,line cap=butt,line join=miter,miter limit=1.207107,nonzero rule] (114.765625,123.947917)
				.. controls (114.765625,106.385417) and (129,92.145833) .. (146.5625,92.145833)
				.. controls (164.125,92.145833) and (178.364583,106.385417) .. (178.364583,123.947917)
				.. controls (178.364583,141.510417) and (164.125,155.75) .. (146.5625,155.75)
				.. controls (129,155.75) and (114.765625,141.510417) .. (114.765625,123.947917)
				-- cycle;
			\end{scope}
			\begin{scope}[scale=0.75]
				\path[fill=mintFill,draw=outlineInk,line width=0.25pt,line cap=butt,line join=miter,miter limit=1.207107,nonzero rule] (114.765625,232.505208)
				.. controls (114.765625,214.942708) and (129,200.708333) .. (146.5625,200.708333)
				.. controls (164.125,200.708333) and (178.364583,214.942708) .. (178.364583,232.505208)
				.. controls (178.364583,250.067708) and (164.125,264.307292) .. (146.5625,264.307292)
				.. controls (129,264.307292) and (114.765625,250.067708) .. (114.765625,232.505208)
				-- cycle;
			\end{scope}
			\begin{scope}[scale=0.75]
				\path[fill=mintFill,draw=outlineInk,line width=0.25pt,line cap=butt,line join=miter,miter limit=1.207107,nonzero rule] (114.765625,341.067708)
				.. controls (114.765625,323.505208) and (129,309.265625) .. (146.5625,309.265625)
				.. controls (164.125,309.265625) and (178.364583,323.505208) .. (178.364583,341.067708)
				.. controls (178.364583,358.630208) and (164.125,372.864583) .. (146.5625,372.864583)
				.. controls (129,372.864583) and (114.765625,358.630208) .. (114.765625,341.067708)
				-- cycle;
			\end{scope}
			\begin{scope}[scale=0.75]
				\path[fill=mintFill,draw=outlineInk,line width=0.25pt,line cap=butt,line join=miter,miter limit=1.207107,nonzero rule] (114.765625,449.625)
				.. controls (114.765625,432.0625) and (129,417.822917) .. (146.5625,417.822917)
				.. controls (164.125,417.822917) and (178.364583,432.0625) .. (178.364583,449.625)
				.. controls (178.364583,467.1875) and (164.125,481.421875) .. (146.5625,481.421875)
				.. controls (129,481.421875) and (114.765625,467.1875) .. (114.765625,449.625)
				-- cycle;
			\end{scope}
			\begin{scope}[scale=0.75]
				\path[fill=mintFill,draw=outlineInk,line width=0.25pt,line cap=butt,line join=miter,miter limit=1.207107,nonzero rule] (114.765625,558.182292)
				.. controls (114.765625,540.619792) and (129,526.385417) .. (146.5625,526.385417)
				.. controls (164.125,526.385417) and (178.364583,540.619792) .. (178.364583,558.182292)
				.. controls (178.364583,575.744792) and (164.125,589.984375) .. (146.5625,589.984375)
				.. controls (129,589.984375) and (114.765625,575.744792) .. (114.765625,558.182292)
				-- cycle;
			\end{scope}
			\begin{scope}[scale=0.75]
				\path[fill=mintFill,draw=outlineInk,line width=0.25pt,line cap=butt,line join=miter,miter limit=1.207107,nonzero rule] (114.765625,666.739583)
				.. controls (114.765625,649.177083) and (129,634.942708) .. (146.5625,634.942708)
				.. controls (164.125,634.942708) and (178.364583,649.177083) .. (178.364583,666.739583)
				.. controls (178.364583,684.302083) and (164.125,698.541667) .. (146.5625,698.541667)
				.. controls (129,698.541667) and (114.765625,684.302083) .. (114.765625,666.739583)
				-- cycle;
			\end{scope}
			\begin{scope}[scale=0.75]
				\path[fill=mintFill,draw=outlineInk,line width=0.25pt,line cap=butt,line join=miter,miter limit=1.207107,nonzero rule] (618.260417,191.114583)
				.. controls (618.260417,173.552083) and (632.494792,159.3125) .. (650.057292,159.3125)
				.. controls (667.619792,159.3125) and (681.859375,173.552083) .. (681.859375,191.114583)
				.. controls (681.859375,208.677083) and (667.619792,222.911458) .. (650.057292,222.911458)
				.. controls (632.494792,222.911458) and (618.260417,208.677083) .. (618.260417,191.114583)
				-- cycle;
			\end{scope}
			\begin{scope}[scale=0.75]
				\path[fill=mintFill,draw=outlineInk,line width=0.25pt,line cap=butt,line join=miter,miter limit=1.207107,nonzero rule] (655.385417,293.125)
				.. controls (655.385417,275.5625) and (669.625,261.328125) .. (687.1875,261.328125)
				.. controls (704.75,261.328125) and (718.989583,275.5625) .. (718.989583,293.125)
				.. controls (718.989583,310.6875) and (704.75,324.927083) .. (687.1875,324.927083)
				.. controls (669.625,324.927083) and (655.385417,310.6875) .. (655.385417,293.125)
				-- cycle;
			\end{scope}
			\begin{scope}[scale=0.75]
				\path[fill=orangeFill,draw=outlineInk,line width=0.25pt,line cap=butt,line join=miter,miter limit=1.207107,nonzero rule] (701.255208,418.713542)
				.. controls (701.255208,401.151042) and (715.489583,386.916667) .. (733.052083,386.916667)
				.. controls (750.614583,386.916667) and (764.854167,401.151042) .. (764.854167,418.713542)
				.. controls (764.854167,436.276042) and (750.614583,450.515625) .. (733.052083,450.515625)
				.. controls (715.489583,450.515625) and (701.255208,436.276042) .. (701.255208,418.713542)
				-- cycle;
			\end{scope}
			\begin{scope}[scale=0.75]
				\path[fill=mintFill,draw=outlineInk,line width=0.25pt,line cap=butt,line join=miter,miter limit=1.207107,nonzero rule] (769.15625,525.979167)
				.. controls (769.15625,508.416667) and (783.395833,494.177083) .. (800.958333,494.177083)
				.. controls (818.520833,494.177083) and (832.760417,508.416667) .. (832.760417,525.979167)
				.. controls (832.760417,543.541667) and (818.520833,557.78125) .. (800.958333,557.78125)
				.. controls (783.395833,557.78125) and (769.15625,543.541667) .. (769.15625,525.979167)
				-- cycle;
			\end{scope}
			\begin{scope}[scale=0.75]
				\path[fill=orangeFill,draw=outlineInk,line width=0.25pt,line cap=butt,line join=miter,miter limit=1.207107,nonzero rule] (824.260417,625.776042)
				.. controls (824.260417,608.213542) and (838.5,593.979167) .. (856.0625,593.979167)
				.. controls (873.625,593.979167) and (887.859375,608.213542) .. (887.859375,625.776042)
				.. controls (887.859375,643.338542) and (873.625,657.578125) .. (856.0625,657.578125)
				.. controls (838.5,657.578125) and (824.260417,643.338542) .. (824.260417,625.776042)
				-- cycle;
			\end{scope}
			\begin{scope}[scale=0.75]
				\path[fill=mintFill,draw=outlineInk,line width=0.25pt,line cap=butt,line join=miter,miter limit=1.207107,nonzero rule] (874.828125,524.291667)
				.. controls (874.828125,506.729167) and (889.0625,492.489583) .. (906.625,492.489583)
				.. controls (924.1875,492.489583) and (938.427083,506.729167) .. (938.427083,524.291667)
				.. controls (938.427083,541.854167) and (924.1875,556.088542) .. (906.625,556.088542)
				.. controls (889.0625,556.088542) and (874.828125,541.854167) .. (874.828125,524.291667)
				-- cycle;
			\end{scope}
			\begin{scope}[scale=0.75]
				\path[fill=orangeFill,draw=outlineInk,line width=0.25pt,line cap=butt,line join=miter,miter limit=1.207107,nonzero rule] (914.838542,413.651042)
				.. controls (914.838542,396.088542) and (929.072917,381.848958) .. (946.635417,381.848958)
				.. controls (964.203125,381.848958) and (978.4375,396.088542) .. (978.4375,413.651042)
				.. controls (978.4375,431.213542) and (964.203125,445.453125) .. (946.635417,445.453125)
				.. controls (929.072917,445.453125) and (914.838542,431.213542) .. (914.838542,413.651042)
				-- cycle;
			\end{scope}
			\begin{scope}[scale=0.75]
				\path[fill=mintFill,draw=outlineInk,line width=0.25pt,line cap=butt,line join=miter,miter limit=1.207107,nonzero rule] (951.989583,277.338542)
				.. controls (951.989583,259.776042) and (966.229167,245.536458) .. (983.791667,245.536458)
				.. controls (1001.354167,245.536458) and (1015.588542,259.776042) .. (1015.588542,277.338542)
				.. controls (1015.588542,294.901042) and (1001.354167,309.140625) .. (983.791667,309.140625)
				.. controls (966.229167,309.140625) and (951.989583,294.901042) .. (951.989583,277.338542)
				-- cycle;
			\end{scope}
			\begin{scope}[scale=0.75]
				\path[fill=mintFill,draw=outlineInk,line width=0.25pt,line cap=butt,line join=miter,miter limit=1.207107,nonzero rule] (977.380208,180.869792)
				.. controls (977.380208,163.307292) and (991.619792,149.067708) .. (1009.182292,149.067708)
				.. controls (1026.744792,149.067708) and (1040.979167,163.307292) .. (1040.979167,180.869792)
				.. controls (1040.979167,198.432292) and (1026.744792,212.671875) .. (1009.182292,212.671875)
				.. controls (991.619792,212.671875) and (977.380208,198.432292) .. (977.380208,180.869792)
				-- cycle;
			\end{scope}
			\begin{scope}[scale=0.75]
				\path[fill=none,draw=textInk,line width=0.25pt,line cap=butt,line join=miter,miter limit=1.207107,nonzero rule] (353.807292,404.666667)
				-- (527.692708,404.666667);
			\end{scope}
			\begin{scope}[scale=0.75]
				\path[fill=none,draw=textInk,line width=0.25pt,line cap=round,line join=round,miter limit=1.207107,nonzero rule] (521.614583,400.791667)
				-- (529.359375,404.666667)
				-- (521.614583,408.536458);
			\end{scope}
			\node[vertexlabel] at (109.922,92.961) {$v_1$};
			\node[vertexlabel] at (109.922,174.379) {$v_2$};
			\node[vertexlabel] at (109.922,255.801) {$v_3$};
			\node[vertexlabel] at (109.922,337.219) {$v_4$};
			\node[vertexlabel] at (109.922,418.637) {$v_5$};
			\node[vertexlabel] at (109.922,500.055) {$v_6$};
			\node[vertexlabel] at (487.543,143.336) {$v_1$};
			\node[vertexlabel] at (515.391,219.844) {$v_2$};
			\node[vertexlabel] at (549.789,314.035) {$x_1$};
			\node[vertexlabel] at (600.719,394.484) {$v_3$};
			\node[vertexlabel] at (642.047,469.332) {$x_2$};
			\node[vertexlabel] at (679.969,393.219) {$v_4$};
			\node[vertexlabel] at (709.977,310.238) {$x_3$};
			\node[vertexlabel] at (737.844,208.004) {$v_5$};
			\node[vertexlabel] at (756.887,135.652) {$v_6$};
		\end{tikzpicture}%
	\caption{Example of decomposing a hyperedge of size six. Numbers denote original vertices, and letters denote auxiliary vertices.}
	\label{fig:hyperedge_decomp}
\end{figure}

\begin{lemma}\label{lem:3.1}
	Let $\mathcal{H}=(V,\mathcal{E})$ be a hypergraph, and let $\mathcal{H}'=(V'=V \cup V_{\mathrm{aux}},\mathcal{E}')$ be the hypergraph obtained from $\mathcal{H}$ by applying the hyperedge decomposition procedure in Algorithm~\ref{alg:hyp}, where $V_{\mathrm{aux}}$ is the set of auxiliary vertices. Let $\mathcal{A}$ and $\mathcal{A}'$ denote the families of all even covers of $\mathcal{H}$ and $\mathcal{H}'$, respectively. Then the following statements hold.
	\begin{enumerate}
		\item[\textnormal{(i)}] \textbf{Chain constraint.} Suppose that an original hyperedge $e\in \mathcal{E}$ is decomposed into an ordered chain $\mathcal{C}(e)=(e^\prime_1,e^\prime_2,\ldots,e^\prime_t)\subseteq \mathcal{E}'$, where, for each $j=1,2,\ldots,t-1$, the two consecutive sub-hyperedges $e'_j$ and $e'_{j+1}$ intersect in the unique auxiliary vertex $x_j\in V_{\mathrm{aux}}$. Then, for every $Y'\in\mathcal{A}'$, if $e'_i\in Y'$ for some $i\in\{1,2,\ldots,t\}$, then $\{e^\prime_1,e^\prime_2,\ldots,e^\prime_t\}\subseteq Y'$.
		\item[\textnormal{(ii)}] \textbf{Bijection of even covers.} There exists a map $T:\mathcal{A}\to\mathcal{A}'$ such that $T$ is a bijection. More precisely, after regarding every undecomposed hyperedge $e$ as a chain $\mathcal{C}(e)=\{e^\prime\}$, the map
		\[
		T(Y)=\bigcup_{e\in Y}\mathcal{C}(e), \qquad Y\in\mathcal{A},
		\]
		is a bijection from $\mathcal{A}$ onto $\mathcal{A}'$.
	\end{enumerate}
\end{lemma}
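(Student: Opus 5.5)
The plan is to exploit the fact that every auxiliary vertex created by Algorithm~\ref{alg:hyp} has degree exactly two in $\mathcal{H}'$. Each $x_j$ introduced while decomposing a hyperedge $e$ lies in exactly the two consecutive sub-hyperedges $e'_j$ and $e'_{j+1}$ of the chain $\mathcal{C}(e)$. No other hyperedge of $\mathcal{E}'$ contains it, because auxiliary vertices are fresh for each processed hyperedge. I will first record this degree-two fact explicitly from the construction, and also note that each original vertex $v\in e$ appears in exactly one member of $\mathcal{C}(e)$. For $v_1,v_2$ this member is $e'_1$; for $v_{j+1}$ with $2\le j\le k-3$ it is $e'_j$; for $v_{k-1},v_k$ it is $e'_{k-2}$. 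These are the only structural facts needed.

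For (i), I will take $Y'\in\mathcal{A}'$ and apply the even-cover condition at $x_j$. Since $x_j$ lies only in $e'_j$ and $e'_{j+1}$, parity at $x_j$ forces $e'_j\in Y'$ if and only if $e'_{j+1}\in Y'$. Chaining these equivalences along $j=1,\dots,t-1$ shows that $Y'$ contains either none or all of $\mathcal{C}(e)$. For undecomposed hyperedges the statement is trivial.

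For (ii), I will first check that $T$ is well defined, i.e.\ that $T(Y)$ is an even cover of $\mathcal{H}'$. At an auxiliary vertex of a chain $\mathcal{C}(e)$, the degree in $T(Y)$ is $2$ if $e\in Y$ and $0$ otherwise. At an original vertex $v$, the members of $\mathcal{C}(e)$ containing $v$ number exactly one when $v\in e$ and zero otherwise. Hence $\deg_{T(Y)}(v)=|\{e\in Y: v\in e\}|$, which is even. Injectivity is immediate because the chains of distinct indexed hyperedges are disjoint families. Surjectivity uses (i): any $Y'\in\mathcal{A}'$ is a union of whole chains, say $Y'=\bigcup_{e\in Y}\mathcal{C}(e)$. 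The same degree identity at original vertices then shows that $Y$ is an even cover of $\mathcal{H}$, and $T(Y)=Y'$.

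The main obstacle is bookkeeping rather than mathematics. The delicate points are confirming that each original vertex of $e$ occurs in exactly one chain member, including the boundary cases $k=4$, where the chain is just $e'_1,e'_2$ and the inner loop is empty, and $k\le 3$. One must also handle the indexed-family convention: parallel copies of the same hyperedge get separate auxiliary vertices, so their chains stay disjoint and injectivity holds. I will verify the index ranges of Algorithm~\ref{alg:hyp} carefully for these cases.
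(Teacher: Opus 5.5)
Your proposal is correct and follows essentially the paper's proof: parity at the degree-two auxiliary vertices gives the chain constraint, the fact that each original vertex of $e$ lies in exactly one member of $\mathcal{C}(e)$ shows that $T(Y)$ is an even cover, and the chain constraint gives surjectivity. The only cosmetic difference is that you argue injectivity and surjectivity directly, whereas the paper writes down the explicit inverse $P(Y')=\{e\in\mathcal{E}:\mathcal{C}(e)\subseteq Y'\}$ and checks both compositions.
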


\begin{proof}
	All parity computations below are taken over $\mathbb{F}_2$. For a subset $Z$ and an element $a$, write $\boldsymbol{1}_Z(a)$ for the indicator of the event $a\in Z$. Recall that $\mathcal{C}(e)$ denotes the chain produced from $e$ by Algorithm~\ref{alg:hyp}. For any hyperedge $e$ with $|e|\leq 3$, Algorithm~\ref{alg:hyp} leaves it unchanged, so $\mathcal{C}(e)=\{e\}$.

	The chains $\mathcal{C}(e)$, $e\in\mathcal{E}$, are pairwise edge-disjoint and their union equals $\mathcal{E}'$; in particular, every hyperedge in $\mathcal{E}'$ arises from exactly one original hyperedge $e\in\mathcal{E}$.

\noindent\textnormal{(i) Chain constraint.} For a hyperedge with $|e|\leq 3$ we have $\mathcal{C}(e)=\{e\}$, and the statement is immediate. It remains to consider a hyperedge $e$ with $|e|>3$. Write its decomposition chain as $\mathcal{C}(e)=(e^\prime_1,\ldots,e^\prime_t)$. For each $j=1,\ldots,t-1$, the consecutive hyperedges $e^\prime_j$ and $e^\prime_{j+1}$ share an auxiliary vertex $x_j$. By construction, $x_j$ is incident, among the hyperedges of $\mathcal{E}'$, exactly with $e^\prime_j$ and $e^\prime_{j+1}$. If $Y'\in\mathcal{A}'$, the even-cover condition at $x_j$ gives
	\[
	\boldsymbol{1}_{Y'}(e^\prime_j)+\boldsymbol{1}_{Y'}(e^\prime_{j+1})\equiv 0 \pmod{2}.
	\]
	Hence $\boldsymbol{1}_{Y'}(e^\prime_j)=\boldsymbol{1}_{Y'}(e^\prime_{j+1})$ for every $j=1,\ldots,t-1$. Induction along the chain shows that all hyperedges in $\mathcal{C}(e)$ have the same membership status in $Y'$. Therefore, once one of them belongs to $Y'$, all of them belong to $Y'$, proving~(i).

	\noindent\textnormal{(ii) Bijection of even covers.} We define the forward map $T:\mathcal{A}\to 2^{\mathcal{E}'}$ by
	\[
	T(Y)=\bigcup_{e\in Y}\mathcal{C}(e).
	\]
	We verify that $T(Y)\in\mathcal{A}'$ whenever $Y\in\mathcal{A}$. Fix any original vertex $v\in V$. For every original hyperedge $e\in\mathcal{E}$, Algorithm~\ref{alg:hyp} constructs $\mathcal{C}(e)$ so that exactly one hyperedge in $\mathcal{C}(e)$ contains $v$ if $v\in e$, and no hyperedge in $\mathcal{C}(e)$ contains $v$ if $v\notin e$. Hence the number of hyperedges in $\mathcal{C}(e)$ that contain $v$ equals $\boldsymbol{1}_e(v)$ in both cases. Consequently,
	\[
	\deg_{T(Y)}(v)\equiv \sum_{e\in Y}\boldsymbol{1}_{e}(v)=\deg_Y(v)\equiv 0 \pmod{2}.
	\]
	If $x_j\in V_{\mathrm{aux}}$ is introduced inside the chain of some hyperedge $e$, then $x_j$ is incident only with the adjacent pair $e^\prime_j,e^\prime_{j+1}$. The definition of $T$ selects either both of these hyperedges (when $e\in Y$) or neither (when $e\notin Y$). Hence $\deg_{T(Y)}(x_j)\in\{0,2\}$ is even. Thus $T(Y)\in\mathcal{A}'$.

	Define the reverse map $P:\mathcal{A}'\to 2^{\mathcal{E}}$ by
	\[
P(Y')=\{e\in\mathcal{E}:\mathcal{C}(e)\subseteq Y'\}.
	\]
	By the chain constraint, for every $Y'\in\mathcal{A}'$ and every $e\in\mathcal{E}$, either the whole chain $\mathcal{C}(e)$ is contained in $Y'$ or it is disjoint from $Y'$. Therefore $P$ is well defined and captures exactly the selected chains of $Y'$. To see that $P(Y')\in\mathcal{A}$, fix $v\in V$. Since each original occurrence of $v$ in $e$ corresponds to exactly one occurrence of $v$ in $\mathcal{C}(e)$, and each chain is either fully selected or not selected at all,
	\[
	\deg_{P(Y')}(v)\equiv \deg_{Y'}(v)\equiv 0 \pmod{2}.
	\]
	Thus $P(Y')\in\mathcal{A}$.

	Finally, $P(T(Y))=Y$ for every $Y\in\mathcal{A}$ by the definition of $T$. Conversely, for any $Y'\in\mathcal{A}'$, the chain constraint and the edge-disjoint decomposition $\mathcal{E}'=\bigcup_{e\in\mathcal{E}}\mathcal{C}(e)$ together imply that $Y'$ is precisely the union of those chains $\mathcal{C}(e)$ with $e\in P(Y')$; hence $T(P(Y'))=Y'$. Thus $P=T^{-1}$ on $\mathcal{A}'$, and $T:\mathcal{A}\to\mathcal{A}'$ is a bijection. This proves~(ii) and completes the proof.
\end{proof}

\subsection{Check Node Splitting}

Having introduced Algorithm~\ref{alg:hyp} for reducing the cardinality of every hyperedge to at most three, we now develop an analogous transformation for controlling check-node degrees in Tanner graphs. A general Tanner graph may contain check nodes of arbitrarily high degree, whereas the subsequent reductions require check-node degrees to be bounded before the desired regularity conditions can be imposed. Algorithm~\ref{alg:chec} therefore replaces each check node of degree greater than three with a chain of degree-three check nodes connected through auxiliary variable nodes. This construction is the Tanner-graph counterpart of the chain decomposition in Algorithm~\ref{alg:hyp} and preserves the original parity constraint through a collection of local parity constraints.

\begin{algorithm}[htbp]
	\caption{Check Node Splitting for Degree Reduction to Three}
	\label{alg:chec}
	\begin{algorithmic}[1]
		\Require Bipartite graph $\mathcal{G} = (L, R, E)$, where $L$ and $R$ denote variable and check nodes, respectively
		\Ensure A bipartite graph $\mathcal{G} = (L\cup L_{\mathrm{aux}}, R', E')$ with $\deg(r) \leq 3$ for all $r \in R'$
		\State Initialize $L_{\mathrm{aux}} \gets \emptyset$, $R' \gets \emptyset$, $E' \gets \emptyset$
		\For{each check node $r \in R$}
		\State Let $k \gets \deg(r)$
		\LongState{Enumerate neighbors of $r$ as $u_1, u_2, \dots, u_k$, i.e., $N(r) = \{u_1, \dots, u_k\}$}
		\If{$k > 3$}
		\State Introduce auxiliary variable nodes $y_1, y_2, \dots, y_{k-3}$
		\LongState{Construct split check nodes $r_1, \dots, r_{k-2}$ with updated neighborhoods:}
		\State $N(r_1) \gets \{u_1, u_2, y_1\}$
		\State $N(r_{k-2}) \gets \{y_{k-3}, u_{k-1}, u_k\}$
		\For{$j \gets 2$ \textbf{to} $k-3$}
		\State $N(r_j) \gets \{y_{j-1}, u_{j+1}, y_j\}$
		\EndFor
		\State Update graph components:
		\State $R' \gets R' \cup \{r_1, \dots, r_{k-2}\}$
		\State $L_{\mathrm{aux}} \gets L_{\mathrm{aux}} \cup \{y_1, \dots, y_{k-3}\}$
		\State $E' \gets E'\cup \left\{\{u, r_j\} \mid 1\leq j\leq k-2,\ u \in N(r_j)\right\}$
        \Else
        \State $R' \gets R'\cup \{r\}$
        \State $E' \gets E' \cup \left\{\{u, r\} \mid u \in N(r)\right\}$
		\EndIf
		\EndFor
		\State \textbf{return} $(L\cup L_{\mathrm{aux}}, R', E')$
	\end{algorithmic}
\end{algorithm}

Fig.~\ref{fig:split} illustrates the check node splitting operation at the Tanner-graph level for a check node $r$ of degree five. The original node $r$, adjacent to $u_1^r,\ldots,u_5^r$, is replaced by a chain of three check nodes $r_1^r,r_2^r,r_3^r$. The auxiliary variables $y_1^r$ and $y_2^r$ connect consecutive check nodes in the chain: each original variable is adjacent to exactly one split check node, whereas each auxiliary variable is adjacent to exactly two consecutive split check nodes. Thus, the local even-parity conditions along the chain reproduce the parity condition imposed by the original check node. Computationally, Algorithm~\ref{alg:chec} iterates through each check node $r \in R$. For any check node with $\deg(r) > 3$, it introduces $\deg(r)-3$ auxiliary variable nodes and constructs $\deg(r)-2$ new split check nodes, completing the local update in $\mathcal{O}(\deg(r))$ operations. The total running time is therefore $\mathcal{O}\left( \sum_{r \in R} \deg(r) \right) = \mathcal{O}(|E|)$, where $|E|$ is the number of edges in the initial bipartite graph. The mapping between the $(a,0)$-trapping sets of the original and split graphs is formalized in Lemma~\ref{lem:3.2}.

\begin{figure*}[h]
	\centering
	\begin{tikzpicture}[
		>=stealth,
		scale=0.95,
		every node/.style={font=\small},
		var/.style={
			circle,
			draw=black,
			fill=blue!12,
			minimum size=0.72cm,
			inner sep=0pt,
			line width=0.8pt
		},
		aux/.style={
			circle,
			draw=black,
			fill=orange!18,
			minimum size=0.72cm,
			inner sep=0pt,
			line width=0.8pt
		},
		chk/.style={
			rectangle,
			rounded corners=2pt,
			draw=black,
			fill=green!15,
			minimum size=0.72cm,
			inner sep=0pt,
			line width=0.8pt
		},
		edge/.style={draw=black, line width=0.9pt}
	]

		% ================= Left: original check node =================
		\node[var] (u1) at (-2.4,1.6) {$u_1^r$};
		\node[var] (u2) at (-1.2,1.6) {$u_2^r$};
		\node[var] (u3) at ( 0.0,1.6) {$u_3^r$};
		\node[var] (u4) at ( 1.2,1.6) {$u_4^r$};
		\node[var] (u5) at ( 2.4,1.6) {$u_5^r$};

		\node[chk] (r) at (0,0) {$r$};

		\foreach \x in {u1,u2,u3,u4,u5}{
			\draw[edge] (\x)--(r);
		}

		% label
		\node[font=\normalsize] at (0,2.55) {Original};

		% ================= Arrow =================
		\node[font=\normalsize] at (4.8,0.8) {Split};
		\draw[->, line width=1pt] (3.7,0.45) -- (5.9,0.45);

		% ================= Right: split chain =================
		% top row: original + auxiliary variables
		\node[var] (v1) at (7.0,1.8) {$u_1^r$};
		\node[var] (v2) at (8.2,1.8) {$u_2^r$};
		\node[aux] (y1) at (9.4,1.8) {$y_1^r$};
		\node[var] (v3) at (10.6,1.8) {$u_3^r$};
		\node[aux] (y2) at (11.8,1.8) {$y_2^r$};
		\node[var] (v4) at (13.0,1.8) {$u_4^r$};
		\node[var] (v5) at (14.2,1.8) {$u_5^r$};

		% bottom row: split checks
		\node[chk] (r1) at (8.2,0) {$r_1^r$};
		\node[chk] (r2) at (10.6,0) {$r_2^r$};
		\node[chk] (r3) at (13.0,0) {$r_3^r$};

		% edges
		\draw[edge] (v1)--(r1);
		\draw[edge] (v2)--(r1);
		\draw[edge] (y1)--(r1);

		\draw[edge] (y1)--(r2);
		\draw[edge] (v3)--(r2);
		\draw[edge] (y2)--(r2);

		\draw[edge] (y2)--(r3);
		\draw[edge] (v4)--(r3);
		\draw[edge] (v5)--(r3);

		% label
		\node[font=\normalsize] at (10.6,2.75) {After splitting};

	\end{tikzpicture}
	\caption{Check node splitting operation for a check node $r$ of degree $5$. The original neighbors
	$u_1^r,\ldots,u_5^r$ are replaced by the split check nodes
	$r_1^r,r_2^r,r_3^r$ and auxiliary variables $y_1^r,y_2^r$ according to Algorithm~\ref{alg:chec}.}
	\label{fig:split}
\end{figure*}

\begin{lemma}\label{lem:3.2}
	Let $\mathcal{G}=(L,R,E)$ be a bipartite graph. Let
	$\mathcal{G}'=(L'=L\cup L_{\mathrm{aux}},R',E')$ be the
	bipartite graph obtained from $\mathcal{G}$ by applying the check
	node splitting procedure in Algorithm~\ref{alg:chec}, where
	$L_{\mathrm{aux}}$ is the set of auxiliary variable nodes,
	and $R'$ {\color{blue}is} the resulting set of split check nodes.
	Let $\mathcal{A}$ and $\mathcal{A}'$ denote the families of all
	$(a,0)$-TSs of $\mathcal{G}$ and $\mathcal{G}'$, respectively.
	Then the following statements hold.
	\begin{enumerate}

		\item[\textnormal{(i)}]
		\textbf{Nonempty projection onto the original variables.}
		For every $S'\in\mathcal{A}'$, one has
		\[
			S'\cap L\neq\varnothing.
		\]

		\item[\textnormal{(ii)}]
		\textbf{Bijection of $(a,0)$-TSs.}
		For $S\in\mathcal{A}$, define
		\[
		Q(S)=S\cup
		\left\{
			y_j^r\ \middle|\
			\begin{array}{l}
				k_r=\deg_{\mathcal G}(r)>3,\ 1\leq j\leq k_r-3\\
				r\in R,\
				\displaystyle
				\sum_{i=1}^{j+1}\boldsymbol{1}_S(u_i^r)
				\equiv 1\pmod 2
			\end{array}
		\right\}.
		\]
		Then $Q:\mathcal{A}\to\mathcal{A}'$ is a bijection, whose
		inverse is the projection
		\[
			Q^{-1}(S')=S'\cap L.
		\]
	\end{enumerate}
\end{lemma}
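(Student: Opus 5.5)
The argument works locally, one original check node at a time, exactly as in Lemma~\ref{lem:3.1} but with roles dualized. Fix $r\in R$ with $k_r=\deg_{\mathcal G}(r)>3$, with neighbors $u_1^r,\dots,u_{k_r}^r$, split checks $r_1^r,\dots,r_{k_r-2}^r$, and auxiliary variables $y_1^r,\dots,y_{k_r-3}^r$. Each auxiliary variable $y_j^r$ is adjacent only to $r_j^r$ and $r_{j+1}^r$. Each original variable adjacent to $r$ is adjacent to exactly one split check of $r$. Checks of degree at most three are copied unchanged. For any $S'\subseteq L'$, the parity conditions at $r_1^r,\dots,r_{k_r-2}^r$ then read, over $\mathbb{F}_2$,
\[
\boldsymbol{1}_{S'}(y_j^r)=\boldsymbol{1}_{S'}(y_{j-1}^r)+\boldsymbol{1}_{S'}(u_{j+1}^r)\quad (2\le j\le k_r-3),
\]
with $\boldsymbol{1}_{S'}(y_1^r)=\boldsymbol{1}_{S'}(u_1^r)+\boldsymbol{1}_{S'}(u_2^r)$ and the last check requiring $\boldsymbol{1}_{S'}(y_{k_r-3}^r)=\boldsymbol{1}_{S'}(u_{k_r-1}^r)+\boldsymbol{1}_{S'}(u_{k_r}^r)$.

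The key step is to solve this recursion. Induction on $j$ shows that the first $k_r-3$ equations hold if and only if
\[
\boldsymbol{1}_{S'}(y_j^r)=\sum_{i=1}^{j+1}\boldsymbol{1}_{S'}(u_i^r) \quad\text{for all } j.
\]
So the auxiliary memberships are uniquely determined by $S'\cap L$. Substituting into the last equation shows it is then equivalent to $\sum_{i=1}^{k_r}\boldsymbol{1}_{S'}(u_i^r)\equiv 0$, which is the original parity condition at $r$. Consequently, $S'$ satisfies all checks of $\mathcal G'$ if and only if $S'\cap L$ satisfies all checks of $\mathcal G$ and each $y_j^r$ has the prefix-sum membership above.

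Statement (i) follows. Suppose $S'\in\mathcal A'$ had $S'\cap L=\varnothing$. Then every prefix sum is zero, so every $y_j^r\notin S'$, hence $S'=\varnothing$, contradicting nonemptiness. For (ii), take $S\in\mathcal A$. The set $Q(S)$ has exactly the prefix-sum memberships and projects to $S$, which satisfies the original checks, so $Q(S)$ is even at every check of $\mathcal G'$. It is also nonempty because it contains $S$. Hence $Q(S)\in\mathcal A'$.

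Conversely, for $S'\in\mathcal A'$, the projection $S'\cap L$ is even in $\mathcal G$ and is nonempty by (i), so it lies in $\mathcal A$. The uniqueness of the auxiliary memberships gives $Q(S'\cap L)=S'$. Trivially $Q(S)\cap L=S$. So $Q$ is a bijection with the stated inverse.

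The main obstacle is bookkeeping rather than ideas. One must check that distinct original checks use disjoint auxiliary sets, and that an original variable's other check adjacencies are untouched, so the local analyses combine independently. The boundary cases $j=1$ and the last check also need care, as does the case $k_r=4$, where there is a single auxiliary variable and the middle loop is empty.
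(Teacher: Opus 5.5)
Your proof is correct and follows essentially the same route as the paper. Both arguments solve the chain recursion to obtain the prefix-sum formula for $\boldsymbol{1}_{S'}(y_j^r)$, use forced-zero propagation along each chain for (i), and verify both compositions of $Q$ with the projection $S'\mapsto S'\cap L$. The only difference is packaging: you fold both directions into a single equivalence by substituting the solved recursion into the last split check. The paper instead verifies $Q(S)\in\mathcal{A}'$ check by check, and recovers the original parity for $S'\cap L$ by summing all chain congruences so that the auxiliary terms cancel.
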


\begin{proof}
	In the parity formulation of $(a,0)$-TSs adopted in this paper,
	a subset $S$ of variable nodes belongs to the $(a,0)$-TS family
	if and only if $S$ is nonempty and every check node is adjacent
	to an even number of variable nodes in $S$.

	For each check node $r\in R$ with
	$k_r=\deg_{\mathcal G}(r)>3$, let its neighbors be ordered as
	\[
		N_{\mathcal G}(r)
		=\{u_1^r,u_2^r,\ldots,u_{k_r}^r\}.
	\]
	Algorithm~\ref{alg:chec} splits $r$ into check nodes
	$r_1^r,\ldots,r_{k_r-2}^r$ and introduces auxiliary variable
	nodes $y_1^r,\ldots,y_{k_r-3}^r$, where
	\[
		N_{\mathcal G'}(r_1^r)
		=\{u_1^r,u_2^r,y_1^r\},
	\]
	\[
		N_{\mathcal G'}(r_j^r)
		=\{y_{j-1}^r,u_{j+1}^r,y_j^r\},
		\qquad 2\leq j\leq k_r-3,
	\]
	and
	\[
		N_{\mathcal G'}(r_{k_r-2}^r)
		=\{y_{k_r-3}^r,u_{k_r-1}^r,u_{k_r}^r\}.
	\]

	\noindent\textnormal{(i)} Let $S'\in\mathcal A'$. Suppose, for contradiction, that
	\[
		S'\cap L=\varnothing.
	\]
	Consider any split chain corresponding to an original check node
	$r$ with $k_r>3$. Since neither $u_1^r$ nor $u_2^r$ belongs to
	$S'$, the even-neighbor condition at the first split check node
	$r_1^r$ gives
	\[
		\boldsymbol{1}_{S'}(y_1^r)
		\equiv
		\boldsymbol{1}_{S'}(u_1^r)
		+\boldsymbol{1}_{S'}(u_2^r)
		\equiv 0\pmod 2.
	\]
	Thus $y_1^r\notin S'$. Next, for every
	$2\leq j\leq k_r-3$, the parity condition at $r_j^r$ gives
	\[
		\boldsymbol{1}_{S'}(y_j^r)
		\equiv
		\boldsymbol{1}_{S'}(y_{j-1}^r)
		+\boldsymbol{1}_{S'}(u_{j+1}^r)
		\pmod 2.
	\]
	Because $u_{j+1}^r\notin S'$, induction on $j$ shows that
	$y_j^r\notin S'$ for every $1\leq j\leq k_r-3$. This argument
	also covers the case $k_r=4$, in which the only auxiliary
	variable $y_1^r$ is already forced to be absent by the first
	split check node.

	Every auxiliary variable belongs to one of these split chains.
	Hence no auxiliary variable belongs to $S'$. Together with
	$S'\cap L=\varnothing$, this implies $S'=\varnothing$, contradicting
	the convention that an $(a,0)$-TS is nonempty. Therefore
	\[
		S'\cap L\neq\varnothing.
	\]
	This proves statement~\textnormal{(i)}.

	\medskip
	\noindent\textnormal{(ii)} For $S\in\mathcal{A}$, define
	\begin{align*}
	Q(S)=S\cup\bigl\{y_j^r \mid {}&
	  \sum_{i=1}^{j+1}\boldsymbol{1}_S(u_i^r)
	  \equiv 1 \pmod{2},\\
	& 1\leq j\leq k_r-3,\quad r\in R,\quad k_r>3
	\bigr\}.
	\end{align*}
	Thus $Q(S)$ retains all original variable nodes of $S$, so that
	\[
		Q(S)\cap L=S,
	\]
	and each auxiliary node $y_j^r$ is included if and only if the
	prefix sum
	$\sum_{i=1}^{j+1}\boldsymbol{1}_S(u_i^r)$ is odd. Equivalently,
	\[
	\boldsymbol{1}_{Q(S)}(y_j^r)
	\equiv
	\sum_{i=1}^{j+1}\boldsymbol{1}_S(u_i^r)
	\pmod 2.
	\]

	We first prove that $Q(S)\in\mathcal{A}'$. Since
	$S\in\mathcal{A}$, every check node in $\mathcal{G}$ is adjacent
	to an even number of variable nodes in $S$. For every unsplit
	check node, its neighborhood in $\mathcal{G}'$ is identical to
	that in $\mathcal{G}$. Since $Q(S)\cap L=S$, such a check node
	remains adjacent to an even number of selected variables in
	$Q(S)$.

	Consider a split chain associated with some $r\in R$ satisfying
	$\deg_{\mathcal G}(r)>3$. For the first split check node $r_1^r$,
	the number of adjacent variables belonging to $Q(S)$ has parity
	\begin{align*}
	&\boldsymbol{1}_{Q(S)}(u_1^r)
	+\boldsymbol{1}_{Q(S)}(u_2^r)
	+\boldsymbol{1}_{Q(S)}(y_1^r)\\
	\equiv{}&
	\boldsymbol{1}_S(u_1^r)
	+\boldsymbol{1}_S(u_2^r)
	+\sum_{i=1}^{2}\boldsymbol{1}_S(u_i^r)\\
	\equiv{}&0\pmod 2.
	\end{align*}

	For an intermediate split check node $r_j^r$,
	$2\leq j\leq k_r-3$, the corresponding parity is
	\begin{align*}
	&\boldsymbol{1}_{Q(S)}(y_{j-1}^r)
	+\boldsymbol{1}_{Q(S)}(u_{j+1}^r)
	+\boldsymbol{1}_{Q(S)}(y_j^r)\\
	\equiv{}&
	\sum_{i=1}^{j}\boldsymbol{1}_S(u_i^r)
	+\boldsymbol{1}_S(u_{j+1}^r)
	+\sum_{i=1}^{j+1}\boldsymbol{1}_S(u_i^r)\\
	\equiv{}&0\pmod 2.
	\end{align*}

	For the last split check node $r_{k_r-2}^r$, the corresponding
	parity is
	\begin{align*}
	&\boldsymbol{1}_{Q(S)}(y_{k_r-3}^r)
	+\boldsymbol{1}_{Q(S)}(u_{k_r-1}^r)
	+\boldsymbol{1}_{Q(S)}(u_{k_r}^r)\\
	\equiv{}&
	\sum_{i=1}^{k_r-2}\boldsymbol{1}_S(u_i^r)
	+\boldsymbol{1}_S(u_{k_r-1}^r)
	+\boldsymbol{1}_S(u_{k_r}^r)\\
	\equiv{}&
	\sum_{i=1}^{k_r}\boldsymbol{1}_S(u_i^r)\\
	\equiv{}&0\pmod 2,
	\end{align*}
	where the last congruence holds because $S\in\mathcal A$
	satisfies the parity condition at the original check node $r$.
	Hence every check node in $\mathcal G'$ has even intersection
	with $Q(S)$, and therefore $Q(S)\in\mathcal A'$.
	Thus $Q$ is a well-defined map from
	$\mathcal A$ to $\mathcal A'$.

	Define
	\[
		P:\mathcal A'\longrightarrow \mathcal A,
		\qquad
		P(S')=S'\cap L.
	\]
	By statement~\textnormal{(i)}, $P(S')$ is nonempty for every
	$S'\in\mathcal A'$.
	We now show that $P(S')\in\mathcal A$. For an unsplit check node,
	this follows immediately because its neighborhood contains only
	original variable nodes and is unchanged by the transformation.

	For an original check node $r$ that is split, the parity
	constraints in $\mathcal G'$ are
	\begin{align*}
		\boldsymbol{1}_{S'}(u_1^r)
		+\boldsymbol{1}_{S'}(u_2^r)
		+\boldsymbol{1}_{S'}(y_1^r)
		&\equiv 0\pmod 2,\\
		\boldsymbol{1}_{S'}(y_{j-1}^r)
		+\boldsymbol{1}_{S'}(u_{j+1}^r)
		+\boldsymbol{1}_{S'}(y_j^r)
		&\equiv 0\pmod 2,
		&&2\leq j\leq k_r-3,\\
		\boldsymbol{1}_{S'}(y_{k_r-3}^r)
		+\boldsymbol{1}_{S'}(u_{k_r-1}^r)
		+\boldsymbol{1}_{S'}(u_{k_r}^r)
		&\equiv 0\pmod 2.
	\end{align*}
	The middle family is empty when $k_r=4$. Adding all applicable
	congruences cancels every auxiliary variable $y_j^r$, because
	each appears exactly twice, and yields
	\[
		\sum_{i=1}^{k_r}\boldsymbol{1}_{S'}(u_i^r)
		\equiv 0\pmod 2.
	\]
	Since $u_i^r\in L$, this is exactly the parity condition for
	$P(S')$ at the original check node $r$. Hence
	$P(S')\in\mathcal A$.

	Clearly,
	\[
		P(Q(S))=S
	\]
	for every $S\in\mathcal A$. Conversely, let
	$S'\in\mathcal A'$. The first congruence in the split chain gives
	\[
	\boldsymbol{1}_{S'}(y_1^r)
	\equiv
	\boldsymbol{1}_{S'}(u_1^r)
	+\boldsymbol{1}_{S'}(u_2^r)
	\pmod 2,
	\]
	and the $j$th intermediate congruence gives recursively
	\[
	\boldsymbol{1}_{S'}(y_j^r)
	\equiv
	\boldsymbol{1}_{S'}(y_{j-1}^r)
	+\boldsymbol{1}_{S'}(u_{j+1}^r)
	\pmod 2,
	\qquad 2\leq j\leq k_r-3.
	\]
	Therefore
	\[
	\boldsymbol{1}_{S'}(y_j^r)
	\equiv
	\sum_{i=1}^{j+1}\boldsymbol{1}_{S'}(u_i^r)
	\overset{(a)}{=}
	\sum_{i=1}^{j+1}\boldsymbol{1}_{P(S')}(u_i^r)
	\equiv
	\boldsymbol{1}_{Q(P(S'))}(y_j^r)
	\pmod 2.
	\]

	Equality~$(a)$ holds because $u_i^r\in L$ and
	$P(S')=S'\cap L$, so
	\[
		\boldsymbol{1}_{S'}(u_i^r)
		=
		\boldsymbol{1}_{P(S')}(u_i^r)
	\]
	for every $i$.
	Thus every auxiliary variable has the same membership status in
	$Q(P(S'))$ as in $S'$, while the original variables agree by
	the definition of $P$. Hence
	\[
		Q(P(S'))=S'.
	\]
	Consequently, $P=Q^{-1}$ on $\mathcal A'$, and $Q$ is a
	bijection. This proves statement~\textnormal{(ii)}.
\end{proof}

\subsection{Variable Replication}

% 这里还要再润色一下，第一句话可以不用改，后面需要说一下前两个算法的问题：都会产生变量数增加的现象，并且这会导致两个原来的非零码字的相对大小与转换后这两个的相对大小不一致。为了避免这种情况的发生我们才引入这个算法，对应于每个原始变量，我们复制足够多份来消除多余变量的影响。
The preceding two transformations control hyperedge cardinalities and check node degrees. However, both transformations increase the number of variable-side objects: hyperedge decomposition replaces a single hyperedge by several component hyperedges, while check node splitting introduces auxiliary variable nodes. Although the corresponding parity structures are preserved bijectively, Hamming weights are not necessarily preserved. In particular, because the number of additional selected variables may depend on the structure of a codeword, two nonzero codewords can have one weight ordering before the transformation and a different ordering after it. This creates a difficulty for minimum-distance reductions, where the relative weights of feasible nonzero codewords must be controlled.

To overcome this difficulty, we use variable replication as a uniform weight-amplification mechanism. For each original variable, we create a sufficiently large, common even number of replication variables and force all of them to have exactly the same membership status as the original variable in every $(a,0)$-TS. Consequently, selecting an original variable contributes the same large baseline amount to the transformed weight, whereas the nonuniform contribution of the auxiliary variables introduced by the preceding transformations remains bounded. By choosing the replication count sufficiently large, this auxiliary contribution can no longer change the weight ordering induced by the original variables.

Algorithm~\ref{alg:expan} implements this mechanism locally while preserving $3$-left regularity. Given a target variable node $v$, it reroutes two edges incident with $v$ and connects the new variables $x_1,x_2,\ldots,x_m$ through auxiliary degree-two check nodes. The resulting parity constraints force $v,x_1,x_2,\ldots,x_m$ to be either all selected or all unselected in every $(a,0)$-TS. Lemma~\ref{lem:3.3} formalizes this equivalent-variable property and establishes the induced bijection between the $(a,0)$-TS families before and after replication.

\begin{algorithm}[h]
	\caption{Variable Replication with Degree-Bounded Check Node Insertion}
	\label{alg:expan}
	\begin{algorithmic}[1]
		\Require 3-left regular bipartite graph $\mathcal{G} = (L, R, E)$, target variable $v \in L$, replication count $m = 2k$ ($k \in \mathbb{Z}^+$)
		\Ensure A 3-left regular bipartite graph $\mathcal{G} = (L\cup\{x_1,x_2,\ldots,x_m\}, 
        R\cup \{y_1,\ldots, y_{m+1}, y_{3,4}, y_{5,6} \ldots, y_{m-1,m}\}, E')$ containing $m$ mutually equivalent variables
		\State Initialize $E' \gets E$
		\State Identify neighbors of $v$: $N(v) = \{r_1, r_2, r_3\}$
		\State Introduce auxiliary variables $x_1, x_2$
		\State Update edges: $E' \gets E' \setminus \{\{v, r_1\}, \{v, r_2\}\} \cup \{\{x_1, r_1\}, \{x_2, r_2\}\}$
		\State Introduce auxiliary check nodes $y_1, y_2$
		\State Update edges: $E' \gets E' \cup \{\{x_1, y_1\}, \{x_2, y_2\}, \{v, y_1\}, \{v, y_2\}\}$
		\For{$i \gets 4,6,\ldots,m$}
		\LongState{Introduce variables $x_{i-1}, x_i$ and check nodes $y_{i-1}, y_i, y_{i-1,i} \in R'$}
		\LongState{Update edges: $E' \gets E' \cup \left\{
        \begin{array}{l}
        (x_{i-3}, y_{i-1}), (x_{i-1}, y_{i-1}),\\
        (x_{i-2}, y_i), (x_i, y_i),\\
        (x_{i-1}, y_{i-1,i}), (x_i, y_{i-1,i})
        \end{array}
        \right\}$}
		\EndFor
		\State Introduce terminal check nodes $y_{m+1}$
		\State Update edges: $E' \gets E' \cup \{\{x_{m-1}, y_{m+1}\}, \{x_m, y_{m+1}\}\}$
		\State \textbf{return} $(L\cup\{x_1,x_2,\ldots,x_m\}, R\cup \{y_1,\ldots, y_{m+1}, y_{3,4}, y_{5,6} \ldots, y_{m-1,m}\}, E')$
	\end{algorithmic}
\end{algorithm}

Fig.~\ref{fig:variable_expand} depicts the resulting topological structure. The process introduces auxiliary check nodes that bridge pairs of replication variables, creating an extended local network stemming from the target node $v$. The procedure is designed to execute in a single pass. For a requested replication factor $m$, the loop iterates $\mathcal{O}(m)$ times, performing a constant number of edge, variable, and check node insertions at each step. Consequently, the running time is $\mathcal{O}(m)$, linear in the number of clones generated. The corresponding correctness properties, demonstrating that these replicated variables are deterministically bound to the state of the original node under $(a,0)$-TS conditions, are verified in Lemma~\ref{lem:3.3}.

\begin{figure*}[h]
	\centering
	\begin{tikzpicture}[
		>=stealth,
		xnode/.style={
			circle,
			draw=black,
			fill=blue!12,
			minimum size=0.82cm,
			inner sep=0pt
		},
		vnode/.style={
			circle,
			draw=black,
			fill=orange!17,
			minimum size=0.82cm,
			inner sep=0pt
		},
		ynode/.style={
			rectangle,
			draw=black,
			fill=green!15,
			minimum size=0.82cm,
			inner sep=0pt
		},
		rnode/.style={
			rectangle,
			draw=black,
			fill=purple!12,
			minimum size=0.82cm,
			inner sep=0pt
		},
		group/.style={
			draw=black,
			dashed,
			rounded corners=2pt,
			line width=0.6pt
		},
		scale=0.82
	]
		
		\node[vnode] (v-top) at (-1.65,6.0) {$v$};
		\node[rnode] (r1-top) at (-4.45,4.25) {$r_1$};
		\node[rnode] (r2-top) at (-1.65,4.25) {$r_2$};
		\node[rnode] (r3-top) at ( 1.15,4.25) {$r_3$};
		
		\draw (v-top) -- (r1-top);
		\draw (v-top) -- (r2-top);
		\draw (v-top) -- (r3-top);
		
		\draw[->] (-1.65,3.55) -- (-1.65,2.75);
		
		\draw[group] (-10.15,-2.20) rectangle (-4.00,2.20);
		\draw[group] ( -3.65,-2.20) rectangle ( 3.80,2.20);
		\draw[group] (  4.15,-2.20) rectangle ( 6.85,2.20);
		
		\node[fill=white,inner sep=2pt,font=\small]
			at (-7.075,2.20)
			{Before the \textbf{for} loop};
		
		\node[fill=white,inner sep=2pt,font=\small]
			at (0.075,2.20)
			{\textbf{for} loop ($i=4,6$)};
		
		\node[fill=white,inner sep=2pt,font=\small]
			at (5.50,2.20)
			{After the \textbf{for} loop};
		
		\node[rnode] (r1) at (-9.25, 1.45) {$r_1$};
		\node[rnode] (r3) at (-9.25, 0.00) {$r_3$};
		\node[rnode] (r2) at (-9.25,-1.45) {$r_2$};
		
		\node[xnode] (x1) at (-7.20, 1.45) {$x_1$};
		\node[vnode] (v)  at (-7.20, 0.00) {$v$};
		\node[xnode] (x2) at (-7.20,-1.45) {$x_2$};
		
		\node[ynode] (y1) at (-4.90, 0.72) {$y_1$};
		\node[ynode] (y2) at (-4.90,-0.72) {$y_2$};
		
		\draw (r1) -- (x1);
		\draw (r3) -- (v);
		\draw (r2) -- (x2);
		
		\draw (x1) -- (y1);
		\draw (v)  -- (y1);
		\draw (v)  -- (y2);
		\draw (x2) -- (y2);
		
		\node[ynode] (y3)  at (-2.75, 1.45) {$y_3$};
		\node[ynode] (y4)  at (-2.75,-1.45) {$y_4$};
		\node[xnode] (x3)  at (-0.75, 1.45) {$x_3$};
		\node[xnode] (x4)  at (-0.75,-1.45) {$x_4$};
		\node[ynode] (y34) at (-0.75, 0.00) {$y_{3,4}$};
		
		\draw (x1) -- (y3);
		\draw (y3) -- (x3);
		\draw (x2) -- (y4);
		\draw (y4) -- (x4);
		\draw (x3) -- (y34);
		\draw (y34) -- (x4);

		\node[ynode] (y5) at (1.25, 1.45) {$y_5$};
		\node[ynode] (y6) at (1.25,-1.45) {$y_6$};
		\node[xnode] (x5) at (3.15, 1.45) {$x_5$};
		\node[xnode] (x6) at (3.15,-1.45) {$x_6$};
		\node[ynode] (y56) at (3.15, 0.00) {$y_{5,6}$};
		
		\draw (x3) -- (y5);
		\draw (y5) -- (x5);
		\draw (x4) -- (y6);
		\draw (y6) -- (x6);
		\draw (x5) -- (y56);
		\draw (y56) -- (x6);
		
		\node[ynode] (y7) at (5.55,0) {$y_7$};
		
		\draw (x5) -- (y7);
		\draw (x6) -- (y7);
		
	\end{tikzpicture}
	
	\caption{Example of applying Algorithm~\ref{alg:expan} to a target
		variable node $v$ with replication count $m=6$. Circles represent
		variable nodes, squares represent check nodes, and the three dashed
		boxes contain the operations before, during, and after the
		\textbf{for} loop, respectively.}
	\label{fig:variable_expand}
\end{figure*}

\begin{lemma}\label{lem:3.3}
	Let $\mathcal{G}=(L,R,E)$ be a $3$-left regular bipartite graph. Fix $v\in L$, and apply the variable replication procedure in Algorithm~\ref{alg:expan} to $v$ with an even integer $m\ge 2$, producing replication variable nodes $x_1,x_2,\ldots,x_m$. Let $\mathcal{G}'=(L',R',E')$ be the resulting $3$-left regular bipartite graph, where $L'=L\cup\{x_1,x_2,\ldots,x_m\}$. Let $\mathcal{A}$ and $\mathcal{A}'$ denote the families of all $(a,0)$-TSs of $\mathcal{G}$ and $\mathcal{G}'$, respectively. Then the following statements hold.
	\begin{enumerate}
		\item[\textnormal{(i)}] \textbf{Equivalent-variable constraint.} For every $S'\in\mathcal{A}'$,
		\[
		v\in S'\iff x_1\in S'\iff\cdots\iff x_m\in S'.
		\]
		Equivalently, $v$ and all replication variables are either selected simultaneously or excluded simultaneously in every $(a,0)$-TS of $\mathcal{G}'$.
		\item[\textnormal{(ii)}] \textbf{Bijection of $(a,0)$-TSs.} The map $F_{v,m}:\mathcal{A}\to\mathcal{A}'$ defined by
		\[
		F_{v,m}(S)=
		\begin{cases}
			S\cup\{x_1,x_2,\ldots,x_m\}, & \text{if } v\in S,\\
			S, & \text{if } v\notin S,
		\end{cases}
		\]
		is a bijection.
	\end{enumerate}
\end{lemma}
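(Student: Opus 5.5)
First I will record the local structure that Algorithm~\ref{alg:expan} produces; both parts of the lemma follow from it. In $\mathcal G'$ the node $v$ keeps its edge to $r_3$ and gains edges to $y_1,y_2$. The original checks $r_1,r_2$ have $v$ replaced by $x_1,x_2$, respectively. Every new check has degree two, with the following neighborhoods: $y_1=\{x_1,v\}$ and $y_2=\{x_2,v\}$; for each even $i$ with $4\le i\le m$, $y_{i-1}=\{x_{i-3},x_{i-1}\}$, $y_i=\{x_{i-2},x_i\}$ and $y_{i-1,i}=\{x_{i-1},x_i\}$; finally $y_{m+1}=\{x_{m-1},x_m\}$. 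I will also verify that the output is $3$-left regular: $x_1,x_2$ each get one original check, one of $y_1,y_2$, and one of $y_3,y_4$ (or $y_{m+1}$ if $m=2$), and each later $x_j$ gets degree three in the same way. This bookkeeping is needed only for the statement that $\mathcal G'$ is $3$-left regular; the parity argument uses only the degree-two checks.

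For (i), I use the fact that a degree-two check $\{a,b\}$ forces $\boldsymbol 1_{S'}(a)=\boldsymbol 1_{S'}(b)$ for every $S'\in\mathcal A'$. Let $\Gamma$ be the graph on $\{v,x_1,\dots,x_m\}$ whose edges are the neighborhoods of the new degree-two checks. Membership in $S'$ is constant on each connected component of $\Gamma$, so it suffices to show $\Gamma$ is connected. The checks $y_1,y_2$ connect $x_1,x_2$ to $v$. For each even $i\ge4$, induction on $i$ shows that $y_{i-1},y_i$ connect $x_{i-1},x_i$ to the component containing $x_{i-3},x_{i-2}$. This gives (i).

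For (ii), I first show $F_{v,m}(S)\in\mathcal A'$. It is nonempty because it contains $S$. Each new degree-two check sees either zero or two selected variables, since $v$ and all $x_j$ share a status. At $r_1$ (resp.\ $r_2$) the variable $v$ has been replaced by $x_1$ (resp.\ $x_2$), which has the same status as $v$, so the parity there equals the parity at $r_1$ (resp.\ $r_2$) in $\mathcal G$. All other original checks, including $r_3$, are unchanged. For the inverse I take $P(S')=S'\cap L$. By (i), the $x_j$ have the same status as $v$, so the parity at $r_1,r_2$ in $\mathcal G'$ equals the parity for $S'\cap L$ in $\mathcal G$; hence every original check is even for $S'\cap L$. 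Nonemptiness of $S'\cap L$ also follows from (i): if $S'\cap L=\varnothing$ then $v\notin S'$, so no $x_j$ is in $S'$ and $S'=\varnothing$. Finally, $P\circ F_{v,m}=\mathrm{id}$ is immediate, and $F_{v,m}\circ P=\mathrm{id}$ holds because by (i) $S'$ is determined by $S'\cap L$.

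The only step that needs real care is the connectivity of $\Gamma$ together with the exact edge lists from the algorithm, including the boundary case $m=2$. There the loop is empty and $y_{m+1}=y_3$ joins $x_1,x_2$. That edge is redundant for connectivity but is needed for the degree count. Everything else is routine parity bookkeeping, similar to Lemma~\ref{lem:3.2}.
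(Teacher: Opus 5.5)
Your proposal is correct and follows the paper's proof essentially step for step. For (i) you propagate equalities along the degree-two checks $y_1,y_2,y_{i-1},y_i$, which you phrase as connectivity of $\Gamma$. For (ii) you get well-definedness of $F_{v,m}$ from the substitution of $x_1,x_2$ for $v$ at $r_1,r_2$, and use the projection $S'\mapsto S'\cap L$ as the inverse. Your explicit check that $S'\cap L\neq\varnothing$ is a small point the paper leaves implicit, and it is a worthwhile addition.
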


\begin{proof}
	As in the preceding lemma, $(a,0)$-TSs are interpreted through the even-neighbor parity constraints at check nodes, and all congruences are over $\mathbb{F}_2$. Let $N_\mathcal{G}(v)=\{r_1,r_2,r_3\}$ be the three original neighbors of $v$, labelled as in Algorithm~\ref{alg:expan}.
	
	We first prove the equivalent-variable constraint. Let $S'\in\mathcal{A}'$. The auxiliary check nodes $y_1$ and $y_2$ have neighborhoods $\{v,x_1\}$ and $\{v,x_2\}$, respectively. Hence
	\[
	\boldsymbol{1}_{S'}(v)+\boldsymbol{1}_{S'}(x_1)\equiv 0,
	\qquad
	\boldsymbol{1}_{S'}(v)+\boldsymbol{1}_{S'}(x_2)\equiv 0,
	\]
	so $x_1$ and $x_2$ have the same membership status as $v$. Now consider any iteration of Algorithm~\ref{alg:expan} that introduces $x_{i-1}$ and $x_i$, where $i=4,6,\ldots,m$. The auxiliary check nodes $y_{i-1}$ and $y_i$ have neighborhoods $\{x_{i-3},x_{i-1}\}$ and $\{x_{i-2},x_i\}$, respectively. Their parity constraints imply
	\[
	\boldsymbol{1}_{S'}(x_{i-3})=\boldsymbol{1}_{S'}(x_{i-1}),
	\qquad
	\boldsymbol{1}_{S'}(x_{i-2})=\boldsymbol{1}_{S'}(x_i).
	\]
	Induction over $i=4,6,\ldots,m$ therefore yields
	\[
	\boldsymbol{1}_{S'}(x_1)=\boldsymbol{1}_{S'}(x_2)=\cdots=\boldsymbol{1}_{S'}(x_m)=\boldsymbol{1}_{S'}(v).
	\]
	The additional check nodes with neighborhoods $\{x_{i-1},x_i\}$, including the terminal check nodes, are then automatically satisfied and do not alter this conclusion. This proves (i).
	
	We next show that $F_{v,m}$ is well defined. Let $S\in\mathcal{A}$. The map $F_{v,m}$ preserves every original variable in $L$ and assigns each $x_i$ the same membership status as $v$. For every original check node other than possibly $r_1$ and $r_2$, the neighborhood contribution is unchanged. At $r_1$ and $r_2$, Algorithm~\ref{alg:expan} replaces the incidences of $v$ by incidences of $x_1$ and $x_2$, respectively; since $\boldsymbol{1}_{F_{v,m}(S)}(x_1)=\boldsymbol{1}_{F_{v,m}(S)}(x_2)=\boldsymbol{1}_{S}(v)$, the parity at these two check nodes is exactly the same as in $G$. All newly introduced auxiliary check nodes are adjacent to pairs of variables that have identical membership statuses under $F_{v,m}(S)$, and therefore each such check node is incident with either zero or two selected variables. Hence every check node of $\mathcal{G}'$ has even intersection with $F_{v,m}(S)$, and $F_{v,m}(S)\in\mathcal{A}'$.
	
	Define $P:\mathcal{A}'\to 2^L$ by $P(S')=S'\cap L$. We prove that $P(S')\in\mathcal{A}$ for every $S'\in\mathcal{A}'$. Let $S'\in\mathcal{A}'$. For check nodes whose neighborhoods are not modified by the replication, the parity condition for $P(S')$ is inherited directly from $S'$. For $r_1$ and $r_2$, the graph $G'$ contains $x_1$ and $x_2$ in place of $v$, respectively. By (i), $\boldsymbol{1}_{S'}(x_1)=\boldsymbol{1}_{S'}(x_2)=\boldsymbol{1}_{S'}(v)$. Replacing $x_1$ and $x_2$ by $v$ therefore preserves parity at $r_1$ and $r_2$ when passing from $G'$ back to $G$. The third neighbor $r_3$ of $v$ is unchanged. Thus every check node of $G$ has even intersection with $P(S')$, and $P(S')\in\mathcal{A}$.
	
	Finally, $P(F_{v,m}(S))=S$ for every $S\in\mathcal{A}$, since $P$ simply deletes the replication variables. Conversely, for any $S'\in\mathcal{A}'$, statement (i) shows that the memberships of $x_1,\ldots,x_m$ are uniquely determined by the membership of $v$. Therefore applying $F_{v,m}$ to $P(S')$ restores exactly the original statuses of all replication variables, while all original variables are preserved by $P$. Hence $F_{v,m}(P(S'))=S'$. Thus $P=F_{v,m}^{-1}$ on $\mathcal{A}'$, so $F_{v,m}$ is a bijection. This proves (ii) and completes the proof.
\end{proof}

The three algorithmic primitives detailed in this section comprise the core operational framework for the hardness reductions that follow. By operating locally and independently on individual structural elements without requiring global coordination, these algorithms can be efficiently composed in polynomial time. In Section~\ref{sec:left}, Algorithm~\ref{alg:hyp} and Algorithm~\ref{alg:expan} will be applied in tandem to transform arbitrary hypergraphs into specifically constrained uniform topologies. Section~\ref{sec:regular} will then incorporate Algorithm~\ref{alg:chec} alongside variable replication to strictly meet $(J,K)$-regularity conditions on both sides of the bipartition.

\section{Left Regular Minimum Distance Problem}
\label{sec:left}

This section investigates the MDP for Tanner graphs with a constant left degree $J$, which is equivalent to analyzing parity-check matrices with a constant column weight. Although the MDP is known to be $\mathrm{NP}$-complete for arbitrary linear codes, this result does not immediately determine its complexity under a fixed left-degree constraint. Indeed, reductions for the unrestricted problem may produce columns of nonuniform or unbounded weight and therefore do not necessarily yield instances belonging to any fixed left-regular class. A separate analysis is thus required to determine whether computational hardness persists when every variable node has the same prescribed degree and, if it does, to identify the smallest degree at which intractability arises.

For $J \leq 2$, the problem is solvable in polynomial time. When $J=1$, the associated hypergraph is $1$-uniform; finding a nonempty even cover reduces to checking whether a vertex is incident to two identical singleton hyperedges, which is natively impossible in simple hypergraphs. When $J=2$, the hypergraph degenerates into an ordinary graph (potentially with parallel edges). In this configuration, nonempty even covers correspond exclusively to nonempty unions of cycles. The minimum distance is therefore the length of the shortest cycle in the graph, where a pair of parallel edges counts as a cycle of length two. Since the shortest cycle can be found in polynomial time, $J=3$ represents the strict lower bound at which computational hardness can emerge.

Through the correspondence established in Proposition~\ref{prop:code-even}, codewords in a $J$-left regular Tanner graph correspond exactly to even covers in the associated $J$-uniform hypergraph, with the Hamming weight of a codeword equal to the number of hyperedges in the corresponding even cover. Using this correspondence, the following subsections show that the boundary at $J=3$ precisely separates the tractable and intractable cases for the standard at-most-weight problem. We first prove $\mathrm{NP}$-completeness for $3$-left regular Tanner graphs by reducing the unrestricted minimum even cover problem to its $3$-uniform restriction. The construction combines the intermediate transformations introduced in Section~\ref{sec:inter}, namely hyperedge decomposition, structural padding, and variable replication. Hyperedge decomposition and structural padding enforce $3$-uniformity, whereas variable replication ensures that the nonuniform auxiliary contributions introduced during the transformation cannot alter the ordering of the original solution sizes. Separately, the $\mathrm{W}[1]$-completeness of the exact-weight variant follows from the homogeneous exact-weight instances of Arvind et al.~\cite{Arvind16}, in which every variable occurs exactly three times. Finally, we extend both results from left degree three to every fixed $J>3$ by adding $J-3$ redundant global check nodes. Since every feasible support in a $3$-left regular Tanner graph has even cardinality, these additional checks impose no new constraint and preserve support sizes exactly.

\subsection{Intractability of the $3$-Left Regular Case}

This subsection establishes two complementary hardness results for Tanner graphs of left degree exactly three. The standard at-most-weight problem remains $\mathrm{NP}$-complete, while its exact-weight counterpart is $\mathrm{W}[1]$-complete when parameterized by the prescribed weight.

\begin{theorem}\label{l3}
	The decision problem \MDt\ is $\mathrm{NP}$-complete. Moreover, its exact-weight variant $\MDt^{=}$ is $\mathrm{W}[1]$-complete when parameterized by the prescribed weight $w$.
\end{theorem}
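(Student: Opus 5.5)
The plan has two parts: a Karp reduction from the unrestricted \MEC\ for the classical claim, and a direct transfer of the Arvind et al.\ instances for the parameterized claim.

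\textbf{Membership.} \MDt\ lies in $\mathrm{NP}$ because a nonzero vector $x$ with $Hx^{\top}=\boldsymbol{0}$ and $\operatorname{wt}(x)\le w$ is a certificate checkable in polynomial time. For $\MDt^{=}$, membership in $\mathrm{W}[1]$ follows from the known $\mathrm{W}[1]$-membership of the exact-weight homogeneous linear-equation problem over $\mathbb{F}_2$ (Downey et al.~\cite{Downey99}). The problem $\MDt^{=}$ is just the restriction of that problem to column weight three, and the identity map is an \textsc{FPT}-reduction.

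\textbf{$\mathrm{NP}$-hardness.} I reduce from \MEC, which is $\mathrm{NP}$-complete by Vardy~\cite{Vardy97} together with Remark~\ref{rem:equiv}. Given $(\mathcal{H},w)$, the construction has four steps.
\begin{enumerate}
\item \emph{Padding.} First turn hyperedges of size $1$ or $2$ into size-$\ge 3$ hyperedges without changing the even-cover family. I would do this by appending fresh private vertices attached through a small fixed $3$-uniform gadget. The gadget's only solutions must extend a selected (resp.\ unselected) padded edge in a unique way, so that a bijection of even covers holds in the style of Lemma~\ref{lem:3.1}.
\item \emph{Decomposition.} Apply Algorithm~\ref{alg:hyp}. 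By Lemma~\ref{lem:3.1}, even covers $Y$ of $\mathcal{H}$ correspond bijectively to even covers $T(Y)=\bigcup_{e\in Y}\mathcal{C}(e)$ of a $3$-uniform $\mathcal{H}'$. Equivalently, they correspond to $(a,0)$-TSs of a $3$-left regular Tanner graph, and $|T(Y)|=\sum_{e\in Y}|\mathcal{C}(e)|$.
\item \emph{Weight equalization.} The chain lengths $|\mathcal{C}(e)|\le |e|$ are nonuniform, which is exactly the problem described before Algorithm~\ref{alg:expan}. To fix it, pick one variable node $v_e$ in each chain and apply Algorithm~\ref{alg:expan} with an even count $m_e$, chosen so that $M_e:=|\mathcal{C}(e)|+m_e+(\text{gadget weight})\in\{M,M+1\}$ for a common $M>w$. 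The value $M+1$ is needed only because $m_e$ must be even.
\item \emph{Threshold.} By Lemma~\ref{lem:3.3} applied repeatedly, nonempty even covers $Y$ of $\mathcal{H}$ biject with $(a,0)$-TSs of the final graph, of weight in $[M|Y|,(M+1)|Y|]$. Setting the threshold $w'=(M+1)w$ then works: a solution of size $\le w$ maps to weight $\le w'$. Conversely, weight $\le w'$ forces $M|Y|\le (M+1)w<M(w+1)$ because $M>w$, hence $|Y|\le w$.
\end{enumerate}
Everything is polynomial, since $M=O(\max_e|e|+w)$ and we may assume $w\le |\mathcal{E}|$.

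\textbf{$\mathrm{W}[1]$-hardness of $\MDt^{=}$.} I would take the instances of Arvind et al.~\cite{Arvind16}: homogeneous $\mathbb{F}_2$ systems asking for a solution of exact weight $k$, in which each variable occurs exactly three times. Occurrences in distinct equations give column weight $3$ directly, so the parameter map is the identity. If a variable could occur twice in the same equation, those two occurrences cancel over $\mathbb{F}_2$, so I must check that the source construction avoids this, or else repair the column by a local padding as above without changing weights.

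\textbf{Expected obstacle.} I expect the main difficulty to be the padding step. A naive fresh vertex of degree one kills the hyperedge, and duplicated hyperedges create spurious weight-$2$ covers. So the gadget must be designed with care, and its forced weight contribution must be folded into the uniform $M_e$ bookkeeping.
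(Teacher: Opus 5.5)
\textbf{Padding gap.} Your $\mathrm{NP}$-hardness skeleton is the paper's own: reduce from \MEC, make the hypergraph $3$-uniform, apply Algorithm~\ref{alg:hyp} with Lemma~\ref{lem:3.1}, and replicate one representative per chain with an even count so that every original hyperedge costs $M$ or $M+1$. Cover sizes are then separated by intervals, and your inequality $(M+1)w<M(w+1)$ for $M>w$ is right. The gap is the padding step. It is the one ingredient not already supplied by Section~\ref{sec:inter}, yet you only list the properties you want and never construct it.

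Concretely, you need private vertices $P_e$ appended to $e$, and $3$-element gadget hyperedges on $P_e\cup Q_e$ whose incidence vectors are linearly independent over $\mathbb{F}_2$ and sum to $\boldsymbol{1}_{P_e}$. Independence forces the empty gadget selection when $e$ is unselected and a unique one when $e$ is selected. That gives the bijection of even covers and a constant forced cost $g_e$. Such gadgets do exist:
\begin{itemize}
\item for a size-$2$ edge, $\{u,v\}\mapsto\{u,v,p\}$ together with $\{p,q_1,q_2\},\{q_1,q_3,q_4\},\{q_2,q_3,q_4\}$, so $g_e=3$;
\item for a size-$1$ edge, $\{u\}\mapsto\{u,p_1,p_2\}$ together with $\{p_1,q_1,q_2\},\{p_2,q_1,q_2\}$, so $g_e=2$.
\end{itemize}

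The paper takes a different route here. It decomposes first and then uses one \emph{shared} gadget: size-$2$ edges receive $z_1$, size-$1$ edges receive $y_1,y_2$, and three hyperedges $a_1,a_2,a_3$ on $\{y_1,y_2,z_1,z_2\}$ cancel the auxiliary parity vector. That vector depends only on the parities of the numbers of selected size-$1$ and size-$2$ edges. The correction therefore costs at most $3$ in total rather than a per-edge constant, and it is absorbed by the threshold $w(B+1)+3$ with $B\ge w+4$. Your private gadgets, once written down, give exact per-edge costs and the cleaner interval $[M|Y|,(M+1)|Y|]$. The shared gadget is smaller but needs the additive slack.

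\textbf{$\mathrm{W}[1]$-membership gap.} Downey et al.~\cite{Downey99} prove $\mathrm{W}[1]$-\emph{hardness} of the general exact-weight (weight-distribution) problem. As far as I know, that problem is only known to lie in $\mathrm{W}[2]$. For dense matrices there is no evident way to verify $Hx^{\top}=\boldsymbol{0}$ for $w$ guessed columns within $f(w)$ steps, because unboundedly many rows are involved. So your identity map lands in a problem not known to be in $\mathrm{W}[1]$.

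Column weight three is exactly what makes membership easy, and the paper argues directly. A machine with the incidence relation hard-wired guesses $w$ distinct variable nodes and records their at most $3w$ check labels. It accepts iff every label occurs an even number of times, which takes $O(w^2)$ comparisons. This is an FPT-reduction to \textsc{Nondet TM Acceptance}.

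Your hardness transfer from the Arvind et al.\ instances~\cite{Arvind16} is the same as the paper's. Your check about repeated occurrences of a variable within one equation is a sensible precaution that the paper leaves implicit.
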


\begin{proof}
	We first prove $\mathrm{NP}$-completeness of the standard at-most-weight
	problem. The reduction has three stages. Starting from a hypergraph
	$\mathcal{H}$, Algorithm~\ref{alg:hyp} produces a hypergraph
	$\mathcal{H}_0$ whose hyperedges have size at most three. We then pad the
	hyperedges of sizes one and two to obtain a $3$-uniform hypergraph
	$\mathcal{H}_1$. Finally, after passing to the Tanner-graph representation,
	we replicate one designated variable from each decomposition chain and
	obtain a $3$-left regular Tanner graph $\mathcal{G}_2$. We construct
	bijections between the feasible families in consecutive stages and choose
	the replication counts so that an original cover of size $t$ is transformed
	into a feasible support whose size lies between $tB$ and $t(B+1)+3$.
	This interval separation yields the required Karp reduction.

	Let $(\mathcal{H},w)$ be an instance of \MEC, where
	$\mathcal{H}=(V,\mathcal{E})$. Replace $w$ by
	$\min\{w,|\mathcal{E}|\}$. Next, introduce a private vertex $q_\star$ and
	a dummy singleton hyperedge $e_\star=\{q_\star\}$. Since $q_\star$ is
	incident with no other hyperedge, $e_\star$ cannot belong to an even
	cover. Thus the feasible nonempty even covers are unchanged, while we may
	assume that $w<|\mathcal{E}|$. We continue to denote the padded instance
	by $(\mathcal{H},w)$.
    
	Apply Algorithm~\ref{alg:hyp} to $\mathcal{H}$ and let
	$\mathcal{H}_0=(V_0,\mathcal{E}_0)$ be the resulting hypergraph. For each
	original hyperedge $e\in\mathcal{E}$, let $\mathcal{C}(e)$ be the chain
	produced from $e$; an undecomposed hyperedge is regarded as a chain of
	length one. We write
	\[
	T_{\mathrm{hyp}}:
	\mathcal{A}(\mathcal{H})\longrightarrow
	\mathcal{A}(\mathcal{H}_0)
	\]
	for the instance of the map $T$ from Lemma~\ref{lem:3.1}. Explicitly,
	\[
	T_{\mathrm{hyp}}(X)=\bigcup_{e\in X}\mathcal{C}(e).
	\]
	Thus $T_{\mathrm{hyp}}$ replaces every selected original hyperedge by its
	complete decomposition chain. By Lemma~\ref{lem:3.1}, it is a bijection,
	and its inverse contracts every selected chain to its unique original
	hyperedge. The subscript ``$\mathrm{hyp}$'' merely distinguishes this
	instance of $T$ from the other maps used below.

	The hypergraph $\mathcal{H}_0$ contains only hyperedges of sizes at most
	three. We next construct a $3$-uniform hypergraph
	$\mathcal{H}_1=(V_1,\mathcal{E}_1)$. Introduce four fresh vertices
	$y_1,y_2,z_1,z_2$ and put
	\[
	V_1=V_0\cup\{y_1,y_2,z_1,z_2\}.
	\]
	For every $f\in\mathcal{E}_0$, define its padded copy by
	\[
	\rho(f)=
	\begin{cases}
		f, & |f|=3,\\
		f\cup\{z_1\}, & |f|=2,\\
		f\cup\{y_1,y_2\}, & |f|=1.
	\end{cases}
	\]
	Introduce three additional hyperedges
	\[
	\begin{aligned}
		a_1&=\{y_1,y_2,z_1\},\\
		a_2&=\{y_1,z_1,z_2\},\\
		a_3&=\{y_2,z_1,z_2\},
	\end{aligned}
	\qquad
	A=\{a_1,a_2,a_3\},
	\]
	and define the indexed hyperedge family
	\[
	\mathcal{E}_1=\rho(\mathcal{E}_0)\mathbin{\dot\cup}A.
	\]

	We now define the padding map
	\[
	U_{\mathrm{pad}}:
	\mathcal{A}(\mathcal{H}_0)\longrightarrow
	\mathcal{A}(\mathcal{H}_1).
	\]
	Let $Y_0\in\mathcal{A}(\mathcal{H}_0)$ and set
	\begin{align*}
	&\alpha(Y_0)
	\equiv
	\bigl|\{f\in Y_0:|f|=1\}\bigr|
	\pmod 2,\\
	\quad
	&\beta(Y_0)
	\equiv
	\bigl|\{f\in Y_0:|f|=2\}\bigr|
	\pmod 2,
	\end{align*}
	where $\alpha(Y_0),\beta(Y_0)\in\{0,1\}$. The primary padded family
	$\rho(Y_0)=\{\rho(f):f\in Y_0\}$ induces, in the coordinate order
	$(y_1,y_2,z_1,z_2)$, the parity vector
	\[
	\bigl(\alpha(Y_0),\alpha(Y_0),\beta(Y_0),0\bigr).
	\]
	For $(\alpha,\beta)\in\{0,1\}^2$, define
	\[
	R_{\alpha,\beta}=
	\begin{cases}
		\emptyset, & (\alpha,\beta)=(0,0),\\
		A, & (\alpha,\beta)=(0,1),\\
		\{a_2,a_3\}, & (\alpha,\beta)=(1,0),\\
		\{a_1\}, & (\alpha,\beta)=(1,1).
	\end{cases}
	\]
	We set
	\[
	U_{\mathrm{pad}}(Y_0)
	=
	\rho(Y_0)\cup R_{\alpha(Y_0),\beta(Y_0)}.
	\]

	Padding leaves every incidence with a vertex of $V_0$ unchanged. Hence
	$\rho(Y_0)$ already has even degree at every vertex of $V_0$. The four
	correction sets in the preceding display have auxiliary parity vectors
	\[
	(0,0,0,0),\quad
	(0,0,1,0),\quad
	(1,1,0,0),\quad
	(1,1,1,0),
	\]
	respectively, so the chosen correction cancels exactly the auxiliary
	parity vector of $\rho(Y_0)$. Therefore $U_{\mathrm{pad}}(Y_0)$ is an
	even cover of $\mathcal{H}_1$.

	Conversely, let $Y_1$ be an even cover of $\mathcal{H}_1$. Delete the
	hyperedges in $A$ and remove the padding vertices from the remaining
	hyperedges. Since the hyperedges in $A$ do not meet $V_0$, the resulting
	indexed family is an even cover $Y_0$ of $\mathcal{H}_0$. Moreover, the
	incidence vectors of $a_1,a_2,a_3$ on
	$\{y_1,y_2,z_1,z_2\}$ are linearly independent over $\mathbb{F}_2$.
	Thus the parity conditions at the four auxiliary vertices uniquely
	determine the selected subset of $A$, and that subset must be
	$R_{\alpha(Y_0),\beta(Y_0)}$. Hence $U_{\mathrm{pad}}$ is a bijection.
	It maps the empty cover only to the empty cover and therefore restricts to
	a bijection between the nonempty even covers. In addition,
	\[
	\bigl|U_{\mathrm{pad}}(Y_0)\cap A\bigr|\leq3.
	\]

	Let $\mathcal{G}_1$ be the Tanner graph associated with
	$\mathcal{H}_1$. By Proposition~\ref{prop:code-even}, every hyperedge of
	$\mathcal{H}_1$ corresponds to one variable node of $\mathcal{G}_1$, and
	$\mathcal{G}_1$ is $3$-left regular. Enumerate the original indexed
	hyperedges as $\mathcal{E}=(e_1,\ldots,e_N)$ and put
	\[
	s_i=|\mathcal{C}(e_i)|,
	\qquad
	D_i=\{\rho(f):f\in\mathcal{C}(e_i)\}.
	\]
	The family $D_i$ consists of the $s_i$ primary hyperedges representing
	$e_i$. By the definitions of $T_{\mathrm{hyp}}$ and
	$U_{\mathrm{pad}}$, all hyperedges in $D_i$ are selected if
	$e_i$ is selected, and none is selected otherwise. Choose one fixed
	representative hyperedge $f_i^{\circ}\in D_i$ and let $v_i$ be its
	corresponding variable node in $\mathcal{G}_1$. Thus $v_i$ is not an
	arbitrary variable: its membership records the common selection status of
	the entire chain associated with $e_i$. None of the three auxiliary
	hyperedges in $A$ is chosen as a representative.

	Set
	\[
	B=
	\max\left\{
		w+4,
		2+\max_{1\leq i\leq N}s_i
	\right\}.
	\]
	For each $i$, exactly one of the consecutive integers $B-s_i$ and
	$B+1-s_i$ is even. Let $m_i$ be that even integer. Since
	$B\geq s_i+2$, we have $m_i\geq2$, and
	\[
	s_i+m_i\in\{B,B+1\}.
	\]

	Starting from $\mathcal{G}_1$, apply Algorithm~\ref{alg:expan}
	successively to $v_1,\ldots,v_N$ with replication counts
	$m_1,\ldots,m_N$, and let $\mathcal{G}_2$ be the resulting Tanner graph.
	Every application preserves $3$-left regularity. Let
	\[
	F_{\mathrm{exp}}:
	\mathcal{A}(\mathcal{G}_1)\longrightarrow
	\mathcal{A}(\mathcal{G}_2)
	\]
	denote the composition
	\[
	F_{\mathrm{exp}}
	=
	F_{v_N,m_N}\circ\cdots\circ F_{v_1,m_1}.
	\]
	By Lemma~\ref{lem:3.3}, $F_{\mathrm{exp}}$ is a bijection. Identifying
	even covers of $\mathcal{H}_1$ with the corresponding $(a,0)$-TSs of
	$\mathcal{G}_1$, define
	\[
	\Psi
	\coloneqq
	F_{\mathrm{exp}}\circ U_{\mathrm{pad}}\circ T_{\mathrm{hyp}}:
	\mathcal{A}(\mathcal{H})\longrightarrow
	\mathcal{A}(\mathcal{G}_2).
	\]
	All three component maps are bijections and preserve nonemptiness, so
	$\Psi$ is a bijection between the nonempty feasible families.

	Let $X$ be a nonempty even cover of $\mathcal{H}$ and put $t=|X|$.
	Whenever $e_i\in X$, its $s_i$ primary variables and the $m_i$
	replication variables associated with $v_i$ are all selected. The padding
	correction contributes at most three additional variables, and those
	auxiliary variables are not replicated. Consequently,
	\[
	|\Psi(X)|
	=
	\sum_{e_i\in X}(s_i+m_i)
	+
	\bigl|U_{\mathrm{pad}}(T_{\mathrm{hyp}}(X))\cap A\bigr|,
	\]
	which gives
	\[
	tB\leq |\Psi(X)|\leq t(B+1)+3.
	\]
	Set
	\[
	\widehat{w}=w(B+1)+3.
	\]
	If $|X|\leq w$, then $|\Psi(X)|\leq\widehat{w}$. Conversely, suppose
	that $Y$ is a nonempty $(a,0)$-TS of $\mathcal{G}_2$ satisfying
	$|Y|\leq\widehat{w}$. Since $\Psi$ is bijective, $Y=\Psi(X)$ for a
	unique nonempty even cover $X$ of $\mathcal{H}$. If $|X|\geq w+1$,
	then
	\[
	|Y|
	\geq
	(w+1)B
	>
	w(B+1)+3
	=
	\widehat{w},
	\]
	where the strict inequality follows from $B\geq w+4$. This is a
	contradiction. Therefore $|X|\leq w$, and we have proved
	\[
	(\mathcal{H},w)\in\MEC
	\quad\Longleftrightarrow\quad
	(\mathcal{G}_2,\widehat{w})\in\MDt.
	\]

	The construction is polynomial. Indeed,
	$s_i\leq\max\{1,|e_i|-2\}$, and the preliminary bound
	$w<|\mathcal{E}|$ implies that $B$ is polynomially bounded by the input
	representation size. There are $N$ replications, each of size
	$\mathcal{O}(B)$, so the output graph has polynomial size. Membership in
	$\mathrm{NP}$ is immediate because a proposed nonempty support of size at
	most $\widehat{w}$ can be checked against all parity constraints in
	polynomial time. Hence \MDt\ is $\mathrm{NP}$-complete.

	We now turn to the exact-weight variant $\MDt^{=}$. To prove membership
	in $\mathrm{W}[1]$, let $(\mathcal{G},w)$ be an instance, where
	$\mathcal{G}=(L,R,E)$ is $3$-left regular. Construct a nondeterministic
	multitape Turing machine $M_{\mathcal{G}}$ whose finite control contains
	the incidence relation of $\mathcal{G}$. The machine guesses exactly $w$
	pairwise distinct variable nodes and records the three neighboring check
	nodes of each guessed variable. Thus at most $3w$ check-node labels are
	recorded. It accepts precisely when every recorded check-node label occurs
	an even number of times. Pairwise distinctness and all parity conditions
	can be verified with $\mathcal{O}(w^2)$ comparisons. Consequently, for
	some computable function $f$,
	\[
	M_{\mathcal{G}}
	\text{ accepts within }f(w)\text{ steps}
	\ \Longleftrightarrow\ 
	(\mathcal{G},w)\in\MDt^{=}.
	\]
	The description of $M_{\mathcal{G}}$ has size polynomial in
	$|\mathcal{G}|$, whereas $f(w)$ depends only on $w$. This gives an FPT
	reduction to \textsc{Nondet TM Acceptance}, and hence
	$\MDt^{=}\in\mathrm{W}[1]$.

	For hardness, Arvind et al.\ proved that the homogeneous exact-weight
	linear-equation problem is $\mathrm{W}[1]$-hard even when every variable
	occurs exactly three times \cite{Arvind16}. Let $(M,\boldsymbol{0},w)$ be
	one of these instances. Regard every equation as a hypergraph vertex and
	every variable as the indexed hyperedge consisting of the equations in
	which that variable occurs. Since every variable occurs exactly three
	times, the resulting hypergraph is already $3$-uniform; equivalently, its
	Tanner graph $\mathcal{G}_M$ is already $3$-left regular. Moreover, the
	incidence transformation preserves the number of selected variables
	exactly. Therefore, $M\boldsymbol{x}^{\top}=\boldsymbol{0}
	\text{ has a solution of weight exactly }w
	\quad\Longleftrightarrow\quad
	(\mathcal{G}_M,w)\in\MDt^{=}$.
	The parameter remains $w$, so this is a parameter-preserving FPT
	reduction. Thus $\MDt^{=}$ is $\mathrm{W}[1]$-hard. Combining hardness
	with membership proves that $\MDt^{=}$ is $\mathrm{W}[1]$-complete.
\end{proof}

Theorem~\ref{l3} identifies left degree three as the first
intractable case. For the standard at-most-weight problem, the replication
step separates the size intervals corresponding to different original cover
sizes and thereby yields a Karp reduction. For the exact-weight parameterized
problem, no such transformation is needed: the sparse homogeneous instances
of Arvind et al.~\cite{Arvind16} already have column weight exactly three.

\subsection{Intractability of Left Degree at Least Three}

We next lift both conclusions from left degree three to any fixed
left degree $J>3$. Given a $3$-left regular Tanner graph, we add $J-3$
global check nodes and join each of them to every variable node. Every
feasible support in a $3$-left regular Tanner graph has even cardinality, so
the new checks are redundant. Consequently, this construction preserves not
only feasibility but also the cardinality of every feasible support.

\begin{theorem}\label{l}
	For every fixed integer $J\geq3$, the problem \MTJ\ is
	$\mathrm{NP}$-complete. Moreover, its exact-weight variant $\MTJ^{=}$ is
	$\mathrm{W}[1]$-complete when parameterized by the prescribed weight $w$.
\end{theorem}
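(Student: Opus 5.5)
The plan is to reduce from the $3$-left regular case of Theorem~\ref{l3} by adding redundant global check nodes. For $J=3$ there is nothing new: by Remark~\ref{rem:equiv}, \MTt\ and $\MTt^{=}$ are equivalent to \MDt\ and $\MDt^{=}$, so Theorem~\ref{l3} already gives the statement. For fixed $J>3$, take an instance $(\mathcal{G},w)$ with $\mathcal{G}=(L,R,E)$ $3$-left regular. Form $\mathcal{G}^{(J)}$ by adding $J-3$ fresh check nodes $g_1,\ldots,g_{J-3}$, each adjacent to every variable node in $L$. The resulting graph is simple and bipartite, and every variable node has degree exactly $J$. The construction is clearly polynomial, adding $(J-3)|L|$ edges, and the weight parameter $w$ is left unchanged.

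The key step is that the feasible supports of $\mathcal{G}$ and $\mathcal{G}^{(J)}$ coincide as subsets of $L$. Any support feasible in $\mathcal{G}^{(J)}$ satisfies all checks of $R$, so it is feasible in $\mathcal{G}$. Conversely, let $S$ be feasible in $\mathcal{G}$. Double counting the edges between $S$ and $R$ gives
\[
3|S|=\sum_{r\in R}|N(r)\cap S|\equiv 0 \pmod 2,
\]
so $|S|$ is even. Each global check $g_\ell$ satisfies $|N(g_\ell)\cap S|=|S|$, which is therefore even, so $S$ is also feasible in $\mathcal{G}^{(J)}$. The identity map thus preserves feasibility, nonemptiness, and cardinality. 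Consequently $(\mathcal{G},w)$ is a yes-instance of \MTt\ if and only if $(\mathcal{G}^{(J)},w)$ is a yes-instance of \MTJ, and the same holds for the exact-weight variants. This gives a Karp reduction, and also a parameter-preserving FPT reduction because the new weight $w'$ equals $w$. Hardness for both versions then follows from Theorem~\ref{l3}.

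Membership is routine. For NP, a support of size at most $w$ is a certificate that can be checked in polynomial time. For $\mathrm{W}[1]$, I reuse the nondeterministic Turing machine from the proof of Theorem~\ref{l3} with $3$ replaced by $J$. The machine guesses $w$ distinct variable nodes, records their $Jw$ neighboring checks, and verifies in $f(w)$ steps that every recorded check occurs an even number of times.

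The only step needing care is the parity argument that makes the global checks redundant, and the double count above settles it. Its correctness relies on the base left degree $3$ being odd.
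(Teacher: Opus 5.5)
Your proposal is correct and follows the paper's proof essentially step for step. The paper also adds $J-3$ global check nodes joined to every variable node, and uses the double count $3|S|\equiv 0 \pmod 2$ to show that the identity map on supports is a cardinality-preserving bijection between the feasible families. It then obtains $\mathrm{W}[1]$ membership by reusing the nondeterministic Turing machine from Theorem~\ref{l3}, which records at most $Jw$ check-node labels.
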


\begin{proof}
	The case $J=3$ is Theorem~\ref{l3}. Fix $J>3$, and let
	$\mathcal{G}=(L,R,E)$ be a $3$-left regular Tanner graph. Construct
	$\mathcal{G}'=(L,R',E')$ by setting
	\[
	R'=R\mathbin{\dot\cup}\{r_1^*,\ldots,r_{J-3}^*\}
	\]
	and joining every new check node $r_i^*$ to every variable node in $L$.
	All original edges are retained. Thus every variable node gains exactly
	$J-3$ neighbors, and $\mathcal{G}'$ is $J$-left regular. Since $J$ is
	fixed, the construction is polynomial.

	We claim that the identity map on subsets of $L$ induces a
	cardinality-preserving bijection
	\[
	\Gamma_J:
	\mathcal{A}(\mathcal{G})\longrightarrow
	\mathcal{A}(\mathcal{G}'),
	\qquad
	\Gamma_J(S)=S.
	\]
	Indeed, if $S\in\mathcal{A}(\mathcal{G})$, then every original check node
	has even degree in the subgraph induced by $S$. Summing these degrees gives
	\[
	3|S|
	=
	\sum_{r\in R}|N_{\mathcal{G}}(r)\cap S|
	\equiv0\pmod2.
	\]
	Because $3$ is odd, $|S|$ is even. Each new global check node has exactly
	$|S|$ selected neighbors and hence also has even induced degree. Therefore
	$S\in\mathcal{A}(\mathcal{G}')$. Conversely, if
	$S\in\mathcal{A}(\mathcal{G}')$, then all parity constraints at the
	original check nodes in $R$ hold, and hence
	$S\in\mathcal{A}(\mathcal{G})$. This proves the claim; in particular,
	\[
	|\Gamma_J(S)|=|S|
	\qquad
	\text{for every }S\in\mathcal{A}(\mathcal{G}).
	\]

	For the standard at-most-weight problems, use the same bound $w$ in the
	output instance. The preceding bijection gives
	\[
	(\mathcal{G},w)\in\MDt
	\quad\Longleftrightarrow\quad
	(\mathcal{G}',w)\in\MTJ.
	\]
	This is a Karp reduction from \MDt\ to \MTJ. Theorem~\ref{l3} supplies
	$\mathrm{NP}$-hardness, while membership in $\mathrm{NP}$ follows by
	checking a proposed support and all incident parities. Thus \MTJ\ is
	$\mathrm{NP}$-complete.

	For the exact-weight problems, the same bijection gives
	\[
	(\mathcal{G},w)\in\MDt^{=}
	\quad\Longleftrightarrow\quad
	(\mathcal{G}',w)\in\MTJ^{=}.
	\]
	The parameter is unchanged, so Theorem~\ref{l3} transfers
	$\mathrm{W}[1]$-hardness to $\MTJ^{=}$. To prove membership, use the
	nondeterministic-machine construction from that theorem: guess exactly
	$w$ distinct variable nodes and check the parity of their incident check
	nodes. Since $J$ is fixed, at most $Jw$ check-node labels are recorded, and
	the verification takes a number of steps depending only on $w$ (and the
	fixed constant $J$). Hence $\MTJ^{=}\in\mathrm{W}[1]$, completing the
	proof of $\mathrm{W}[1]$-completeness.
\end{proof}

Theorems~\ref{l3} and~\ref{l} jointly delineate a sharp
classical-complexity boundary for left regular Tanner graphs. Left degrees at
most two admit polynomial-time algorithms, whereas the standard at-most-weight
problem is $\mathrm{NP}$-complete for every fixed $J\geq3$. In the
parameterized setting, the corresponding exact-weight problem is
$\mathrm{W}[1]$-complete for every such $J$. Thus left regularity alone does
not remove the intrinsic difficulty of the relevant minimum-distance and
exact-weight codeword questions. This motivates the biregular setting
considered next, where both variable-node and check-node degrees are
simultaneously constrained.

\section{Biregular Minimum Distance Problem} \label{sec:regular}

We now impose regularity on both sides of the Tanner graph. The hardness results established in Section~\ref{sec:left} for left-regular instances do not, by themselves, determine the complexity of the fully biregular case, because the check-node degrees in those instances may remain nonuniform or unbounded. Requiring every check node to have the same prescribed degree is a substantial additional restriction that may exclude the hard instances produced by the earlier reductions. It is therefore necessary to develop a separate regularization procedure that enforces exact degrees on both sides while preserving the relevant parity constraints and the ordering of nonzero support sizes.

The low-right-degree regimes remain algorithmically tractable. If every check node has degree one, each parity check forces its unique neighboring variable to be zero. Since every variable node has positive degree, the resulting code contains no nonzero codeword. If every check node has degree two, each check imposes an equality constraint between its two neighboring variables. Construct an auxiliary graph whose vertices are the variable nodes and whose edges correspond to the degree-two checks. All variables within the same connected component must then take the same value. Consequently, every nonzero codeword is the indicator vector of a nonempty union of connected components, and the minimum distance is the number of vertices in a smallest connected component. This quantity can be computed in polynomial time. Hence $K=3$ is the smallest right degree at which computational hardness can arise.

In matrix terminology, a $(J,K)$-regular Tanner graph corresponds to a parity-check matrix with constant column weight $J$ and constant row weight $K$. Equivalently, under the hypergraph formulation, it corresponds to a $J$-uniform, $K$-regular incidence structure. The remainder of this section proves that the tractability boundary identified above is tight and establishes hardness under full biregularity. We first consider the base case $(3,3)$, starting from the $3$-left regular problem proved hard in Section~\ref{sec:left}. Check node splitting and the subsequent degree-completion operations are used to make every check node have degree three, while the map $Q$ from Lemma~\ref{lem:3.2} describes the induced correspondence between feasible supports. Since these operations introduce auxiliary variables whose contribution may depend on the selected support, the replication maps $F_{v,m}$ from Lemma~\ref{lem:3.3} are then applied with a sufficiently large common replication factor to ensure that the original support size dominates all auxiliary contributions. This yields hardness for the $(3,3)$-regular case. Next, for every fixed $K\geq 3$, a single-port completion gadget raises the right degree from three to $K$ while preserving the left degree and the minimum-support correspondence, giving a polynomial-time Karp reduction from the $(3,3)$-regular problem to the $(3,K)$-regular problem. Finally, a replica-and-global-check construction increases the left degree one unit at a time and provides both a Karp reduction and an FPT reduction from the $(J-1,K)$-regular case to the $(J,K)$-regular case.

\subsection{Intractability of the $(3,3)$-Regular Case}

The base case $(3,3)$ requires simultaneous structural control over both sides of the bipartition. Starting from the $3$-left regular instances analyzed in Section \ref{sec:left}, the reduction utilizes the check node splitting procedure (Algorithm~\ref{alg:chec}) formalized via the bijection $Q$ (from Lemma~\ref{lem:3.2}) to restrict check node degrees to at most three. To address the auxiliary variable nodes of degree two introduced during this split, the construction duplicates the graph, couples the corresponding auxiliary variables, and applies the variable replication procedure (Algorithm~\ref{alg:expan} and Lemma~\ref{lem:3.3}) to amplify non-auxiliary variables. The structural regularization is then completed by appending specifically configured suffix variable gadgets across disjoint copies of the graph to eliminate any remaining legacy check nodes of degree two. The formal properties and threshold preservation of this transformation are established in Theorem~\ref{thm:np33}.

\begin{theorem}\label{thm:np33}
	The problem \MDtt\ is $\mathrm{NP}$-complete.
\end{theorem}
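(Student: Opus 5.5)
The plan is a Karp reduction from \MDt, which is $\mathrm{NP}$-complete by Theorem~\ref{l3}. Membership of \MDtt\ in $\mathrm{NP}$ is immediate, as in Theorem~\ref{l}: a proposed nonempty support is checked against every parity constraint in polynomial time. Throughout, every nonzero codeword of the output graph should be traced back, through a composition of the bijections of Lemmas~\ref{lem:3.2} and~\ref{lem:3.3}, to a nonzero codeword of the input. Its weight should then lie in an interval $[tB,\,t(B+c)+c']$, where $t$ is the original weight. This is the same interval-separation device used in the proof of Theorem~\ref{l3}.

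\emph{Step 1 (splitting).} Given a $3$-left regular $(\mathcal{G},w)$, we may again assume $w<|L|$. Apply Algorithm~\ref{alg:chec}. Every check node now has degree at most three, and the original variables keep degree three. The auxiliary variables $y^r_j$ have degree exactly two, and the unsplit checks of degree one or two are left as they are. By Lemma~\ref{lem:3.2}, $Q$ is a bijection whose inverse is $S'\mapsto S'\cap L$. For a fixed support, the number of auxiliary variables selected is at most $\sum_r(\deg r-3)\le 3|L|$. The bound we need, however, must depend only on $t$, not on $|L|$. Since each selected $y^r_j$ lies on a chain of a check $r$ that meets $S$, and each original variable has only three such checks, we will instead bound the auxiliary contribution per selected original variable by a fixed polynomial $D$ in the maximum check degree.

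\emph{Step 2 (fixing degree-two variables).} Take two disjoint copies $\mathcal{G}^{(1)},\mathcal{G}^{(2)}$ of the split graph. For every auxiliary variable $y$, add a check adjacent to $y^{(1)}$ and $y^{(2)}$, so that each auxiliary variable now has degree three. This coupling forces $y^{(1)}$ and $y^{(2)}$ to agree. Because each $y$ is a prefix-parity function of the original variables in its own copy, the supports become pairs $(S_1,S_2)$ with matching prefix parities. The simplest consistent choice is $S_1=S_2$, and I will verify that the projection to the first copy still yields a feasible support, so the minimum is controlled by twice the original weight plus auxiliaries. Next, replicate every original variable in both copies using Algorithm~\ref{alg:expan} with a common even factor $m$, where $B=m+1\ge w+\text{const}$ is chosen large relative to $D$. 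Lemma~\ref{lem:3.3} keeps the correspondence bijective, and the resulting weights lie in $[2tB,\,2t(B+D)+O(1)]$, giving the threshold $\widehat w=2w(B+D)+O(1)$.

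\emph{Step 3 (completing deficient checks) — the main obstacle.} Now every variable has degree three, but some checks have degree one or two: the coupling checks, the replication checks $y_i$ and $y_{i-1,i}$, and the original short checks. Each must receive extra neighbors from ``suffix'' gadgets whose variables have degree three and whose checks have degree three. The danger is twofold. A gadget may contain its own cheap nonzero codeword, or it may alter the parity constraint at the check it attaches to. My first attempt will be to collect all deficient checks, across several disjoint copies if needed, so that their total deficiency is even. I will then attach them to a single $(3,3)$-regular completion block built so that every gadget variable is forced to zero by a chain of parity constraints (a forcing argument like Lemma~\ref{lem:3.2}(i)), or so that any nonzero gadget support has weight exceeding $\widehat w$. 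Failing that, I will use degree-three checks on triples of copies instead of pairs, so that no degree-two checks are created in the first place. I expect verifying that the completed gadget admits no light codeword, and that it preserves the bijection, to be the delicate part. The remaining weight bookkeeping then follows exactly as in Theorem~\ref{l3}, with $B\ge w+\text{const}$ polynomial, so the reduction is polynomial.
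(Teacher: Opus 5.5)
\textbf{Step~2 breaks the reduction.} Because you couple only the auxiliary variables, the two copies of each original variable are unconstrained relative to each other. A feasible support of the doubled graph is therefore any pair $(S_1,S_2)\neq(\varnothing,\varnothing)$ with $S_i\in\mathcal{A}(\mathcal{G})\cup\{\varnothing\}$ and equal prefix parities on every split chain. This family is strictly larger than $\mathcal{A}(\mathcal{G})$, so there is no bijection with $\mathcal{A}(\mathcal{G})$ for Lemma~\ref{lem:3.3} to preserve, and your weight window $[2tB,\,2t(B+D)+O(1)]$ is false. Concretely, suppose $S\in\mathcal{A}(\mathcal{G})$ has all prefix parities zero, for example because no variable of $S$ is adjacent to a check of degree greater than three. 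Then $(S,\varnothing)$ is feasible with no auxiliary variable selected, and it has weight $|S|B$ after replication. With $\widehat w\approx 2w(B+D)$, such one-copy supports of size up to roughly $2w$ fall below the threshold. So an input whose minimum distance $w+1$ is attained by such a support is a no-instance mapped to a yes-instance. Whether such supports exist depends on the instance, so no choice of threshold repairs this. The missing idea is to couple \emph{every} variable of the split graph across the two copies, which forces $S_1=S_2$. The original variables then have degree four. The paper restores degree three by splitting each $v^{(\ell)}$ into $v_1^{(\ell)},v_2^{(\ell)}$ joined by a new degree-two check. All four copies of $v$ become equivalent, and the weight is exactly $\lambda|X|+2a(X)$ with $\lambda=m+4>2|L_{\mathrm{aux}}|$.

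\textbf{Step~3, which you rightly call the crux, is only a list of options.} The first option cannot work: completion variables share checks with the base graph, so they cannot be forced to zero outright, and only a size lower bound is available. Keeping the input's degree-one checks also leaves deficiencies of two. The paper avoids this by first deleting each degree-one check together with its unique neighbour, which is forced to zero, so every deficient check has degree exactly two. It then takes $c$ disjoint copies with $3\mid c$, so the number $h$ of degree-two checks satisfies $3\mid h$ and $h\geq 3N+3$. Here $N$ is the number of variables of the graph being completed, and $N\geq\widehat w$. These checks are closed by a cascade:
\begin{itemize}
\item at stage $j$, the $n_j=h-3j$ open checks receive a chain of $n_j-2$ degree-three variables;
\item this creates $n_j-3$ new degree-two checks;
\item the process repeats until three checks remain, which are closed by a single variable.
\end{itemize}
A support meeting the cascade must contain its entire last occupied stage and at least one variable from each earlier stage, hence at least $h/3>N\geq\widehat w$ elements. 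Every small support therefore avoids the gadget and restricts to copies of the base graph. Your triple-copy fallback does not help either: a degree-three check on $y^{(1)},y^{(2)},y^{(3)}$ does not force equality, and Algorithm~\ref{alg:expan} creates degree-two checks regardless.
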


\begin{proof}
	Membership in $\mathrm{NP}$ is immediate: a nonempty set of at
	most $w$ variable nodes is a polynomially verifiable certificate.
	We prove $\mathrm{NP}$-hardness by a polynomial-time reduction
	from \MTt, which is $\mathrm{NP}$-complete by
	Theorem~\ref{l3} and Proposition~\ref{prop:trap-codeword}.

	Let $(\mathcal{G},w)$ be an instance of \MTt, where
	$\mathcal{G}=(L,R,E)$ is $3$-left regular. First delete isolated
	check nodes. We then repeatedly delete every degree-one check
	node together with its unique neighboring variable node. Indeed,
	if a degree-one check node is adjacent only to a variable $v$,
	then the even-parity condition forces $v$ to be absent from every
	$(a,0)$-TS. Consequently, restriction to the remaining variables,
	and extension by assigning zero to the deleted variables, preserve
	all feasible supports. After each deletion we again remove isolated
	check nodes and continue until every remaining check node has degree
	at least two.

	If no variable node remains, the input is a no-instance, and we may
	output any fixed no-instance of \MDtt. Hence assume that $L\neq
	\varnothing$. Put
	\[
		n=|L|
		\qquad\text{and}\qquad
		w\gets\min\{w,n\}.
	\]

	\medskip
	\noindent\textbf{Phase I: reduction to check degrees two and three.}
	Apply the check-node splitting procedure of
	Algorithm~\ref{alg:chec} to obtain
	\[
		\mathcal{G}_1=(L_1,R_1,E_1),
		\qquad
		L_1=L\mathbin{\dot\cup}L_{\mathrm{aux}},
	\]
	in which every check node has degree two or three. Let
	\[
		Q_{\mathrm{split}}:
		\mathcal{A}(\mathcal{G})
		\longrightarrow
		\mathcal{A}(\mathcal{G}_1)
	\]
	be the map from Lemma~\ref{lem:3.2}. Its inverse is the projection
	onto $L$. Every original variable in $L$ still has degree three,
	whereas every variable in $L_{\mathrm{aux}}$ has degree two.

	Take two disjoint copies
	$\mathcal{G}_1^{(0)}$ and $\mathcal{G}_1^{(1)}$ of
	$\mathcal{G}_1$. For every $u\in L_1$, add a degree-two coupling
	check $z_u$ adjacent to the two copies $u^{(0)}$ and $u^{(1)}$.
	The parity equation at $z_u$ forces
	\[
		\boldsymbol{1}_{S}(u^{(0)})
		=
		\boldsymbol{1}_{S}(u^{(1)})
	\]
	in every $(a,0)$-TS $S$. Thus every auxiliary variable now has
	degree three. Each original variable, however, has degree four.

	We reduce these degree-four original variables to degree three by
	the variable-side dual of Algorithm~\ref{alg:chec}. More precisely,
	for every $v\in L$ and $\ell\in\{0,1\}$, enumerate the four check
	nodes adjacent to $v^{(\ell)}$ as
	$r_1,r_2,r_3,r_4$. Replace $v^{(\ell)}$ by two variables
	$v_1^{(\ell)}$ and $v_2^{(\ell)}$, set
	\[
		N(v_1^{(\ell)})=\{r_1,r_2,e_v^{(\ell)}\},
		\qquad
		N(v_2^{(\ell)})=\{r_3,r_4,e_v^{(\ell)}\},
	\]
	and introduce the degree-two check node $e_v^{(\ell)}$.
	The parity equation at $e_v^{(\ell)}$ forces
	\[
		\boldsymbol{1}_{S}(v_1^{(\ell)})
		=
		\boldsymbol{1}_{S}(v_2^{(\ell)}).
	\]
	Hence all four variables $v_1^{(0)},\ v_2^{(0)},\
		v_1^{(1)},\ v_2^{(1)}$
	have the same membership status in every $(a,0)$-TS. Denote the
	resulting graph by $\mathcal{G}_2$. Every variable node of
	$\mathcal{G}_2$ has degree three, and every check node has degree
	two or three.

	For $S_1\in\mathcal{A}(\mathcal{G}_1)$, let
	$D(S_1)$ be its canonical lift to $\mathcal{G}_2$: both copies of
	every selected auxiliary variable are selected, and all four split
	copies of every selected original variable are selected. The
	coupling checks and the splitting checks show conversely that every
	$(a,0)$-TS of $\mathcal{G}_2$ is obtained in this way.

	Let $M=|L_{\mathrm{aux}}|,
		\
		m=2\max\{1,M\},
		\
		\lambda=m+4$. Then $m$ is a positive even integer and $\lambda>2M$.
	For every original variable $v\in L$, choose one representative,
	say $p_v=v_1^{(0)}$, among its four equivalent copies and apply
	Algorithm~\ref{alg:expan} to $p_v$ with replication parameter $m$.
	Let
	\[
		\Phi_{\mathrm{amp}}
		=
		\mathop{\circ}_{v\in L} F_{p_v,m}
	\]
	denote the resulting composition, and let $\mathcal{G}_3$ be the
	final graph of this phase. By Lemma~\ref{lem:3.3}, the replication
	variables associated with $p_v$ are selected if and only if $v$ is
	selected.

	Define
	\[
		\Theta
		=
		\Phi_{\mathrm{amp}}
		\circ D
		\circ Q_{\mathrm{split}}.
	\]
	For $X\in\mathcal{A}(\mathcal{G})$, put
	\[
		a(X)
		=
		\bigl|
			Q_{\mathrm{split}}(X)\cap L_{\mathrm{aux}}
		\bigr|.
	\]
	Since $0\leq a(X)\leq M$, the construction gives the exact weight
	formula
	\begin{equation}\label{eq:np33-weight}
		|\Theta(X)|
		=
		\lambda |X|+2a(X),
		\qquad
		0\leq a(X)\leq M.
	\end{equation}
	Indeed, a selected original variable contributes four split copies
	and $m$ replication variables, whereas a selected splitting
	auxiliary variable contributes its two coupled copies.

	Conversely, the coupling checks, the variable-splitting checks,
	Lemma~\ref{lem:3.3}, and Lemma~\ref{lem:3.2} provide a projection
	\[
		\Pi:
		\mathcal{A}(\mathcal{G}_3)
		\longrightarrow
		\mathcal{A}(\mathcal{G})
	\]
	that collapses all equivalent variables and then deletes the
	check-splitting auxiliary variables. Every
	$Y\in\mathcal{A}(\mathcal{G}_3)$ satisfies
	\[
		|Y|
		=
		\lambda|\Pi(Y)|+2a(\Pi(Y)).
	\]

	Set $\widehat{w}=\lambda w+2M$. If $\mathcal{G}$ contains an $(a,0)$-TS $X$ with $|X|\leq w$, then
	by~\eqref{eq:np33-weight},
	\[
		|\Theta(X)|
		\leq
		\lambda w+2M
		=
		\widehat{w}.
	\]
	Conversely, suppose that
	$Y\in\mathcal{A}(\mathcal{G}_3)$ satisfies
	$|Y|\leq\widehat{w}$. If
	$|\Pi(Y)|\geq w+1$, then
	\[
		|Y|
		\geq
		\lambda(w+1)
		>
		\lambda w+2M
		=
		\widehat{w},
	\]
	where the strict inequality follows from $\lambda>2M$. This is a
	contradiction. Hence $|\Pi(Y)|\leq w$. We have therefore proved 	\begin{equation}\label{eq:np33-phase1}
	\begin{aligned}
		&(\mathcal{G},w)\in\MTt\\
		\Longleftrightarrow\quad
		&(\mathcal{G}_3,\widehat{w})
		\text{ contains a nonempty $(a,0)$-TS}.
	\end{aligned}
    \end{equation}

	We also record explicitly the correspondence between minimum
	solutions. Suppose that $\mathcal{G}$ has a nonempty $(a,0)$-TS,
	and let
	\[
		d=\min\{|X|:X\in\mathcal{A}(\mathcal{G})\}.
	\]
	Among all $X$ with $|X|=d$, choose $X^\star$ minimizing $a(X)$.
	We claim that $\Theta(X^\star)$ is a minimum $(a,0)$-TS of
	$\mathcal{G}_3$. Otherwise, let $Y$ be a strictly smaller
	$(a,0)$-TS and put $X=\Pi(Y)$. If $|X|\geq d+1$, then
	\[
		|Y|
		\geq
		\lambda(d+1)
		>
		\lambda d+2M
		\geq
		|\Theta(X^\star)|,
	\]
	a contradiction. If $|X|=d$, then the choice of $X^\star$ gives
	$a(X)\geq a(X^\star)$ and hence
	\[
		|Y|
		=
		\lambda d+2a(X)
		\geq
		\lambda d+2a(X^\star)
		=
		|\Theta(X^\star)|,
	\]
	again a contradiction. Thus $\Theta(X^\star)$ is minimum.
	The same argument shows that the projection of every minimum
	$(a,0)$-TS of $\mathcal{G}_3$ is a minimum $(a,0)$-TS of
	$\mathcal{G}$.

	\medskip
	\noindent\textbf{Phase II: elimination of degree-two check nodes.}
	Write $\mathcal{G}_3=(L_3,R_3,E_3),\ N=|L_3|$.
	By construction, $N=\lambda n+2M$, and since $w\leq n$, we have $\widehat{w}\leq N$.

	If $\mathcal{G}_3$ has no degree-two check node, then it is already
	$(3,3)$-regular. Otherwise, let $q>0$ be the number of its
	degree-two check nodes. Choose the smallest positive integer $c$
	such that
	\[
		3\mid c
		\qquad\text{and}\qquad
		h\coloneqq cq\geq 3N+3.
	\]
	Take $c$ disjoint copies of $\mathcal{G}_3$. Across these copies
	there are exactly $h$ degree-two check nodes. Let $C_0$ be their
	set and put
	\[
		t=\frac{h}{3}.
	\]

	We now perform the iterative completion depicted in
	Fig.~\ref{fig:np33_reduction}. At stage $j$, where
	$0\leq j\leq t-1$, the current set $C_j$ contains
	\[
		n_j=h-3j
	\]
	degree-two check nodes. Order them as
	$r_{j,1},\ldots,r_{j,n_j}$. Conceptually, we introduce one
	variable adjacent to all checks in $C_j$ and immediately apply the
	variable-side version of the splitting operation.

	If $n_j>3$, introduce the suffix variables $U_j=
		\{u_{j,1},\ldots,u_{j,n_j-2}\}$
	and the new degree-two check nodes $C_{j+1}
		=
		\{s_{j,1},\ldots,s_{j,n_j-3}\}$,
	with neighborhoods
	\[
		N(u_{j,1})
		=
		\{r_{j,1},r_{j,2},s_{j,1}\},
	\]
	\[
		N(u_{j,\ell})
		=
		\{s_{j,\ell-1},r_{j,\ell+1},s_{j,\ell}\},
		\qquad
		2\leq\ell\leq n_j-3,
	\]
	and
	\[
		N(u_{j,n_j-2})
		=
		\{s_{j,n_j-3},
		  r_{j,n_j-1},r_{j,n_j}\}.
	\]
	Thus every check in $C_j$ gains one neighbor and becomes
	degree three, every variable in $U_j$ has degree three, and the
	only newly created degree-two checks are those in $C_{j+1}$.
	In particular,
	\[
		|C_{j+1}|=|C_j|-3.
	\]

	At the terminal stage, $n_{t-1}=3$. We introduce a single variable
	$u_{t-1,1}$ adjacent to all three checks in $C_{t-1}$ and create
	no further check nodes. Let $\mathcal{G}_4$ be the resulting graph.
	Every variable node and every check node of $\mathcal{G}_4$ has
	degree three.

	It remains to prove that a small $(a,0)$-TS cannot contain suffix
	variables. Let
	\[
		U_{\mathrm{suf}}
		=
		\mathop{\dot\bigcup}_{j=0}^{t-1}U_j
	\]
	be the suffix-variable set, and suppose that
	$Z\in\mathcal{A}(\mathcal{G}_4)$ satisfies
	$Z\cap U_{\mathrm{suf}}\neq\varnothing$. Let
	\[
		k=\max\{j:Z\cap U_j\neq\varnothing\}
	\]
	be the last occupied stage.

	If $k<t-1$, then no variable in $U_{k+1}$ is selected. Every check
	in $C_{k+1}$ is adjacent to two consecutive variables in $U_k$
	and one variable in $U_{k+1}$. Hence its parity equation forces
	the two consecutive variables in $U_k$ to have the same membership
	status. Induction along the chain shows that all variables in
	$U_k$ have the same status. Since $Z\cap U_k$ is nonempty, every
	variable in $U_k$ is selected. The same conclusion is immediate
	when $k=t-1$, because $|U_{t-1}|=1$. Therefore
	\[
		|Z\cap U_k|
		=
		|U_k|
		=
		h-3k-2.
	\]

	Moreover, for every $0\leq j<k$,
	\[
		Z\cap U_{j+1}\neq\varnothing
		\quad\Longrightarrow\quad
		Z\cap U_j\neq\varnothing.
	\]
	Indeed, if no variable in $U_j$ were selected, then the parity
	equations at the checks in $C_{j+1}$ would force every adjacent
	variable in $U_{j+1}$ to be unselected. Since every variable in
	$U_{j+1}$ is adjacent to at least one check in $C_{j+1}$, this
	would imply $Z\cap U_{j+1}=\varnothing$, a contradiction.

	Consequently, each of the preceding sets
	$U_0,U_1,\ldots,U_{k-1}$ contains at least one selected variable.
	It follows that
	\begin{align*}
		|Z\cap U_{\mathrm{suf}}|
		&\geq
		(h-3k-2)+k\\
		&=
		h-2-2k\\
		&\geq
		h-2-2(t-1)\\
		&=
		\frac{h}{3}\\
		&>
		N.
	\end{align*}
	Thus every $(a,0)$-TS intersecting the suffix part has cardinality
	strictly greater than $N$, and therefore strictly greater than
	$\widehat{w}$.

	Any $(a,0)$-TS of $\mathcal{G}_3$ can be embedded into one of its
	$c$ copies in $\mathcal{G}_4$, with all suffix variables
	unselected. Conversely, if
	$Z\in\mathcal{A}(\mathcal{G}_4)$ satisfies
	$|Z|\leq\widehat{w}$, then the preceding bound gives
	$Z\cap U_{\mathrm{suf}}=\varnothing$. At every completed
	degree-two check, the suffix neighbor is therefore unselected, so
	the parity equation reduces exactly to the original parity equation
	in the corresponding copy of $\mathcal{G}_3$. Hence the restriction
	of $Z$ to each copy is either empty or an $(a,0)$-TS of
	$\mathcal{G}_3$. At least one restriction is nonempty and has
	cardinality at most $|Z|$. Therefore
	\begin{equation}\label{eq:np33-phase2}
		(\mathcal{G}_3,\widehat{w})
		\text{ is a yes-instance}
		\quad\Longleftrightarrow\quad
		(\mathcal{G}_4,\widehat{w})
		\text{ is a yes-instance}.
	\end{equation}

	For completeness, the minimum-to-minimum correspondence in this
	second phase follows directly by contradiction. Let $Y^\star$ be
	a minimum $(a,0)$-TS of $\mathcal{G}_3$, and let $d_3=|Y^\star|$.
	Embedding $Y^\star$ into one copy gives an $(a,0)$-TS of
	$\mathcal{G}_4$ of size $d_3\leq N$. Hence a minimum
	$(a,0)$-TS $Z^\star$ of $\mathcal{G}_4$ cannot contain a suffix
	variable. It therefore decomposes into $(a,0)$-TSs lying in the
	disjoint copies of $\mathcal{G}_3$. If
	$|Z^\star|<d_3$, then one nonempty component of $Z^\star$ would
	have size strictly smaller than $d_3$, contradicting the minimality
	of $Y^\star$. Thus
	\[
		|Z^\star|=d_3.
	\]
	Furthermore, $Z^\star$ has exactly one nonempty component, and that
	component is a minimum $(a,0)$-TS of $\mathcal{G}_3$. Combining
	this with the first-phase projection proves that every minimum
	$(a,0)$-TS of $\mathcal{G}_4$ projects to a minimum
	$(a,0)$-TS of the original graph, while the canonical lift of a
	suitably chosen minimum support of the original graph is minimum
	in $\mathcal{G}_4$.

	The entire construction is polynomial in the input size. In
	particular, $M$ and $m$ are polynomially bounded, the graph
	$\mathcal{G}_3$ has polynomial size, and the suffix construction
	introduces
	\[
		\sum_{j=0}^{t-1}(h-3j-2)=\mathcal{O}(h^2)
	\]
	variable nodes and a comparable number of check nodes. Since
	$\mathcal{G}_4$ is $(3,3)$-regular,
	Equations~\eqref{eq:np33-phase1} and~\eqref{eq:np33-phase2}
	give a polynomial-time Karp reduction from \MTt\ to \MDtt.
	Therefore \MDtt\ is $\mathrm{NP}$-complete.
\end{proof}

\begin{figure*}[h]
	\centering
		\resizebox{0.9\textwidth}{!}{%
		\begingroup
\definecolor{reduceCyan}{RGB}{157,253,255}
\definecolor{reduceOrange}{RGB}{255,85,0}
\definecolor{reduceYellow}{RGB}{255,255,0}
\definecolor{reduceFrame}{RGB}{16,24,67}
\definecolor{reduceDots}{RGB}{25,25,25}

\begin{tikzpicture}[x=0.74918bp,y=-0.74918bp]
  % Preserve the CropBox and proportions of the source PDF.
  \useasboundingbox
    (336.461557,7.161336) rectangle (1917.288594,793.828288);

  \tikzset{
    reduce edge/.style={draw=black,line width=0.74918bp,
                        line cap=butt,line join=miter},
    reduce shape/.style={draw=reduceFrame,line width=0.74918bp,
                         line cap=butt,line join=miter}
  }

  \def\ReduceSquare#1#2#3#4#5{%
    \path[reduce shape,fill=#5] (#1,#2) rectangle (#3,#4);%
  }
  \def\ReduceCircle#1#2{%
    \path[reduce shape,fill=reduceCyan] (#1,#2) circle[radius=34.1984];%
  }
  \def\ReduceEdge#1#2#3#4{%
    \draw[reduce edge] (#1,#2) -- (#3,#4);%
  }
  \def\ReduceDot#1#2{%
    \path[fill=reduceDots,draw=none] (#1,#2) circle[radius=1.25];%
  }

  % The drawing order follows the source PDF.
  \ReduceSquare{533.218151}{159.265078}{599.723130}{225.775272}{reduceOrange}
  \ReduceSquare{443.828784}{159.265078}{510.338978}{225.775272}{reduceYellow}
  \ReduceSquare{354.439417}{159.265078}{420.949611}{225.775272}{reduceYellow}
  \ReduceCircle{477.834701}{47.804714}
  \ReduceCircle{660.450962}{47.804714}
  \ReduceSquare{626.252126}{159.265078}{692.762319}{225.775272}{reduceYellow}
  \ReduceSquare{719.291315}{159.265078}{785.801509}{225.775272}{reduceOrange}
  \ReduceSquare{808.638969}{159.265078}{875.143949}{225.775272}{reduceYellow}
  \ReduceSquare{904.217392}{159.265078}{970.722372}{225.775272}{reduceOrange}
  \ReduceSquare{999.801029}{159.265078}{1066.306009}{225.775272}{reduceYellow}
  \ReduceSquare{1190.957874}{159.265078}{1257.468068}{225.775272}{reduceYellow}
  \ReduceCircle{842.055701}{47.804714}
  \ReduceCircle{1033.520175}{47.804714}
  \ReduceSquare{1095.374238}{159.265078}{1161.884431}{225.775272}{reduceOrange}
  \ReduceSquare{1347.738606}{160.375667}{1414.243586}{226.880647}{reduceYellow}
  \ReduceSquare{1443.322243}{160.375667}{1509.832437}{226.880647}{reduceOrange}
  \ReduceCircle{1224.212971}{47.804714}
  \ReduceCircle{1380.988489}{48.910089}

  \ReduceEdge{477.834701}{81.998336}{387.694514}{159.265078}
  \ReduceEdge{477.834701}{81.998336}{477.083881}{159.265078}
  \ReduceEdge{477.834701}{81.998336}{566.473247}{159.265078}
  \ReduceEdge{660.450962}{81.998336}{659.507223}{159.265078}
  \ReduceEdge{660.450962}{81.998336}{566.473247}{159.265078}
  \ReduceEdge{660.450962}{81.998336}{752.546412}{159.265078}
  \ReduceEdge{842.055701}{81.998336}{752.546412}{159.265078}
  \ReduceEdge{842.055701}{81.998336}{841.888852}{159.265078}
  \ReduceEdge{842.055701}{81.998336}{937.472489}{159.265078}
  \ReduceEdge{1033.520175}{81.998336}{937.472489}{159.265078}
  \ReduceEdge{1033.520175}{81.998336}{1033.056126}{159.265078}
  \ReduceEdge{1033.520175}{81.998336}{1128.629334}{159.265078}
  \ReduceEdge{1224.212971}{81.998336}{1128.629334}{159.265078}
  \ReduceEdge{1224.212971}{81.998336}{1224.212971}{159.265078}
  \ReduceEdge{1380.988489}{83.108925}{1380.988489}{160.375667}
  \ReduceEdge{1380.988489}{83.108925}{1476.577340}{160.375667}

  \ReduceCircle{660.450962}{331.103935}
  \ReduceCircle{937.305640}{332.209310}
  \ReduceSquare{808.800604}{298.015687}{875.310798}{364.525881}{reduceOrange}
  \ReduceEdge{566.473247}{225.775272}{660.450962}{296.905098}
  \ReduceEdge{752.546412}{225.775272}{660.450962}{296.905098}
  \ReduceEdge{694.644585}{331.103935}{808.800604}{331.270784}
  \ReduceEdge{875.310798}{331.270784}{903.106803}{332.209310}
  \ReduceEdge{937.472489}{225.775272}{937.305640}{298.015687}

  \ReduceCircle{1129.573074}{331.437633}
  \ReduceSquare{1001.542516}{298.015687}{1068.047495}{364.525881}{reduceOrange}
  \ReduceEdge{971.499262}{332.209310}{1001.542516}{331.270784}
  \ReduceEdge{1095.374238}{331.437633}{1068.047495}{331.270784}
  \ReduceEdge{1128.629334}{225.775272}{1129.573074}{297.238796}

  \ReduceCircle{1572.150549}{48.910089}
  \ReduceSquare{1538.728603}{160.375667}{1605.238797}{226.880647}{reduceYellow}
  \ReduceEdge{1572.150549}{83.108925}{1476.577340}{160.375667}
  \ReduceEdge{1572.150549}{83.108925}{1571.983700}{160.375667}
  \ReduceSquare{1634.150605}{160.375667}{1700.655584}{226.880647}{reduceOrange}
  \ReduceEdge{1572.150549}{83.108925}{1667.405702}{160.375667}

  % Horizontal ellipsis in the middle chain.
  \ReduceDot{1296.625617}{332.240460}
  \ReduceDot{1304.300675}{332.240460}
  \ReduceDot{1311.954377}{332.240460}

  \ReduceEdge{1250.559476}{108.344841}{1224.212971}{81.998336}
  \ReduceDot{1263.028639}{119.178928}
  \ReduceDot{1268.451902}{124.602190}
  \ReduceDot{1273.873830}{130.024118}
  \ReduceEdge{1356.565963}{106.905768}{1380.988489}{83.108925}
  \ReduceDot{1341.293382}{117.973841}
  \ReduceDot{1335.870119}{123.397103}
  \ReduceDot{1330.448191}{128.819031}

  \ReduceSquare{1191.901614}{299.126276}{1258.411808}{365.631256}{reduceOrange}
  \ReduceEdge{1191.901614}{332.376159}{1163.766697}{331.437633}
  \ReduceEdge{1258.411808}{332.376159}{1281.979233}{332.376159}
  \ReduceSquare{1348.843981}{299.126276}{1415.354175}{365.631256}{reduceOrange}
  \ReduceEdge{1348.843981}{332.376159}{1324.165967}{332.376159}
  \ReduceCircle{1572.927439}{331.437633}
  \ReduceEdge{1538.728603}{331.437633}{1415.354175}{332.376159}
  \ReduceEdge{1476.577340}{226.880647}{1572.927439}{297.238796}

  \ReduceCircle{1762.368869}{48.910089}
  \ReduceSquare{1729.113772}{160.375667}{1795.623965}{226.880647}{reduceYellow}
  \ReduceSquare{1824.087367}{159.265078}{1890.597560}{225.775272}{reduceYellow}
  \ReduceEdge{1762.368869}{83.108925}{1667.405702}{160.375667}
  \ReduceEdge{1762.368869}{83.108925}{1762.368869}{160.375667}
  \ReduceEdge{1762.368869}{83.108925}{1857.342464}{159.265078}
  \ReduceEdge{1667.405702}{226.880647}{1572.927439}{297.238796}

  \ReduceEdge{842.055701}{364.525881}{883.877453}{406.347632}
  \ReduceEdge{1034.792398}{364.525881}{988.648214}{411.707657}
  \ReduceEdge{1225.156711}{365.631256}{1225.156711}{411.707657}
  \ReduceEdge{1382.099078}{365.631256}{1334.369828}{413.370934}

  \ReduceDot{1222.278404}{483.466841}
  \ReduceDot{1222.278404}{491.141899}
  \ReduceDot{1222.278404}{498.795601}
  \ReduceDot{1325.648893}{419.722904}
  \ReduceDot{1320.225631}{425.146166}
  \ReduceDot{1314.803703}{430.568094}

  \ReduceCircle{727.581626}{572.398913}
  \ReduceEdge{698.362190}{516.499275}{727.581626}{538.205291}
  \ReduceEdge{757.025265}{514.403234}{727.581626}{538.205291}
  \ReduceSquare{900.604068}{538.976967}{967.114261}{605.487161}{reduceOrange}
  \ReduceEdge{900.604068}{572.232064}{761.780463}{572.398913}
  \ReduceCircle{1029.109103}{573.175804}
  \ReduceEdge{967.114261}{573.342653}{994.910267}{573.175804}
  \ReduceCircle{1221.376538}{573.509502}
  \ReduceSquare{1093.345979}{540.087556}{1159.850959}{606.597750}{reduceOrange}
  \ReduceEdge{1063.302726}{573.175804}{1093.345979}{573.342653}
  \ReduceEdge{1187.177701}{573.509502}{1159.850959}{573.342653}
  \ReduceSquare{1287.453967}{540.087556}{1353.964161}{606.597750}{reduceOrange}
  \ReduceCircle{1525.949010}{572.398913}
  \ReduceEdge{1491.750173}{572.398913}{1353.964161}{573.342653}
  \ReduceEdge{1255.570160}{573.509502}{1287.453967}{573.342653}
  \ReduceEdge{1496.145602}{516.499275}{1525.370253}{538.205291}
  \ReduceEdge{1554.808678}{514.403234}{1525.370253}{538.205291}

  \ReduceDot{679.170954}{496.877499}
  \ReduceDot{684.594216}{502.300762}
  \ReduceDot{690.016144}{507.722690}
  \ReduceEdge{1029.109103}{538.976967}{1029.109103}{507.046235}
  \ReduceEdge{1221.376538}{539.310666}{1221.376538}{505.998214}
  \ReduceDot{980.713938}{416.885607}
  \ReduceDot{975.290675}{422.308870}
  \ReduceDot{969.868747}{427.730798}
  \ReduceDot{901.010075}{423.292133}
  \ReduceDot{895.586812}{417.868870}
  \ReduceDot{890.164884}{412.446942}
  \ReduceDot{776.305543}{497.903221}
  \ReduceDot{770.882281}{503.326484}
  \ReduceDot{765.460353}{508.748412}
  \ReduceDot{1226.380786}{418.572698}
  \ReduceDot{1226.380786}{426.247756}
  \ReduceDot{1226.380786}{433.901458}
  \ReduceDot{1030.009735}{485.684346}
  \ReduceDot{1030.009735}{493.359404}
  \ReduceDot{1030.009735}{501.013105}
  \ReduceDot{1571.514643}{495.685718}
  \ReduceDot{1566.091380}{501.108980}
  \ReduceDot{1560.669452}{506.530908}
  \ReduceDot{1479.556811}{498.365445}
  \ReduceDot{1484.980074}{503.788708}
  \ReduceDot{1490.402002}{509.210636}

  \ReduceCircle{1128.462485}{745.895832}
  \ReduceEdge{933.859165}{605.487161}{1128.462485}{711.702210}
  \ReduceEdge{1126.601076}{606.597750}{1128.462485}{711.702210}
  \ReduceEdge{1320.709064}{606.597750}{1128.462485}{711.702210}
\end{tikzpicture}
\endgroup%
	}
	\caption{Second phase of the reduction in the proof of Theorem~\ref{thm:np33}. Yellow nodes denote original check nodes of degree two, blue nodes denote variable nodes generated during the iterative construction, and red nodes denote auxiliary check nodes.}
	\label{fig:np33_reduction}
\end{figure*}

Theorem~\ref{thm:np33} demonstrates that the smallest nontrivial biregular class already captures the classical hardness of the MDP. The remaining results in this section record how the two degree parameters can be amplified while tracking the decision threshold.

\subsection{Degree Amplification for Fixed Degrees $J,K\geq 3$}

This subsection extends the framework to instances where the degrees on either side of the bipartition are scaled to arbitrary fixed integers $J, K \ge 3$. The key new point is the right-degree amplification. A degree-two auxiliary block only forces auxiliary supports to contain long cycles, so a Karp reduction would require girth larger than the target threshold. Instead, we use a degree-three single-port gadget. Any nonzero relative support inside the gadget induces an essentially cubic graph, and the cubic Moore bound then makes its weight exponential in the girth. Hence logarithmic girth suffices, and the gadget remains polynomial in the input size.

\begin{lemma}\label{lem:2lr}
	For any $2$-left regular Tanner graph, the cardinality of a minimum $(a,0)$-TS is exactly half of the graph girth.
\end{lemma}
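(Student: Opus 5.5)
The plan is to pass to the associated multigraph viewpoint and compare supports with cycles of the Tanner graph directly. Let $\mathcal{G}=(L,R,E)$ be $2$-left regular and write $g=g(\mathcal{G})$. Since $\mathcal{G}$ is bipartite, every cycle alternates between variable and check nodes. Hence a cycle of length $2\ell$ contains exactly $\ell$ variable nodes, and $g$ is even (or infinite). I will prove the two inequalities $\min|S|\le g/2$ and $\min|S|\ge g/2$ separately. I will also interpret the statement in the degenerate case: if $\mathcal{G}$ is acyclic, then there is no $(a,0)$-TS, so both sides are infinite.

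\emph{Upper bound.} Take a shortest cycle $C$ of length $g=2\ell$ and let $S$ be its set of $\ell$ variable nodes. Each $v\in S$ has degree exactly two in $\mathcal{G}$, and both of its incident edges lie on $C$. Therefore the edge set of $\mathcal{G}_S$ is exactly the edge set of $C$, and $N(S)$ is exactly the set of check nodes on $C$. Every check node of $C$ then has degree exactly two in $\mathcal{G}_S$. Thus $S$ is a nonempty $(a,0)$-TS with $|S|=g/2$. Equivalently, by Proposition~\ref{prop:trap-codeword}, $\boldsymbol{c}_S$ is a codeword.

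\emph{Lower bound.} Let $S$ be any nonempty $(a,0)$-TS. In $\mathcal{G}_S$, every variable node has degree $2$, because all of its neighbors belong to $N(S)$. Every check node in $N(S)$ has positive even degree, hence degree at least $2$. So $\mathcal{G}_S$ is a nonempty finite graph with minimum degree at least two, and it therefore contains a cycle; for instance, take a maximal path and use the degree of its endpoint. This cycle is a cycle of $\mathcal{G}$. Its variable nodes all lie in $S$, so its length $2\ell'$ satisfies $g\le 2\ell'\le 2|S|$. This gives $|S|\ge g/2$. It also shows that an acyclic $\mathcal{G}$ has no $(a,0)$-TS.

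The only point needing care, and the one I expect to be the main obstacle in writing it cleanly, is the claim in the upper bound that $\mathcal{G}_S$ contains no edges beyond those of $C$. A check node on $C$ could a priori be adjacent to a variable of $S$ through a chord. This is ruled out precisely by $2$-left regularity: each variable of $S$ already spends both of its edges on $C$. Note that simplicity of $\mathcal{G}$ gives $g\ge4$, which matches the parallel-edge convention in the hypergraph discussion of Section~\ref{sec:left}: a $2$-cycle there corresponds to a Tanner $4$-cycle, i.e., weight $2=4/2$.
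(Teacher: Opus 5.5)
Your proof is correct and follows the paper's argument: the variable nodes of a shortest cycle give the upper bound, and for the lower bound any nonempty $(a,0)$-TS $S$ yields a cycle in $\mathcal{G}_S$ whose variable nodes all lie in $S$. The differences are minor. You extract a single cycle via the minimum-degree-two argument, whereas the paper decomposes the even-degree graph $\mathcal{G}_S$ into edge-disjoint cycles. You also spell out two points the paper leaves implicit: why no chords arise in the upper bound, and the acyclic (infinite-girth) case.
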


\begin{proof}
	Let $\mathcal{Z}=(L_Z,R_Z,E_Z)$ be a $2$-left regular Tanner graph with girth $g$, and let $a_{\min}$ be the minimum trapping-set cardinality. Since $\mathcal{Z}$ is bipartite, $g$ is even.
	
	For the upper bound, take a shortest cycle $C$ of length $g$. The cycle alternates between $L_Z$ and $R_Z$, and therefore contains exactly $g/2$ variable nodes. Let $S$ be this set of variable nodes. Every check node on $C$ has exactly two neighbors in $S$, and every check node outside $C$ has zero neighbors in $S$. Hence $S$ is a $(a,0)$-TS and $a_{\min}\leq g/2$.
	
	For the lower bound, let $S\subseteq L_Z$ be any nonempty $(a,0)$-TS and consider the subgraph induced by $S\cup N(S)$. Every variable node has degree two in this induced subgraph, and every check node has even degree by the trapping-set condition. Thus all nodes in the induced subgraph have even degree, so the subgraph decomposes into edge-disjoint cycles. Each such cycle has length at least $g$ and contains at least $g/2$ variable nodes. Therefore $|S|\geq g/2$. Combining the two bounds gives $a_{\min}=g/2$.
\end{proof}

The deterministic relationship between girth and trapping-set cardinality in $2$-left regular graphs, outlined by Lemma~\ref{lem:2lr}, is sufficient for the completion block in the proof of Theorem~\ref{thm:np33}. For a polynomial-time right-degree amplification, however, we need an auxiliary block whose forbidden nonempty relative supports have size polynomially larger than the input threshold while the block itself remains polynomial in the input size. The next lemma supplies high-girth regular Tanner graphs of logarithmic size in the required threshold.

\begin{lemma}[Algorithmic high-girth supply]\label{lem:high-girth-supply}
	For every fixed integer $K\geq 3$, there is a constant $A_K>1$ and a deterministic algorithm which, on input $h\geq 3$, outputs a simple $(3,K)$-regular Tanner graph $\mathcal{B}_h$ such that
	\[
	\operatorname{girth}(\mathcal{B}_h)\geq h,
	\qquad
	|V(\mathcal{B}_h)|+|C(\mathcal{B}_h)|\leq A_K^h .
	\]
	The running time is polynomial in the output size.
\end{lemma}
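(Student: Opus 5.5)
Our plan is to build $\mathcal{B}_h$ from a high-girth $K$-regular base structure by subdivision and then to regularize the variable side. The key observation is this. If $G$ is a simple $K$-regular multigraph-free graph of girth $g$, then its vertex--edge incidence graph is a $(2,K)$-regular Tanner graph of girth $2g$: edges of $G$ are variables of degree two, and vertices of $G$ are checks of degree $K$. We need left degree three rather than two, so we use a $3$-uniform analogue. Let $\mathcal{H}$ be a $3$-uniform $K$-regular hypergraph whose incidence bipartite graph has girth at least $h$. Its Tanner graph is then exactly a $(3,K)$-regular Tanner graph of girth at least $h$. So the task is to construct deterministically a $3$-uniform, $K$-regular, high-girth linear hypergraph with exponential-in-girth size.

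First, we produce a deterministic $K'$-regular simple graph of girth at least $h$ and size at most $c^h$. Candidates are the explicit Erd\H{o}s--Sachs construction, which is deterministic and polynomial in the output, or explicit Cayley/LPS-type graphs. The Erd\H{o}s--Sachs greedy argument is the most self-contained route. Start from any $k$-regular graph, for instance a disjoint union of copies of $K_{k+1}$ on $n\approx (k-1)^{h}$ vertices. Then repeatedly apply a local switching that destroys a short cycle while preserving degrees. The standard counting argument shows that one can always switch an edge on a short cycle with an edge far away from it (at distance at least $h$), because the ball of radius $h$ has at most $k^{h}$ vertices while $n$ exceeds this. Each switch strictly decreases the number of cycles of length below $h$, which is at most $n k^{h}$, so the process runs in time polynomial in $n$, and $n\le A^h$.

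Second, we convert to the $(3,K)$ setting. One option is to take the bipartite incidence structure directly: choose a $K$-regular bipartite-girth graph and form hyperedges. A cleaner option is to run the same Erd\H{o}s--Sachs switching directly on bipartite graphs with prescribed degrees $3$ on the left and $K$ on the right. Start from $K$ disjoint copies of the complete bipartite graph $K_{K,3}$, or more generally a union of copies of a fixed small $(3,K)$-biregular graph; this gives $|L|=K t$ and $|R|=3t$ for $t$ copies. Then switch pairs of edges $\{v,r\},\{v',r'\}$ to $\{v,r'\},\{v',r\}$, where $\{v',r'\}$ is chosen at bipartite distance at least $h$ from $\{v,r\}$. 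Bipartite switchings preserve both degree sequences and simplicity. Taking $t$ proportional to $(K)^{h}$ makes the far edge always exist, so $A_K$ can be taken to be roughly a constant times $3K$.

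The main obstacle is verifying that a single switch removes the chosen short cycle without creating any new cycle of length less than $h$. A new short cycle would have to use a new edge $\{v,r'\}$ or $\{v',r\}$. Because $r'$ was at distance at least $h$ from $v$ before the switch, any such cycle would yield a short path in the old graph between the far-apart endpoints, which is a contradiction. Making this rigorous requires the usual care: we pick the far edge so that both of its endpoints are outside the radius-$h$ balls of both endpoints of the near edge. With that choice in place, the potential-function argument (the number of short cycles strictly decreases) and the size bound $|L|+|R|=(K+3)t\le A_K^h$ finish the proof. Enumerating the short cycles and balls by BFS gives polynomial running time in the output size.
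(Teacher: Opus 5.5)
Your route (Erd\H{o}s--Sachs-type switching directly on $(3,K)$-biregular bipartite graphs) is genuinely different from the paper's. The paper takes Morgenstern's explicit $(q+1)$-regular Ramanujan Cayley graphs with $q\equiv-1\pmod{\operatorname{lcm}(3,K)}$, passes to the bipartite double cover, and splits vertices into blocks of $3$ and of $K$ edges, so it needs no switching analysis. Your route, however, fails at its central step.

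You claim the switch $\{v,r\},\{v',r'\}\mapsto\{v,r'\},\{v',r\}$ creates no cycle shorter than $h$, because such a cycle would give a short old path between far-apart vertices. That argument is correct for cycles using only one new edge, and for cycles of the form $v,r',\dots,r,v',\dots,v$. It fails for cycles of the form $v,r',\dots,v',r,\dots,v$. There the two old paths run from $r'$ to $v'$ and from $r$ to $v$, i.e.\ between the endpoints of the \emph{deleted} edges, and your distance condition says nothing about them. Such a cycle is a short cycle through $\{v',r'\}$ spliced with a short cycle through $\{v,r\}$, and has length $\ell_1+\ell_2$.

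In your starting graph every edge lies on a $4$-cycle, so no choice of far edge avoids this. Take $\{v,r\}$ and $\{v',r'\}$ in two different copies of $K_{K,3}$. Each lies on $2(K-1)$ four-cycles and $2(K-1)(K-2)$ six-cycles, so the switch destroys $4(K-1)^2$ cycles. It creates $\bigl(2(K-1)^2\bigr)^2=4(K-1)^4$ new cycles of lengths $8$, $10$ and $12$. For $h>12$ and $K\geq 3$ the number of cycles shorter than $h$ therefore strictly \emph{increases}. So your potential function is not monotone, and neither termination nor polynomial running time is established. Your ordinary-graph variant starting from copies of $K_{k+1}$ has the same defect.

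The idea can be repaired, but it needs a different invariant. One repair: always take $\{v,r\}$ on a currently shortest cycle, of length $g_c$. Every spliced cycle then has length at least $2g_c$. Hence the number of $g_c$-cycles strictly drops with each switch and no cycle of length at most $g_c$ is created. Each phase then takes at most $NK^{h}$ switches, and there are fewer than $h$ phases, which is polynomial once $t\geq cK^{h}$.

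A second repair is the genuine Erd\H{o}s--Sachs argument, which keeps girth $\geq h$ throughout. Start from the empty graph with $|L|=Kt$ and $|R|=3t$, and let $x$ be a left vertex of degree $<3$. If some right vertex of degree $<K$ lies at distance at least $h-1$ from $x$, join it to $x$. Otherwise pick any deficient $y$, delete an edge $\{u,w\}$ whose endpoints are at distance at least $h$ from both $x$ and $y$, and add $\{x,w\}$ and $\{u,y\}$. Ball counting shows such an edge exists once $t\geq cK^{h+1}$.

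With either repair your proof becomes a self-contained, elementary alternative to the paper's. It avoids Dirichlet's theorem and Morgenstern's construction, at the price of an explicit algorithmic analysis.
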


\begin{proof}
	Let $m=\operatorname{lcm}(3,K)$. By Dirichlet's theorem on primes in arithmetic progressions~\cite[Ch.~7]{Apostol76}, choose once and for all an odd prime $q$ with $q\equiv -1\pmod m$, and put $R=q+1$. Since $K$ is fixed, both $q$ and $R$ are constants hard-wired into the reduction, and $R$ is divisible by both $3$ and $K$.
	
	On input $h$, set $d=2h$. Deterministically construct a monic irreducible polynomial $g_h\in\mathbb{F}_q[x]$ of degree $d$ by enumerating monic degree-$d$ polynomials and applying the standard irreducibility test over finite fields. There are $q^d$ candidates, and $q$ is constant, so this search is polynomial in the size of the graph constructed below.
	
	Apply Morgenstern's explicit construction~\cite[Theorem~4.13]{Morgenstern94} with the fixed prime power $q$ and the polynomial $g_h$. It outputs a simple $R$-regular Cayley graph $X_h$ whose full adjacency list is computable in time polynomial in its order. The estimates in the construction give
	\[
		|V(X_h)|\leq q^{3d}=q^{6h},
		\qquad
		\operatorname{girth}(X_h)>2d>h .
	\]
	Thus the family is selected deterministically for every requested $h$, and there is no gap between the infinite explicit family and the algorithmic quantifier used here.
	
	Take the bipartite double cover of $X_h$. On one side, partition the $R$ incident edges at every vertex into blocks of three and split the vertex into $R/3$ vertices, one for each block. On the other side, partition the incident edges into blocks of $K$ and split the vertex into $R/K$ vertices. The resulting bipartite graph is $(3,K)$-regular and simple.
	
	Vertex splitting does not decrease the girth: a cycle in the split graph projects to a nonbacktracking closed walk in the graph before splitting, and every nonbacktracking closed walk contains a cycle no longer than itself. The bipartite double cover also does not decrease girth. Since $R$ and $K$ are fixed, the size and running-time bounds follow after increasing the constant $A_K$.
\end{proof}

Fix $K\geq 4$ and a nonnegative target weight $w$. Set
\[
	g_w=4\left\lceil \log_2(w+1)\right\rceil+8 .
\]
We construct a single-port gadget $\mathcal{P}_{K,w}=(U,W,E_P)$ with a distinguished check node $\rho\in W$ such that every variable node in $U$ has degree three, $\deg_{\mathcal{P}_{K,w}}(\rho)=K-3$, every check node in $W\setminus\{\rho\}$ has degree $K$, and every nonempty variable set satisfying all checks except possibly $\rho$ has cardinality greater than $w$.

By Lemma~\ref{lem:high-girth-supply}, construct a $(3,K)$-regular Tanner graph $\mathcal{B}=(U_B,W_B,E_B)$ with
\[
	\operatorname{girth}(\mathcal{B})\geq 2g_w+8.
\]
Choose a check node $\rho\in W_B$. Let $L$ be the smallest odd integer with $L\geq g_w+2$. Starting at $\rho$, follow a nonbacktracking path of length $L$ and let its variable endpoint be $v$. This path is simple because $L<\operatorname{girth}(\mathcal{B})$. Moreover,
\[
	\operatorname{dist}_{\mathcal{B}}(\rho,v)\geq g_w+2,
\]
because a shorter $\rho$--$v$ path of length at most $g_w+1$, together with the chosen path of length at most $g_w+3$, would contain a cycle of length at most $2g_w+4$.

Write $N_{\mathcal{B}}(v)=\{a_1,a_2,a_3\}$ and choose three distinct variables $u_1,u_2,u_3\in N_{\mathcal{B}}(\rho)$. Delete $v$, delete the six edges $\{v,a_i\}$ and $\rho u_i$, and insert the three edges $a_i u_i$. Formally,
\begin{align*}
	U=&U_B\setminus\{v\},\quad W=W_B,\nonumber\\
	E_P
	=&\left(E_B\setminus
	\bigl(\{\{v,a_i\}:1\leq i\leq 3\}
	\cup\{\rho u_i:1\leq i\leq 3\}\bigr)\right)\\
	&\cup\{\{a_i, u_i\}:1\leq i\leq 3\}.
\end{align*}
The distance condition implies that no $a_i$ was adjacent to any $u_j$ in $\mathcal{B}$, so the resulting Tanner graph is simple. Every $a_i$ loses $v$ and gains $u_i$, while every $u_i$ loses $\rho$ and gains $a_i$. Hence all variable degrees remain three, all check degrees other than that of $\rho$ remain $K$, and $\deg_{\mathcal{P}_{K,w}}(\rho)=K-3$.

\begin{lemma}[Girth after the defect switch]\label{lem:defect-girth}
	The graph $\mathcal{P}_{K,w}$ satisfies
	\[
	\operatorname{girth}(\mathcal{P}_{K,w})\geq g_w .
	\]
\end{lemma}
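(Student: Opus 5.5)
Let $C$ be any cycle in $\mathcal{P}_{K,w}$. The plan is to split into cases according to how many of the three new edges $a_iu_i$ the cycle $C$ uses, and to map $C$ back to a closed walk in $\mathcal{B}$ whose length is controlled.

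\textbf{Case 0.} If $C$ uses none of the new edges, then $C$ is a cycle of $\mathcal{B}$, because deleting $v$ and the edges $\rho u_i$ only removes structure. Hence $|C|\geq\operatorname{girth}(\mathcal{B})\geq 2g_w+8>g_w$.

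\textbf{Cases with new edges.} In the remaining cases, replace each new edge $a_iu_i$ by the length-$4$ path $a_i\,v\,a_j\ldots$. More concretely, I would use the two-step detour $a_i - v$ and $u_i - \rho$ as follows: an edge $a_iu_i$ traversed in $C$ is replaced by the walk $a_i\to v$ and then by a jump to $\rho\to u_i$. That jump is not an edge, so instead I would cut $C$ at the new edges. Suppose $C$ uses $k\geq1$ new edges. Deleting them splits $C$ into $k$ arcs, each a path in $\mathcal{B}-v$ that avoids the edges $\rho u_j$. Each arc has its two endpoints in the set $\{a_1,a_2,a_3,u_1,u_2,u_3\}$.

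Consider the pairing of endpoints around $C$. Since the edges $a_iu_i$ are traversed alternately, some arc must join two points of the same type, both in $\{a_i\}$ or both in $\{u_j\}$, or else join some $a_i$ to some $u_j$.

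\emph{Arc from $a_i$ to $a_j$ with $i\neq j$, of length $\ell$.} Adding $v$ closes a cycle of length $\ell+2$ in $\mathcal{B}$. Therefore $\ell+2\geq 2g_w+8$.

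\emph{Arc from $u_i$ to $u_j$.} Adding $\rho$ closes a cycle of length $\ell+2$ in $\mathcal{B}$, which again gives $\ell\geq 2g_w+6$.

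\emph{Arc from $a_i$ to $u_j$, of length $\ell$.} Prepending $v$ and appending $\rho$ gives a $\rho$--$v$ walk of length $\ell+2$ in $\mathcal{B}$. That walk is a path, since the arc avoids $v$ and $\rho$ (arcs cannot pass through $\rho$ using the deleted edges, but they could pass through $\rho$ via its other neighbors, so a little care is needed there; it still yields a $\rho$--$v$ walk). The distance bound then gives $\ell+2\geq\operatorname{dist}_{\mathcal{B}}(\rho,v)\geq g_w+2$, so $\ell\geq g_w$.

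Degenerate arcs of length $0$ would require $a_i=u_j$, which is impossible since one is a check and the other a variable. An arc from $a_i$ to $a_i$ cannot occur in a cycle. In every case $C$ contains an arc of length at least $g_w$, hence $|C|\geq g_w$.

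\textbf{Main obstacle.} The main obstacle is ensuring that at least one arc falls into a case that gives a long bound, i.e.\ that all arcs cannot be short. The case analysis above handles this because every arc type yields length at least $g_w$, so no combinatorial counting is needed. The remaining subtlety is verifying that the constructed walks in $\mathcal{B}$ are genuine cycles or paths; in particular, that the $u_i$--$u_j$ arc does not pass through $\rho$. If it does, I would split it at $\rho$ and use the $\rho$--$u$ subpaths together with edges $\rho u$ to again produce a cycle in $\mathcal{B}$ of length at most $\ell+2$.
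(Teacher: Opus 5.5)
Your proof is correct and follows essentially the paper's argument: cut the cycle at the new edges $a_iu_i$, then close each remaining arc in $\mathcal{B}$ through $v$ (for $a$--$a$ arcs) or through $\rho$ or a deleted edge $\rho u_i$ (for $u$--$u$ arcs), or turn it into a short $\rho$--$v$ connection (for $a$--$u$ arcs), contradicting either $\operatorname{girth}(\mathcal{B})\geq 2g_w+8$ or $\operatorname{dist}_{\mathcal{B}}(\rho,v)\geq g_w+2$. In the mixed case, bounding the distance by the length of a walk is a small simplification over the paper's separate cycle argument when the arc passes through $\rho$; the incorrect aside that this walk ``is a path'' and the abandoned opening detour about length-$4$ paths are unnecessary and can be deleted.
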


\begin{proof}
	Suppose that $\mathcal{P}_{K,w}$ contains a cycle $Z$ of length less than $g_w$. The cycle must use at least one new edge $a_i u_i$, because all other edges form a subgraph of $\mathcal{B}$. Delete all new edges from $Z$. Each resulting component is a path whose endpoints lie in $A\cup U_0$, where $A=\{a_1,a_2,a_3\}$ and $U_0=\{u_1,u_2,u_3\}$. Let $R$ be one of these paths.
	
	If both endpoints of $R$ lie in $A$, adjoining the two-edge path through $v$ gives a cycle in $\mathcal{B}$ shorter than $g_w+2$. If both endpoints lie in $U_0$ and $R$ avoids $\rho$, adjoining the two-edge path through $\rho$ gives such a cycle. If both endpoints lie in $U_0$ and $R$ contains $\rho$, then the subpath from either endpoint $u_i$ to $\rho$, together with the deleted edge $\rho u_i$, gives an even shorter cycle in $\mathcal{B}$.
	
	It remains that $R$ has one endpoint in $A$ and one endpoint $u_i$ in $U_0$. If $R$ contains $\rho$, the $u_i$--$\rho$ subpath together with the deleted edge $\rho u_i$ is a cycle in $\mathcal{B}$ shorter than $g_w+1$. If $R$ avoids $\rho$, adjoining the edge from the $A$-endpoint to $v$ and the edge $u_i\rho$ gives a $v$--$\rho$ path of length less than $g_w+2$. These alternatives contradict either the girth of $\mathcal{B}$ or the distance lower bound between $\rho$ and $v$. Therefore no such cycle $Z$ exists.
\end{proof}

Define the relative distance of the port gadget by
\[
	d_{\mathrm{rel}}(\mathcal{P}_{K,w},\rho)
	=
	\min\left\{
	|Y|:
	\begin{array}{l}
		\emptyset\neq Y\subseteq U,\\
		|N_{\mathcal{P}_{K,w}}(c)\cap Y|\equiv 0\pmod 2\\
		\forall c\in W\setminus\{\rho\}
	\end{array}
	\right\}.
\]
If no such nonempty set exists, the minimum is interpreted as $+\infty$.

\begin{lemma}[Relative-distance bound]\label{lem:relative-distance}
	The gadget $\mathcal{P}_{K,w}$ satisfies
	\[
	d_{\mathrm{rel}}(\mathcal{P}_{K,w},\rho)>w.
	\]
	Moreover, its order is $(w+1)^{O_K(1)}$, and it can be constructed in deterministic time $(w+1)^{O_K(1)}$.
\end{lemma}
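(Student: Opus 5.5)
Let $Y\subseteq U$ be nonempty and satisfy every check of $W\setminus\{\rho\}$. The plan is to pass to an ordinary multigraph on $Y$ and apply the Moore bound, using the girth bound of Lemma~\ref{lem:defect-girth}.

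\textbf{Step 1: the induced graph.} Consider the subgraph $\mathcal{P}_Y$ induced by $Y\cup N(Y)$. Every selected variable has degree three in it. Every check $c\neq\rho$ has even degree there, while $\rho$ may have odd degree. Form an auxiliary graph $\Gamma$ with vertex set $Y$: at each check $c\neq\rho$, pair up its selected neighbours arbitrarily and add one edge per pair; at $\rho$, pair up all but at most one selected neighbour. Then every vertex of $\Gamma$ has degree three, except at most one vertex of degree two (the unpaired neighbour of $\rho$). This is the ``essentially cubic'' graph.

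\textbf{Step 2: girth of $\Gamma$.} Each cycle of length $\ell$ in $\Gamma$ lifts to a closed walk of length $2\ell$ in $\mathcal{P}_{K,w}$ that alternates variables and checks. Consecutive edges of $\Gamma$ meeting at a vertex use distinct checks, because a variable has three distinct check neighbours and the pairing gives each incident check at most one edge at that variable. Hence the lift is nonbacktracking, and so it contains a cycle of length at most $2\ell$. If two edges of $\Gamma$ are parallel, the two endpoints share two distinct checks, which gives a $4$-cycle in $\mathcal{P}_{K,w}$. Therefore $\operatorname{girth}(\Gamma)\geq g_w/2=2\lceil\log_2(w+1)\rceil+4$.

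\textbf{Step 3: the Moore bound.} $\Gamma$ has minimum degree at least two, and at most one vertex has degree two. Take a vertex $x$ of degree three that is not the defective one, or one of its neighbours, and set $r=\lfloor(\operatorname{girth}(\Gamma)-1)/2\rfloor\geq\lceil\log_2(w+1)\rceil+1$. The ball of radius $r$ around $x$ is a tree. In this tree every non-root vertex at depth less than $r$ has at least two children, except along the path to the single defective vertex, which loses only one child. Counting gives at least $2^{r-1}+1>w$ vertices, so $|Y|>w$.

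This step is the main obstacle. The bookkeeping must check that the one degree-two vertex cannot collapse the exponential growth; a crude bound of $2^{r-1}$ suffices thanks to the slack ``$+8$'' in $g_w$. A cleaner route is the Alon--Hoory--Linial irregular Moore bound for average degree $\bar d\geq 3-2/|Y|$, but the direct tree count should be enough. It is also worth confirming that $Y$ nonempty forces $\Gamma$ to have an edge. This holds because each $y\in Y$ has degree three and at most one of its checks is $\rho$, so $y$ has at least two parity-constrained checks, each containing another selected vertex.

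\textbf{Step 4: size and running time.} Lemma~\ref{lem:high-girth-supply} is called with $h=2g_w+8=O(\log(w+1))$. It outputs $\mathcal{B}$ of order at most $A_K^{h}=(w+1)^{O_K(1)}$ in time polynomial in that order. Finding the path, choosing $u_i$, and performing the switch are all linear in the order. Hence both the order and the construction time of $\mathcal{P}_{K,w}$ are $(w+1)^{O_K(1)}$.
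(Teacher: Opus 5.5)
Your proof is correct and follows the paper's argument: the pairing graph on $Y$, the girth transfer $\operatorname{girth}(\Gamma)\geq g_w/2$ via nonbacktracking lifts (with parallel edges excluded by the $4$-cycle argument), a Moore-type count, and the size bound from Lemma~\ref{lem:high-girth-supply} with $h=2g_w+8$. The only deviation is in handling the possible degree-two vertex. The paper makes the graph genuinely cubic (a cubic component, or two copies of the unique component joined by a bridge) and invokes the cubic Moore bound. You instead count directly in the radius-$r$ tree around a degree-three root, using $n_1=3$ and $n_{i+1}\geq 2n_i-1$; this is equally valid and self-contained.
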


\begin{proof}
	Let $Y\subseteq U$ be nonempty and satisfy all parity constraints in $W\setminus\{\rho\}$. Construct a multigraph $Q_Y$ with vertex set $Y$. For every $c\in W\setminus\{\rho\}$, pair the variables in $N_{\mathcal{P}_{K,w}}(c)\cap Y$ arbitrarily and add one edge to $Q_Y$ for each pair. At the port $\rho$, pair as many variables as possible. If $|N_{\mathcal{P}_{K,w}}(\rho)\cap Y|$ is odd, leave exactly one variable unpaired.
	
	No loop is created, because each paired edge at a check node joins two distinct variables. If two parallel edges were created, then the two corresponding Tanner checks would be adjacent to the same two variables, producing a $4$-cycle in $\mathcal{P}_{K,w}$; this is impossible since $g_w\geq 8$ and Lemma~\ref{lem:defect-girth} gives $\operatorname{girth}(\mathcal{P}_{K,w})\geq g_w$. Hence $Q_Y$ is a simple graph.
	
	Every variable in $\mathcal{P}_{K,w}$ has degree three. Consequently, every vertex of $Q_Y$ has degree three, except that there may be one vertex of degree two. A cycle of length $q$ in $Q_Y$ gives a nonbacktracking closed walk of length $2q$ in $\mathcal{P}_{K,w}$, and such a walk contains a cycle of length at most $2q$. By Lemma~\ref{lem:defect-girth},
	\[
		\operatorname{girth}(Q_Y)\geq \frac{g_w}{2}.
	\]
	
	If $Q_Y$ has a cubic component, use such a component. Otherwise, the unique noncubic component contains the possible degree-two vertex. Take two disjoint copies of this component and join their degree-two vertices by one edge. The added edge is a bridge, and the resulting graph $\widehat Q$ is cubic. In either case we obtain a simple cubic graph $\widehat Q$ such that
	\[
		|V(\widehat Q)|\leq 2|Y|,
		\qquad
		\operatorname{girth}(\widehat Q)\geq \frac{g_w}{2}.
	\]
	The cubic Moore bound~\cite{Biggs93} gives
	\[
		2|Y|
		\geq |V(\widehat Q)|
		\geq 2^{g_w/4-1},
	\]
	and hence
	\[
		|Y|\geq 2^{g_w/4-2}
		=2^{\lceil\log_2(w+1)\rceil}
		\geq w+1.
	\]
	This proves the relative-distance claim. The size and construction-time bounds follow from Lemma~\ref{lem:high-girth-supply} and $g_w=O(\log(w+1))$.
\end{proof}

\begin{theorem}\label{thm:k3}
	For every fixed integer $K\geq 3$, the problem \MDtK\ is $\mathrm{NP}$-complete.
\end{theorem}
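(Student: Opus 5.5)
For $K=3$ the statement is exactly Theorem~\ref{thm:np33}, so I fix $K\geq4$. Membership in $\mathrm{NP}$ is immediate, since a nonempty support of size at most $w$ can be checked against every parity constraint in polynomial time. For hardness I plan a Karp reduction from \MDtt\ that attaches one copy of the single-port gadget $\mathcal{P}_{K,w}$ to every check node and keeps the threshold $w$ unchanged.

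Let $(\mathcal{G},w)$ be a $(3,3)$-regular instance with $\mathcal{G}=(L,R,E)$ and $n=|L|$.
\begin{enumerate}
\item Replace $w$ by $\min\{w,n\}$; this does not change the answer, and it makes $w$ polynomial in the input size.
\item Build $\mathcal{P}_{K,w}$ once, using the construction preceding Lemma~\ref{lem:defect-girth}. By Lemma~\ref{lem:relative-distance} this takes time $(w+1)^{O_K(1)}$, which is polynomial.
\item For every check $c\in R$, take a fresh disjoint copy $\mathcal{P}^{(c)}$ of the gadget and identify its port $\rho^{(c)}$ with $c$.
\item Call the result $\mathcal{G}'$ and output $(\mathcal{G}',w)$.
\end{enumerate}

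\emph{Degree check.} A merged check $c$ has its $3$ original neighbours plus the $K-3$ neighbours of $\rho^{(c)}$, so degree $K$. Every other gadget check keeps degree $K$. Original variables keep degree $3$, and gadget variables have degree $3$ inside the gadget. The neighbour sets of $c$ are disjoint (original variables versus fresh gadget variables), so $\mathcal{G}'$ is simple and $(3,K)$-regular. Its size is $|R|\cdot(w+1)^{O_K(1)}$, which is polynomial.

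\emph{Forward direction.} If $S$ is a nonempty $(a,0)$-TS of $\mathcal{G}$ with $|S|\leq w$, then $S$ itself, with every gadget variable unselected, is an $(a,0)$-TS of $\mathcal{G}'$.
\begin{itemize}
\item Each internal gadget check sees no selected neighbour.
\item Each merged check $c$ sees exactly $|N_{\mathcal{G}}(c)\cap S|$ selected neighbours, which is even.
\end{itemize}

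\emph{Converse direction.} Let $Z$ be a nonempty $(a,0)$-TS of $\mathcal{G}'$ with $|Z|\leq w$.
\begin{itemize}
\item Suppose $Z$ meets some copy $\mathcal{P}^{(c)}$, and put $Y=Z\cap U^{(c)}$. Every check of $\mathcal{P}^{(c)}$ other than the port is adjacent only to variables of that copy. Hence $Y$ is nonempty and satisfies all checks of $\mathcal{P}^{(c)}$ except possibly $\rho$.
\item Lemma~\ref{lem:relative-distance} then gives $|Y|>w$, so $|Z|>w$, which is a contradiction.
\item Therefore $Z\subseteq L$. At each merged check the gadget neighbours contribute nothing, so the parity condition there is exactly the original condition in $\mathcal{G}$.
\item Thus $Z$ is a nonempty $(a,0)$-TS of $\mathcal{G}$ of size at most $w$.
\end{itemize}
This gives $(\mathcal{G},w)\in\MDtt \iff (\mathcal{G}',w)\in\MDtK$, and Theorem~\ref{thm:np33} transfers $\mathrm{NP}$-hardness.

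\emph{Main obstacle.} The real difficulty is the relative-distance property: the gadget must have right degree $K$ at all internal checks, polynomial size in $w$, and no small nonempty support that is free at the port. The paper has already discharged this in Lemma~\ref{lem:relative-distance}, using the girth bound of Lemma~\ref{lem:defect-girth} and the cubic Moore bound. What remains is to check two points carefully: that the relative-distance definition exactly matches "satisfies all checks except the merged port", and that the trivial cases (for instance $w$ capped to $n$, or $w=0$) are handled.
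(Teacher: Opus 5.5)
Your proposal is correct and follows essentially the same route as the paper. You reduce from \MDtt, cap $w$ at $|L|$, glue a fresh copy of the single-port gadget $\mathcal{P}_{K,w}$ onto every check with the threshold unchanged, and invoke Lemma~\ref{lem:relative-distance} to show that the gadget part of any support of size at most $w$ must be empty, so the parity at each merged check reduces to the original one.
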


\begin{proof}
	Membership in $\mathrm{NP}$ is immediate, since a nonempty set of at most $w$ variable nodes is a polynomially verifiable certificate. The case $K=3$ is Theorem~\ref{thm:np33}. Fix $K\geq 4$. We reduce from \MDtt. Let $(\mathcal{G},w)$ be an instance, where $\mathcal{G}=(L,R,E)$ is $(3,3)$-regular. Replacing $w$ by $\min\{w,|L|\}$ does not change the answer, so assume $w\leq |L|$.
	
	For every source check $r\in R$, take a fresh copy $(\mathcal{P}^{r}_{K,w},\rho_r)$ of the gadget from Lemma~\ref{lem:relative-distance}, and identify its port $\rho_r$ with $r$. Denote the resulting Tanner graph by $\mathcal{G}^{\star}$. Every original variable and every auxiliary variable has degree three. Each identified check has its three original neighbors and $K-3$ auxiliary neighbors, and therefore has degree $K$. Every other auxiliary check also has degree $K$. Hence $\mathcal{G}^{\star}$ is $(3,K)$-regular.
	
	Because a $(3,3)$-regular bipartite graph satisfies $|L|=|R|$, the output size is bounded by
	\[
		|\mathcal{G}^{\star}|
		\leq |\mathcal{G}|(w+1)^{O_K(1)}
		\leq |\mathcal{G}|^{O_K(1)}.
	\]
	Thus, for fixed $K$, the construction is a deterministic polynomial-time transformation.
	
	Suppose first that $S\subseteq L$ is a nonempty $(a,0)$-TS of $\mathcal{G}$ with $|S|\leq w$. Select the same original variables in $\mathcal{G}^{\star}$ and select no auxiliary variable. The parity at each identified check is the original parity plus zero, and every auxiliary internal check sees zero selected variables. Hence $S$ is a $(a,0)$-TS of $\mathcal{G}^{\star}$ of the same size.
	
	Conversely, let $Z$ be a nonempty $(a,0)$-TS of $\mathcal{G}^{\star}$ with $|Z|\leq w$. Write
	\[
		Z=S\mathbin{\dot\cup}\bigcup_{r\in R}Y_r,
	\]
	where $S=Z\cap L$ and $Y_r$ consists of the selected auxiliary variables in the copy $\mathcal{P}^{r}_{K,w}$. All internal checks of $\mathcal{P}^{r}_{K,w}$ are checks of $\mathcal{G}^{\star}$, so $Y_r$ satisfies every gadget parity constraint except possibly the port constraint. By Lemma~\ref{lem:relative-distance}, either $Y_r=\emptyset$ or $|Y_r|>w$. The latter alternative contradicts $|Z|\leq w$. Therefore $Y_r=\emptyset$ for every $r\in R$. It follows that $Z=S$, and parity at each identified check reduces exactly to the original parity constraint at $r$. Since $Z$ is nonempty, $S$ is a nonempty $(a,0)$-TS of $\mathcal{G}$, and $|S|\leq w$.
	
	We have proved the threshold-preserving equivalence $(\mathcal{G},w)\in \MDtt\ \Longleftrightarrow\
		(\mathcal{G}^{\star},w)\in \MDtK$.
	This is a polynomial-time Karp reduction from \MDtt\ to \MDtK, completing the proof.
\end{proof}

Building upon the $(3,K)$-regular NP-completeness established in Theorem~\ref{thm:k3}, the concluding stage propagates the left-degree scaling to arbitrary left degrees $J > 3$. This is engineered via the explicit introduction of global parity-check matrices unifying multiple independent base instances.

\begin{theorem}\label{thm:JK}
	For fixed integers $J\geq 4$ and $K\geq 3$, there is a Karp reduction from $\textsc{MinDistance}_{(J-1,K)}$ to \MDJK. The same transformation is also an FPT reduction when parameterized by the target set size $w$. Consequently, for every fixed $J,K\geq 3$, \MDtK\ admits both a Karp reduction and an FPT reduction to \MDJK.
\end{theorem}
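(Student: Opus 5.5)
The plan is a replica construction: take $K$ disjoint copies of the $(J-1,K)$-regular input, then tie corresponding variables together with new degree-$K$ global checks. Each variable gains exactly one neighbor and every new check has degree $K$, so the output is $(J,K)$-regular. Weights double exactly, so the threshold becomes $2w$.

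\textbf{Construction.} Let $(\mathcal{G},w)$ be an instance of $\textsc{MinDistance}_{(J-1,K)}$ with $\mathcal{G}=(L,R,E)$.
\begin{itemize}
\item Form $\mathcal{G}'$ from $K$ disjoint copies $\mathcal{G}^{(1)},\ldots,\mathcal{G}^{(K)}$ of $\mathcal{G}$.
\item For every $v\in L$, add one new check node $c_v$ adjacent to the $K$ copies $v^{(1)},\ldots,v^{(K)}$.
\end{itemize}
Every variable then has degree $(J-1)+1=J$. Original checks keep degree $K$, and each $c_v$ has degree $K$. The graph is simple because the $K$ neighbors of $c_v$ are distinct and $c_v$ is new. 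Since $K$ is fixed, the size is $K|\mathcal{G}|+|L|$, so the construction is linear. The output threshold is $w'=2w$.

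\textbf{Forward direction.} Let $S$ be a nonempty $(a,0)$-TS of $\mathcal{G}$ with $|S|\le w$. Select $S^{(1)}\cup S^{(2)}$ in $\mathcal{G}'$, which uses $K\ge 2$.
\begin{itemize}
\item Checks inside copies $1$ and $2$ see the parities of $S$, hence even.
\item Checks in the other copies see nothing.
\item Each $c_v$ sees $0$ or $2$ selected variables.
\end{itemize}
This gives a nonempty $(a,0)$-TS of size $2|S|\le 2w$.

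\textbf{Converse.} This is the only step needing an idea. Let $Z$ be a nonempty $(a,0)$-TS of $\mathcal{G}'$ with $|Z|\le 2w$, and write $Z_i=Z\cap L^{(i)}$.
\begin{itemize}
\item The original checks of copy $i$ are adjacent only to copy-$i$ variables. Hence each nonempty $Z_i$, identified with a subset of $L$, is an $(a,0)$-TS of $\mathcal{G}$.
\item At least two of the $Z_i$ are nonempty. Suppose only $Z_j$ were nonempty, and pick $v^{(j)}\in Z_j$. Then $c_v$ would see exactly one selected variable, which is odd, a contradiction.
\item Two nonempty restrictions have total size at most $2w$, so one of them has size at most $w$. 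This yields a yes-certificate for $(\mathcal{G},w)$.
\end{itemize}
The same argument shows that the minimum distance of $\mathcal{G}'$ is exactly twice that of $\mathcal{G}$.

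\textbf{Conclusions.} The map $(\mathcal{G},w)\mapsto(\mathcal{G}',2w)$ runs in polynomial time and has new parameter $2w$, a computable function of $w$. It is therefore simultaneously a Karp reduction and an FPT reduction. Composing it for $J=4,5,\ldots$ starting from $(3,K)$ gives both reductions from \MDtK\ to \MDJK\ for every fixed $J,K\ge 3$. For fixed $J$ the composition has $J-3$ steps, so it stays polynomial and the parameter becomes $2^{J-3}w$. Together with Theorem~\ref{thm:k3}, this yields NP-hardness of \MDJK.
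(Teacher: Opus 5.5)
Your proposal is correct and follows the paper's proof essentially step for step: the same construction ($K$ copies plus global checks $c_v$), the same lift $S\mapsto S^{(1)}\cup S^{(2)}$, and the same threshold map $w\mapsto 2w$. The only difference is cosmetic and lies in the converse. You argue by pigeonhole over two nonempty copies, whereas the paper uses the even-copy condition at each global check to obtain $|Y|\geq 2|\pi_i(Y)|$; both arguments give exactly the factor-two scaling of the optimum.
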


\begin{proof}
	Fix $J\geq 4$ and $K\geq 3$, and let $\mathcal{G}=(L,R,E)$ be a $(J-1,K)$-regular instance with $L=\{v_1,\ldots,v_n\}$.
	
	Construct $K$ disjoint copies $\mathcal{G}_i=(L_i,R_i,E_i)$ for $i=1,\ldots,K$, and write $v_j^{(i)}$ for the copy of $v_j$ in $L_i$. Add global check nodes $c_1,\ldots,c_n$, where each $c_j$ is adjacent to $v_j^{(1)},\ldots,v_j^{(K)}$. Denote the resulting graph by $\mathcal{G}'$. Every copied variable gains exactly one new incident edge, so its degree becomes $J$; each original copied check still has degree $K$, and each global check has degree $K$. Thus $\mathcal{G}'$ is $(J,K)$-regular. This is the construction illustrated in Fig.~\ref{fig:gc_reduction}.\footnote{The construction is analogous to the global-check component of GC-LDPC codes~\cite{Juane16}.}
	
	Define the lifting map constructed in the present proof by
	\begin{align*}
	&\Lambda_J:\mathcal{A}(\mathcal{G})\longrightarrow \mathcal{A}(\mathcal{G}'),\\
	&\Lambda_J(S)=\{v_j^{(1)}:v_j\in S\}\cup\{v_j^{(2)}:v_j\in S\}.
	\end{align*}
	If an instance has no nonempty feasible set, we interpret its optimum as $\infty$ and use the convention $2\infty=\infty$.
	For local check nodes in the first two copies, the parity constraints are inherited from $S$; all local check nodes in the remaining copies see no selected variables. Each global check $c_j$ sees either two selected copies of $v_j$ or none. Hence $\Lambda_J(S)$ is a $(a,0)$-TS of $\mathcal{G}'$ and $|\Lambda_J(S)|=2|S|$. Therefore, the optimum in $\mathcal{G}'$ is at most twice the optimum in $\mathcal{G}$.
	
	Conversely, let $Y\in\mathcal{A}(\mathcal{G}')$ be nonempty. For each copy define the projection
	\[
	\pi_i(Y)=\{v_j\in L:v_j^{(i)}\in Y\},
	\qquad 1\leq i\leq K.
	\]
	Each nonempty $\pi_i(Y)$ is a $(a,0)$-TS of $\mathcal{G}$, since the local checks in copy $i$ impose exactly the original parity constraints. Moreover, the global check $c_j$ enforces that the number of copies containing $v_j$ is even. Thus every selected variable position contributes to at least two copies. If $\mathcal{G}$ has no nonempty feasible set, then no nonempty projection can occur, contradicting $Y\neq\emptyset$; hence the no-solution case maps to the no-solution case. Otherwise, if $X_{\min}$ is a minimum $(a,0)$-TS of $\mathcal{G}$, then for any nonempty projection $\pi_i(Y)$ we have $|\pi_i(Y)|\geq |X_{\min}|$, and the even-copy condition gives
	\[
	|Y|\geq 2|\pi_i(Y)|\geq 2|X_{\min}|.
	\]
	Hence, the optimum in $\mathcal{G}'$ is at least twice the optimum in $\mathcal{G}$.
    
\usetikzlibrary{decorations.pathreplacing, calc}
\definecolor{nodeblue}{RGB}{115, 140, 255}
\definecolor{nodeorange}{RGB}{252, 224, 180}
\definecolor{boxborder}{RGB}{50, 60, 90}
\begin{figure*}[h]
	\centering
	\begin{tikzpicture}[
		>=stealth,
		box/.style={rectangle, draw=boxborder, minimum width=4.6cm, minimum height=1.2cm, fill=white, font=\normalsize},
		circ/.style={circle, draw=black!80, fill=nodeblue, minimum size=0.85cm, inner sep=0pt},
		sq/.style={rectangle, draw=black!80, fill=nodeorange, minimum size=0.85cm, inner sep=0pt}
		]
		\node[box] (b1) at (-5, 3) {$3$-regular checks};
		\node[box] (b2) at (0, 3) {$3$-regular checks};
		\node[box] (b3) at (5, 3) {$3$-regular checks};
		
		\draw [decorate, decoration={brace, amplitude=8pt}]
		($(b1.north west) + (0, 0.25)$) -- ($(b3.north east) + (0, 0.25)$)
		node [midway, above=10pt] {Copies};
		
		\foreach \gx in {-5, 0, 5} {
			\node[circ] (c\gx_1) at (\gx - 1.65, 0.5) {};
			\node[circ] (c\gx_2) at (\gx - 0.55, 0.5) {};
			\node at (\gx + 0.55, 0.5) {$\cdots$};
			\node[circ] (c\gx_3) at (\gx + 1.65, 0.5) {};
			
			\coordinate (block_bot_A) at (\gx - 2.0, 2.4);
			\coordinate (block_bot_B) at (\gx - 0.8, 2.4);
			\coordinate (block_bot_C) at (\gx + 0.8, 2.4);
			\coordinate (block_bot_D) at (\gx + 2.0, 2.4);
			
			\draw (c\gx_1.north) -- (block_bot_A);
			\draw (c\gx_1.north) -- (block_bot_B);
			\draw (c\gx_1.north) -- (block_bot_C);
			\draw (c\gx_2.north) -- (block_bot_A);
			\draw (c\gx_2.north) -- (block_bot_B);
			\draw (c\gx_2.north) -- (block_bot_D);
			\draw (c\gx_3.north) -- (block_bot_B);
			\draw (c\gx_3.north) -- (block_bot_C);
			\draw (c\gx_3.north) -- (block_bot_D);
		}
		
		\node[sq] (s1) at (-1.65, -3.5) {};
		\node[sq] (s2) at (-0.55, -3.5) {};
		\node at (0.55, -3.5) {$\cdots$};
		\node[sq] (s3) at (1.65, -3.5) {};
		
		\draw [decorate, decoration={brace, amplitude=6pt, mirror}]
		($(s1.south west) + (0, -0.25)$) -- ($(s3.south east) + (0, -0.25)$)
		node [midway, below=8pt] {Auxiliary checks};
		
		\foreach \gx in {-5, 0, 5} {
			\draw (c\gx_1.south) -- (s1.north);
			\draw (c\gx_2.south) -- (s2.north);
			\draw (c\gx_3.south) -- (s3.north);
		}
	\end{tikzpicture}
	\caption{Schematic reduction from the $(3,3)$-regular case to the $(4,3)$-regular case using copies and global check nodes.}
	\label{fig:gc_reduction}
\end{figure*}
	
	The optimum is therefore scaled by exactly a factor of two. For the decision problem, map the target $w$ to $2w$. The transformation is polynomial, and the new parameter is bounded by a computable function of the old parameter. Hence it is both a Karp reduction and an FPT reduction for fixed $J$ and $K$. Iterating over the left degree gives the reduction from \MDtK\ to \MDJK.
\end{proof}

\begin{corollary}\label{cor:jk-np}
	For every fixed pair of integers $J,K\geq 3$, the problem \MDJK\ is $\mathrm{NP}$-complete.
\end{corollary}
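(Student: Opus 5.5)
The corollary is obtained by combining Theorem~\ref{thm:k3} with the left-degree amplification of Theorem~\ref{thm:JK}; the only extra ingredient is membership in $\mathrm{NP}$.

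\emph{Membership.} I would first show that \MDJK\ lies in $\mathrm{NP}$ for every fixed $J,K\geq 3$. A certificate is a nonempty set $S$ of at most $w$ variable nodes of the $(J,K)$-regular Tanner graph. Given $S$, one checks that $|N(r)\cap S|$ is even for every check node $r$. This takes time linear in the number of edges, so it is polynomial. By Proposition~\ref{prop:trap-codeword}, such an $S$ is exactly the support of a nonzero codeword of weight at most $w$. Remark~\ref{rem:equiv} then transfers this statement to the matrix and hypergraph formulations.

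\emph{Hardness.} I would split into two cases.

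When $J=3$, the claim is exactly Theorem~\ref{thm:k3}, valid for every fixed $K\geq 3$.

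When $J\geq 4$, I would fix $K\geq 3$ and invoke the chain of Karp reductions supplied by Theorem~\ref{thm:JK}:
\[
\textsc{MinDistance}_{(3,K)}\to\textsc{MinDistance}_{(4,K)}\to\cdots\to\textsc{MinDistance}_{(J,K)}.
\]
I need to confirm that the composite is still a polynomial-time reduction. Each step enlarges the instance by a factor $K$, plus $n$ global checks, and doubles the threshold. Since $J$ is fixed, there are only $J-3$ steps. The final instance therefore has size at most $K^{J-3}$ times a polynomial in the input, and the final threshold is $2^{J-3}w$. Both are polynomial for fixed $J,K$, so the composite is a Karp reduction. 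NP-hardness then transfers from \MDtK\ to \MDJK.

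\emph{Main obstacle.} The only delicate point is the assertion that every step of the iteration stays polynomial. This depends on $J$ and $K$ being constants rather than part of the input. If $J$ were allowed to grow, the $K^{J-3}$ blow-up would be exponential, so the constancy assumption must be stated explicitly.

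Combining membership with hardness gives $\mathrm{NP}$-completeness of \MDJK\ for every fixed $J,K\geq 3$.
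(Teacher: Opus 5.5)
Your proposal is correct and follows the paper's own proof. The case $J=3$ is Theorem~\ref{thm:k3}, and for $J\geq 4$ you iterate the Karp reduction of Theorem~\ref{thm:JK} a constant number $J-3$ of times. Your explicit checks of $\mathrm{NP}$ membership and of polynomiality under composition (constant-factor growth per step, final threshold $2^{J-3}w$) spell out details the paper leaves implicit.
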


\begin{proof}
	The case $J=3$ is Theorem~\ref{thm:k3}. For $J>3$, iterate the Karp reduction of Theorem~\ref{thm:JK}.
\end{proof}

Combining Theorems~\ref{thm:np33}, \ref{thm:k3}, and~\ref{thm:JK}, we obtain the full classical biregular hardness result: for every fixed $J,K\geq 3$, the MDP remains $\mathrm{NP}$-complete on $(J,K)$-regular Tanner graphs. The proof separates the right-degree amplification, handled by the single-port high-girth gadget, from the left-degree amplification, handled by the replica-and-global-check construction.

\section{Discussion}
\label{sec:discussion}

The intractability results established in this work have direct implications for the design and analysis of LDPC codes. Since minimum distance computation remains NP-complete even for every fixed biregular degree pair $J,K\geq3$, practical code design cannot rely on exact minimum distance evaluation except for highly restricted families. A more viable approach is to combine heuristic search, provable lower bounds, and additional algebraic or combinatorial structures. In particular, regular LDPC codes arising from finite geometries, incidence designs, or other highly symmetric constructions may contain exploitable features that are absent from arbitrary regular Tanner graphs.

These findings simultaneously contextualize the error-floor phenomenon in iterative LDPC decoding. Low-weight codewords and minimal even-parity topological structures are primary contributors to high-SNR error floors. The demonstrated intractability confirms that the exhaustive identification of these structures in arbitrary regular graphs is fundamentally hard. Effective error-floor analysis and mitigation must therefore rely on structure-aware partial enumeration, probabilistic approximations, and algorithmic interventions that bypass the need to solve the exact minimum distance problem as a subroutine.

A prominent open direction involves isolating the structural impact of short cycles. In LDPC code design, it is well established that short cycles---particularly 4-cycles---severely degrade iterative decoding performance by compromising the independence of passed messages and dominating the error floor \cite{MacKayGood, Richardson03, Hu05}. High-performance codes are thus conventionally constructed to be 4-cycle-free. In the hypergraph formulation, a 4-cycle-free Tanner graph exactly corresponds to a \emph{linear hypergraph}, defined by the property that any two hyperedges intersect in at most one vertex. This practical design requirement motivates a fundamental theoretical inquiry: does the computational hardness of the minimum distance problem persist when instances are strictly restricted to 4-cycle-free configurations? This structural consideration leads to the following problem formulation.
\begin{itemize}	
	\item \MLECJK: Given a $J$-uniform, $K$-regular, linear hypergraph $\mathcal{H}=(V,\mathcal{E})$ and an integer $w>0$, decide whether there exists a nonempty set $Y\subseteq\mathcal{E}$ with $|Y|\leq w$ such that
	\[
	\deg_Y(v)\equiv 0\pmod 2
	\qquad\text{for every } v\in V.
	\]
\end{itemize}

\begin{conj}\label{conj:main}
	For any fixed integers $J,K\geq3$, the problem \MLECJK\ is $\mathrm{NP}$-complete.
\end{conj}

Resolving Conjecture~\ref{conj:main} would formalize the complexity classification of the minimum distance problem for regular LDPC codes under realistic decoding constraints. Mathematically, a structural proof for this hypothesis would demand a two-stage approach leveraging linearity-preserving reductions. The first stage necessitates an operation transforming a general $J$-uniform, $K$-regular hypergraph into a strict linear hypergraph. As modeled in our intermediate transformations, analyzing such an operation requires establishing an explicit bijection between families of feasible even covers, mirroring the functionality of the mapping constructed in Lemma~\ref{lem:3.1}. Any requisite hyperedge subdivision would need to enforce deterministic parity propagation bounds without inadvertently introducing smaller alternative minimal covers. 

The second stage requires an induction mechanism to lift the regularity parameter $K$ strictly within the linear domain. While Theorem~\ref{thm:k3} uses a high-girth single-port gadget and Theorem~\ref{thm:JK} uses global parity constraints over multiple graph copies, a linear-hypergraph analog must natively achieve proper parity bounding while strictly preventing multiple intersections between any pair of hyperedges. Formulating explicit and invertible solution mappings under this strict pairwise intersection constraint constitutes the primary technical and algorithmic obstacle for future research.

\section{Conclusion}
\label{sec:conclusion}

This work provides a complexity-theoretic characterization of the MDP for regular LDPC codes. The main results are as follows:
\begin{itemize}
	\item[(i)] For every fixed $J\geq3$, the standard
	at-most-weight MDP for $J$-left regular Tanner graphs is
	$\mathrm{NP}$-complete, while its exact-weight variant is
	$\mathrm{W}[1]$-complete when parameterized by the prescribed weight;
	\item[(ii)] For every fixed $J,K\geq3$, the MDP for $(J,K)$-regular Tanner graphs is $\mathrm{NP}$-complete.
\end{itemize}

The reductions are organized around a degree-preserving transformation framework consisting of hyperedge decomposition, check node splitting, and controlled variable replication. The key technical point is not merely that these transformations preserve feasibility, but that their proofs construct explicit maps between solution spaces: the even-cover bijection $T$ in Lemma~\ref{lem:3.1}, the trapping-set bijection $Q$ in Lemma~\ref{lem:3.2}, and the replication bijections $F_{v,m}$ in Lemma~\ref{lem:3.3}. For the right-degree amplification to $(3,K)$, the single-port gadget supplies the additional ingredient needed for a polynomial-time Karp reduction: relative supports inside the gadget are forced to be larger than the target threshold while the gadget size remains polynomial. By unifying minimum distance, even covers, and trapping sets within a single parity-based framework, the paper clarifies why computing the minimum distance exactly remains intrinsically difficult for regular LDPC codes under natural regularity constraints.

\bibliographystyle{IEEEtran}
\bibliography{main}

@article{Vardy97,
	author={Vardy, Alexander},
	journal={IEEE Transactions on Information Theory}, 
	title={The intractability of computing the minimum distance of a code}, 
	year={1997},
	volume={43},
	number={6},
	pages={1757-1766},
	doi={10.1109/18.641542}}

@article{Hsieh25,
	author = {Hsieh, Jun-Ting and Kothari, Pravesh K and Mohanty, Sidhanth and Correia, David Munhá and Sudakov, Benny},
	title = {Small Even Covers, Locally Decodable Codes and Restricted Subgraphs of Edge-Colored Kikuchi Graphs},
	journal = {International Mathematics Research Notices},
	volume = {2025},
	number = {5},
	pages = {rnaf045},
	year = {2025},
	month = {03}}

@ARTICLE{Berlekamp78,
	author={Berlekamp, E. and McEliece, R. and van Tilborg, H.},
	journal={IEEE Transactions on Information Theory}, 
	title={On the inherent intractability of certain coding problems (Corresp.)}, 
	year={1978},
	volume={24},
	number={3},
	pages={384-386}}

@ARTICLE{Dehghan20,
author={Dehghan, Ali and Banihashemi, Amir H.},
journal={IEEE Transactions on Information Theory}, 
title={On Finding Bipartite Graphs With a Small Number of Short Cycles and Large Girth}, 
year={2020},
volume={66},
number={10},
pages={6024-6036},
doi={10.1109/TIT.2020.3017127}}

@ARTICLE{Dehghan19,
	author={Dehghan, Ali and Banihashemi, Amir H.},
	journal={IEEE Transactions on Information Theory}, 
	title={Hardness Results on Finding Leafless Elementary Trapping Sets and Elementary Absorbing Sets of LDPC Codes}, 
	year={2019},
	volume={65},
	number={7},
	pages={4307-4315}}

@ARTICLE{Dumer03,
	author={Dumer, I. and Micciancio, D. and Sudan, M.},
	journal={IEEE Transactions on Information Theory}, 
	title={Hardness of approximating the minimum distance of a linear code}, 
	year={2003},
	volume={49},
	number={1},
	pages={22-37}}

@Inbook{Feige08,
	author="Feige, Uriel",
	title="Small Linear Dependencies for Binary Vectors of Low Weight",
	bookTitle="Building Bridges: Between Mathematics and Computer Science",
	year="2008",
	publisher="Springer Berlin Heidelberg",
	address="Berlin, Heidelberg",
	pages="283--307"}

@inproceedings{Hsieh23,
	author    = {Hsieh, Jun-Ting and Kothari, Pravesh K. and Mohanty, Sidhanth},
	title     = {A Simple and Sharper Proof of the Hypergraph Moore Bound},
	booktitle = {Proc. of the 2023 Annu. ACM-SIAM Symp. on Discrete Algorithms (SODA)},
	year      = {2023},
	pages     = {2324--2344},
	address   = {Philadelphia, PA, USA},
	publisher = {SIAM}
}

@book{Garey79,
	asin = {0716710455},
	author = {Garey, M. R. and Johnson, D. S.},
	dewey = {519.4},
	ean = {9780716710455},
	edition = {First Edition},
	publisher = {W. H. Freeman},
	title = {Computers and Intractability: A Guide to the Theory of NP-Completeness (Series of Books in the Mathematical Sciences)},
	year = 1979
}

@article{Arora97,
	author = {Arora, Sanjeev and Babai, László and Stern, Jacques and Sweedyk, Z.},
	journal = {J. Comput. Syst. Sci.},
	number = 2,
	pages = {317-331},
	title = {The Hardness of Approximate Optima in Lattices, Codes, and Systems of Linear Equations.},
	volume = 54,
	year = 1997
}

@article{Gallager62,
	author={Gallager, R.},
	journal={IRE Transactions on Information Theory}, 
	title={Low-density parity-check codes}, 
	year={1962},
	volume={8},
	number={1},
	pages={21-28}
}

@article{Downey99,
	author = {Downey, Rod G. and Fellows, Michael R. and Vardy, Alexander and Whittle, Geoff},
	title = {The Parametrized Complexity of Some Fundamental Problems in Coding Theory},
	journal = {SIAM Journal on Computing},
	volume = {29},
	number = {2},
	pages = {545-570},
	year = {1999}
}

@book{Flum06,
	title={Parameterized complexity},
	author={J\"org Flum, Martin Grohe},
	year={2006},
	publisher={Springer Berlin, Heidelberg},
	edition = {First Edition},
	doi={https://doi.org/10.1007/3-540-29953-X}
}

@book{Rodney13,
	title={Fundamentals of parameterized complexity},
	author={Rodney G. Downey and Michael R. Fellows},
	year={2013},
	publisher={Springer London},
	edition = {First Edition}
}

@book{Erik26,
  author    = {Erik D. Demaine and William Gasarch and Mohammad T. Hajiaghayi},
  title     = {{Computational Intractability}: A Guide to Algorithmic Lower Bounds},
  publisher = {The MIT Press},
  year      = {2026},
  month     = sep,
  isbn      = {978-0-262-55077-2},
  url       = {https://mitpress.mit.edu/9780262550772/computational-intractability/},
  note      = {Forthcoming}
}

@inproceedings{Richardson03,
	title={Error floors of LDPC codes},
	author={Richardson, Tom},
	booktitle={Proceedings of the Annual Allerton Conference on Communication, Control, and Computing},
	volume={41},
	number={3},
	pages={1426--1435},
	year={2003}
}

@article{Arvind16,
	title={Solving linear equations parameterized by hamming weight},
	author={Arvind, Vikraman and K{\"o}bler, Johannes and Kuhnert, Sebastian and Tor{\'a}n, Jacobo},
	journal={Algorithmica},
	volume={75},
	number={2},
	pages={322--338},
	year={2016},
	publisher={Springer}
}

@article{Bhattacharyya21,
	author = {Bhattacharyya, Arnab and Bonnet, \'{E}douard and Egri, L\'{a}szl\'{o} and Ghoshal, Suprovat and S., Karthik C. and Lin, Bingkai and Manurangsi, Pasin and Marx, D\'{a}niel},
	title = {Parameterized Intractability of Even Set and Shortest Vector Problem},
	year = {2021},
	publisher = {Association for Computing Machinery},
	address = {New York, NY, USA},
	volume = {68},
	number = {3},
	issn = {0004-5411},
	doi = {10.1145/3444942},
	journal = {J. ACM},
	month = mar,
	articleno = {16},
	numpages = {40}
}

@INPROCEEDINGS{Juane16,
	author={Li, Juane and Lin, Shu and Abdel-Ghaffar, Khaled and Ryan, William E. and Costello, Daniel J.},
	booktitle={2016 Information Theory and Applications Workshop (ITA)}, 
	title={Globally coupled LDPC codes}, 
	year={2016},
	volume={},
	number={},
	pages={1-10},
	doi={10.1109/ITA.2016.7888167}}

@ARTICLE{MacKayGood,
  author={MacKay, D.J.C.},
  journal={IEEE Transactions on Information Theory}, 
  title={Good error-correcting codes based on very sparse matrices}, 
  year={1999},
  volume={45},
  number={2},
  pages={399-431},
  doi={10.1109/18.748992}}

@ARTICLE{Hu05,
  author={Xiao-Yu Hu and Eleftheriou, E. and Arnold, D.M.},
  journal={IEEE Transactions on Information Theory}, 
  title={Regular and irregular progressive edge-growth tanner graphs},
  year={2005},
  volume={51},
  number={1},
  pages={386-398},
  doi={10.1109/TIT.2004.839541}}

@book{Biggs93,
	author    = {Norman Biggs},
	title     = {Algebraic Graph Theory},
	edition   = {2},
	series    = {Cambridge Mathematical Library},
	publisher = {Cambridge University Press},
	address   = {Cambridge},
	year      = {1993},
	doi       = {10.1017/CBO9780511608704}
}

@book{Apostol76,
	author    = {Tom M. Apostol},
	title     = {Introduction to Analytic Number Theory},
	series    = {Undergraduate Texts in Mathematics},
	publisher = {Springer},
	address   = {New York},
	year      = {1976},
	doi       = {10.1007/978-1-4757-5579-4}
}

@article{Morgenstern94,
	author  = {Moshe Morgenstern},
	title   = {Existence and Explicit Constructions of {$q+1$}-Regular {Ramanujan} Graphs for Every Prime Power {$q$}},
	journal = {Journal of Combinatorial Theory, Series B},
	volume  = {62},
	number  = {1},
	pages   = {44--62},
	year    = {1994},
	doi     = {10.1006/jctb.1994.1058}
}

%\begin{IEEEbiography}[{\includegraphics[width=1in,height=1.25in,clip,keepaspectratio]{author1-photo}}]{Michael Shell}
%	Use $\backslash${\tt{begin\{IEEEbiography\}}} and then for the 1st argument use $\backslash${\tt{includegraphics}} to declare and link the author photo.
%	Use the author name as the 3rd argument followed by the biography text.
%\end{IEEEbiography}

\begin{IEEEbiographynophoto}{Chenyuan Jia}
    received the B.Sc. degree in mathematics from Ocean University of China, in 2024. He is currently pursuing the Ph.D. degree in mathematics with the School of Mathematics, Shandong University, China. His research interests include coding theory, combinatorial optimization and computational complexity theory.
\end{IEEEbiographynophoto}

\begin{IEEEbiographynophoto}{Qingqing Peng}
	received the B.Sc. degree in mathematics from Shaanxi University of Science \& Technology, China, in 2021. She is currently pursuing the Ph.D. degree in mathematics with the School of Mathematics, Shandong University, China. Her research interests include coding theory.
\end{IEEEbiographynophoto}

\begin{IEEEbiographynophoto}{Ke Liu}
    received the B.Sc. degree in mathematics from Ocean University of China, and the Ph.D. degree from Tsinghua University, China, in 2017 and 2021, respectively. He joined Huawei Technologies Company Ltd. in 2021. His research interests include coding theory and graph theory and its applications.
\end{IEEEbiographynophoto}

\begin{IEEEbiographynophoto}{Guanghui Wang}
received the B.S. degree in mathematics from Shandong University, China, and the Ph.D. degree from the Universit\'{e} Paris-Sud, France, in 2001 and 2007, respectively. He is currently a Professor with the School of Mathematics, Shandong University, Shandong, China. His research interests include applied mathematics and their applications.
\end{IEEEbiographynophoto}

\begin{IEEEbiographynophoto}{Guiying Yan}
received the B.S., M.S., and Ph.D. degrees in mathematics from Shandong University. She is currently a Professor with the Academy of Mathematics and Systems Science, Chinese Academy of Sciences, Beijing, China. Her research interests include applied mathematics and their applications.
\end{IEEEbiographynophoto}

\vfill

\end{document}